\documentclass[journal, twocolumn]{IEEEtran}

\usepackage{amsmath, amssymb, amsthm}
\usepackage{mathtools}
\usepackage{physics}
\usepackage{pgfplots}
\pgfplotsset{compat=1.18}
\usepackage{hyperref}
\usepackage{paralist}

\usepackage{tikz}
\usetikzlibrary{arrows.meta,calc}

\usepackage{etoolbox}
\usepackage{enumitem}

\usepackage{pgf}
\usepackage{subcaption}

\usepackage{cite}

\usepackage{bm}

\newtheorem{theorem}{Theorem}[section]
\newtheorem{lemma}[theorem]{Lemma}
\newtheorem{proposition}[theorem]{Proposition}

\theoremstyle{definition}
\newtheorem{definition}{Definition}[section]
\theoremstyle{remark}
\newtheorem*{remark}{Remark}
\theoremstyle{definition}
\newtheorem{example}{Example}[section]

\DeclareMathOperator{\dom}{dom}
\DeclareMathOperator{\interior}{int}
\DeclareMathOperator{\closure}{cl}

\AtEndEnvironment{example}{\hfill$\square$}

\hypersetup{hidelinks}

\begin{document}

\title{Repeated Binary Direct Collinear Impacts Under Incremental Contact Laws With Permanent Indentation: A Hybrid Systems Formulation}

\author{Mihails Milehins and Dan B. Marghitu%
\thanks{Both authors are with the Department of Mechanical Engineering, Auburn University, Auburn, AL 36849 (corresponding author: Mihails Milehins; e-mail: mzm0390@auburn.edu).}}

\maketitle

\begin{abstract}
Incremental contact laws specify the normal contact force through a differential equation carrying an internal state, driven by the indentation and its rate. In some, the force is extinguished at a nonzero indentation, whether by plastic deformation or by an elastic aftereffect, so that a residual deformation remains at the separation. Such laws sit uneasily within rigid body dynamics, which admits no deformation. The tension is tolerable when the indentation is small relative to the bodies, so that it may be carried constitutively rather than geometrically. Even then, the contact law alone does not determine the interaction of the bodies. Because force and indentation no longer vanish together, conditions for the commencement and termination of contact must be supplied separately. So must the fate of the deformation and internal state at separation, neither of which the equations of motion contain. This article formulates the repeated direct collinear impact of two convex bodies under external forces as a hybrid dynamical system. The contact interface is modeled as a massless element carrying the contact law and its state, coupled to the bodies through relative velocity and an interaction force dictated by the contact state. Consequently, all switching and resets are confined to the interface, leaving the geometry and the equations of motion of the bodies unaltered. The principal analytical properties of the resulting formulations are established, among them passivity and completeness; the branching of solutions at the onset and termination of contact is also examined. The framework is demonstrated through numerical simulations.
\end{abstract}

\begin{IEEEkeywords}
Rigid Body Dynamics, Collision Dynamics, Nonlinear Dynamical Systems, Hybrid Systems
\end{IEEEkeywords}

\section{Introduction}\label{sec:introduction}

Models of systems of rigid bodies with contacts are conventionally divided into
nonsmooth dynamics formulations and continuous formulations
(e.g., see Ref. \cite{machado_compliant_2012}). In a nonsmooth dynamics
formulation, the system is posed as a complementarity problem, a differential
variational inequality, or a hybrid system, in a manner such that the
interpenetration of the bodies in contact is precluded
(e.g., see Refs. \cite{panagiotopoulos_inequality_1985, pfeiffer_multibody_2004,
stewart_dynamics_2011, goebel_hybrid_2012, brogliato_nonsmooth_2016,
sanfelice_hybrid_2021} for descriptions of a variety of approaches of this type). In a continuous formulation, the surfaces of the bodies
are endowed with a localized compliance, represented by virtual viscoelastic or
viscoelastoplastic elements, and a limited interpenetration is admitted as a proxy for
the deformation of the contact region \cite{cundall_measurement_1971, cundall_discrete_1979, terzopoulos_elastically_1987, khulief_continuous_1987,
platt_constraint_1988, moore_collision_1988, movahedi-lankarani_canonical_1988}. Neither formulation is closed by kinematics and kinetics alone: both require a constitutive statement about the
interaction, and such statements shall be referred to collectively as contact laws or collision laws (e.g., see Refs. \cite{chatterjee_rigid_1997, chatterjee_two_1998} and [MacSithigh (1995), as cited in \citenum{chatterjee_rigid_1997}]). Nonsmooth formulations
are closed by algebraic contact laws, which relate the states of the system
immediately before and immediately after an impact, the coefficient of
restitution in its kinematic, kinetic, and energetic forms being the canonical
examples (e.g., see Ref. \cite{stronge_impact_2018}).
Continuous formulations are closed by incremental contact laws, which are
dynamic models that govern the evolution of the contact force over the
duration of the contact (e.g., see Ref. \cite{machado_compliant_2012}).

Incremental contact laws are commonly classified by the assumptions made about
the material of the contact region: elastic, viscoelastic, elastoplastic, or
viscoelastoplastic (e.g., see Ref. \cite{stronge_impact_2018}). The
viscoelastic and the elastoplastic classes differ in the nature of the
hysteresis. In a viscoelastic law, the dissipation is rate-dependent: the area
enclosed by a cycle of the force-indentation curve depends on the rate at which
the cycle is traversed and vanishes in the quasi-static limit. In an
elastoplastic law, the dissipation is rate-independent: the enclosed area is
invariant under a reparametrization of time, and persists in the quasi-static
limit; and it arises from yielding, part of the deformation being permanent. A
viscoelastoplastic law admits both mechanisms.

For the purposes of this article, the pertinent question is not which
dissipation mechanism predominates, but whether the indentation vanishes at the
instant at which the contact force does. For an elastic law the two vanish
together, and so they do for many viscoelastic laws, in
which the viscous term is made to vanish with the indentation; the contact region has then recovered its undeformed configuration by the
time the bodies separate. A law
that admits yielding does not have this property: the unloading branch reaches
zero force at a strictly positive indentation, which thereafter persists. An
intermediate case arises for a viscoelastic solid whose retardation spectrum
extends well beyond the duration of the contact. The bodies then separate with a
nonzero indentation that is recovered only afterwards, on a time scale long
compared with that of the collision, a phenomenon known as the elastic
aftereffect (e.g., see Ref. \cite{schwager_coefficient_2008, wang_advanced_2017}). The presence of a residual
indentation at the instant of separation is therefore not, by itself, diagnostic
of plasticity (e.g., see Ref. \cite{wang_advanced_2017, zhang_continuous_2022, zhang_continuous_2024, poursina_new_2025}), nor is it entailed by a rate-independent hysteresis (e.g., see
Refs. \cite{sadd_contact_1993, bhattacharjee_interplay_2017}). It is the feature that governs the structure of the model, and it is shared by the elastoplastic, viscoelastoplastic, and slowly
recovering viscoelastic cases alike.

The admission of a permanent indentation sits awkwardly within rigid body
dynamics, whose configuration manifold affords no coordinate for a permanent
change of shape. The tension is resolved by an argument of scale: the contact
patch is small relative to the bodies, and the permanent indentation smaller
still. Rigidity is thereby a statement about the mass distribution and the
configuration space, not about the surfaces. On this understanding, plasticity
has been admitted into impact dynamics for a long time. The earliest instance is arguably the theory of Andrews
\cite{andrews_theory_1930}, who described the collision of
spheres by an elastoplastic relation between the contact force and
the indentation in 1930. Through the 1940s and 1950s, relations of this kind were
developed further (e.g., see Refs. \cite{tabor_simple_1948,
tabor_hardness_1951, crook_study_1952}) and, in the course of the 1950s, built
into dynamic models of flexible and multibody systems (e.g., see Refs.
\cite{smith_pile-driving_1950, smith_impact_1955, barnhart_transverse_1955,
barnhart_stresses_1957, goldsmith_impact_1960, smith_pile-driving_1960}). In all
of these, a collision was studied in isolation, so the modelling question of how
a residual indentation can be carried through a simulation in which contacts
are formed and broken repeatedly did not present itself.

The question became unavoidable in the 1970s with the discrete element method
(see Refs. \cite{cundall_measurement_1971} and \cite{cundall_discrete_1979}), in
which a body may participate in a great many contacts in succession, its
partners differing from one contact to the next, and the applied load acting
over the duration of a contact is whatever the surrounding assembly imposes.
Precedents for carrying a permanent indentation across repeated contacts can be
found in the same decade, but each was tied to a particular application (e.g.,
see Ref. \cite{holloway_effects_1978}); what the method required was an
interaction law of general applicability. In the mid-1980s, the
piecewise-linear hysteretic spring of Walton and Braun (see Refs.
\cite{walton_viscosity_1986} and \cite{walton_stress_1986}) became the archetype
of such laws: the loading and unloading branches are assigned distinct
stiffnesses, the maximum indentation is retained as a history variable, and the
unloading branch reaches zero force at a residual overlap fixed by the ratio of
the two stiffnesses; on separation the interface is restored to its undeformed
state, no record of the contact spot being retained. From the 1990s, the law was extended to nonlinear branches (e.g., see Refs. \cite{walton_numerical_1993, ning_elastic-plastic_1993,
ning_elasto-plastic_1995, vu-quoc_elastoplastic_1999, tomas_particle_2000,
tomas_fundamentals_2004, luding_cohesive_2008, morrissey_discrete_2013,
thakur_micromechanical_2014, rathbone_accurate_2015}). Most of these admit
adhesion, whereupon the force is tensile at a vanishing indentation and the
criterion for separation cannot be the extinction of either quantity alone. 
Since the 2010s, the delayed recovery of a viscoelastic interface has been
treated too, though by other means and for laws of a different kind (e.g., see
Refs. \cite{wang_advanced_2017, zhang_continuous_2022, zhang_continuous_2024,
poursina_new_2025}).

The manner in which a law is to be carried through a simulation in which
contacts are formed and broken repeatedly is a matter on which practice varies.
For many laws the question is left to the reader, the constitutive
relation being furnished without the disposition of the internal state
of the interface once a contact is broken. Where such a statement is supplied,
the conventions adopted differ from one another (e.g., see Refs.
\cite{hunt_coefficient_1975, walton_viscosity_1986, walton_stress_1986, tomas_particle_2000, luding_cohesive_2008, ismail_impact_2008, xiong_contact_2014}). A
consequence of this variety is that the properties one would wish to establish for a system of rigid bodies admitting contacts are settled not once for a class of laws, but afresh for each combination of a force law with a set of conventions. What is wanted is a formalism in which the conditions
governing the onset and the termination of contact, and the treatment of the
internal state across a separation, are components of the model on the same
footing as the force law itself, so that they may be stated once and their
consequences examined.

A description of that kind exists, and the structure it addresses is not
peculiar to contact. Two bodies that are apart and the same two bodies in
contact obey different equations, the passage from one to the other occurs at
instants determined by the state itself, and at those instants a quantity may be
reset; systems of this character -- evolving continuously except at isolated
instants, at which the state changes abruptly or the governing equations are
exchanged for others -- are commonplace in mechanical engineering. A class of
such systems was identified as early as the 1960s
\cite{witsenhausen_class_1966}, and generalized further in the following decades
\cite{tavernini_differential_1987}; the subject then developed rapidly through
the 1990s, at the confluence of control theory and computer science, under the
name of hybrid systems. The one adopted here originates with Goebel et al.
\cite{goebel_hybrid_2004, goebel_hybrid_2012}, and was subsequently extended to
admit exogenous inputs \cite{cai_results_2005} and outputs
\cite{sanfelice_results_2010, sanfelice_hybrid_2021}. A system is specified in
this framework by five objects: a set of pairs of states and inputs from which
continuous evolution may proceed, a map governing that evolution, a set of such
pairs from which an abrupt transition is admitted, a map that effects it, and a
map assigning an output to a state and an input.

Impact is among the standard illustrations of the hybrid systems framework,
usually in its rigid form, in which the transition is a reversal of velocity
governed by a coefficient of restitution. Compliant and impulsive descriptions
have been combined in Ref. \cite{naldi_passivity-based_2013}, a viscoelastic contact
force acting during flows and a reset upon the velocity being enabled above a
threshold approach speed, by which device the plastic deformation the force law
omits is accounted for. No threshold upon the approach speed is invoked here, and no reset acts upon the
velocities of the bodies. The repeated direct collinear impact of two rigid bodies,
each subject to an arbitrary applied load, is formulated as a hybrid system for a
contact law that admits a permanent indentation, that indentation being carried
by the internal state of the law. The
contact interface is treated as a massless element that carries an internal state
of its own and is coupled to the bodies through the contact force and the
relative velocity. The equations of motion of the bodies are thereby free of
switching and of resets, the geometry of the bodies is left unaltered, and the
onset of contact, the separation, and the treatment of the internal state across
a separation appear as explicit constituents of the objects above.

The treatment is given for a class of contact laws rather than for a particular
one. The contact force is furnished by an element that carries a
finite-dimensional internal state driven by the relative velocity, the
indentation being among the components of that state, and the
analysis proceeds from structural hypotheses on that element. The class so delimited
is not confined to laws that are rate-independent, nor to any one arrangement of
rheological elements; whatever recovery the interface undergoes once the bodies
have parted is likewise left to the constitutive description, no geometric
account of it being attempted. A variety of analytical properties
are then established for the resulting models. Two particular members of the class serve for the numerical illustrations.

\section{Contributions and Outline}\label{sec:cao}

The contributions of this article are listed below:
\begin{itemize}
\item The repeated direct collinear impact of two rigid bodies, each subject to
an arbitrary applied load, is formulated as a hybrid dynamical system with
inputs and outputs, for a class of contact laws admitting a permanent
indentation. The onset of contact, the separation, and the disposition of the
internal state across a separation are constituents of the specification, and
the equations of motion of the bodies contain neither switching nor resets.
\item The principal analytical properties of the formulation are established:
the hybrid basic conditions; passivity, for a passive contact law; the
completeness of the maximal solutions under continuous and bounded applied
forces, for a contact law whose storage function confines its internal state;
and the branching of the solutions where a contact commences or terminates,
which is exhibited by examples and for which two rules of practice are
advanced.
\item A numerical methodology in correspondence with the formulation is given, together with simulation results for two contact laws constructed from the Bouc-Wen model of hysteresis.
\end{itemize}

The remainder of the article is organized as follows.
Section~\ref{sec:hybrid} records the elements of the hybrid dynamical systems
framework that are used in the sequel. Section~\ref{sec:model} develops the
model: the motion of the colliding bodies, the contact interface and the
structural hypotheses upon the contact law, and the hybrid system obtained by
the interconnection of the two. Section~\ref{sec:properties} establishes the
analytical properties of the model: the regularity of its data, its passivity,
the completeness of its solutions, and their non-uniqueness.
Section~\ref{sec:simulation} describes the numerical
methodology and reports the simulation results. Section~\ref{sec:conclusions} offers
conclusions and indicates directions for further work.

\section{Hybrid Dynamical Systems}\label{sec:hybrid}

\subsection{Basics}\label{sec:hybrid_basics}

The exposition that follows is a specialization of the framework of Refs.
\cite{goebel_hybrid_2004, goebel_hybrid_2012}, extended to admit inputs and
outputs after Refs. \cite{cai_results_2005, sanfelice_results_2010,
sanfelice_hybrid_2021}, to the case in which the flow map and the jump map are
single-valued. Only the notions that are required in the sequel are recorded. General mathematical notations and conventions that are employed in this study can be found in Appendix \ref{sec:NCF}. Throughout this section, and this section alone, the symbols introduced are
those of an abstract hybrid system and are not to be identified with the
quantities of the same name that are introduced in the subsequent sections for
the description of the applications.

The evolution of a hybrid system is indexed by an ordinary time $t \in
\mathbb{R}_{\geq 0}$, which advances during the intervals of continuous
evolution, and a counter $j \in \mathbb{Z}_{\geq 0}$, which is incremented at each abrupt
transition.

\begin{definition}[Hybrid time domain]\label{def:htd}
A set $E \subseteq \mathbb{R}_{\geq 0} \times \mathbb{Z}_{\geq 0}$ is a
\emph{compact hybrid time domain} if there exist $n \in \mathbb{Z}_{\geq 1}$
and a non-decreasing sequence $\{ t_j \in \mathbb{R}_{\geq 0} \}_{j = 0}^{n}$
with $t_0 = 0$ such that
\begin{equation}\label{eq:htd}
E = \bigcup_{j = 0}^{n - 1} \left( [t_j, t_{j + 1}] \times \{ j \} \right)
\end{equation}
Such a sequence is said to \emph{generate} $E$, and $E$ is said to be
\emph{generated by} it. The set $E$ is a \emph{hybrid time domain} if, for all
$(s, k) \in E$, the set $E \cap \left( [0, s] \times \{ 0, 1, \dots, k \}
\right)$ is a compact hybrid time domain.
\end{definition}

For a hybrid time domain $E$ and $j \in \mathbb{Z}_{\geq 0}$, let
\begin{equation}\label{eq:flow_interval}
I_E^j \triangleq \{ t \in \mathbb{R}_{\geq 0} : (t, j) \in E \}
\end{equation}
and let
\begin{equation}\label{eq:J_E}
\mathcal{J}_E \triangleq \{ j \in \mathbb{Z}_{\geq 0} :
\interior I_E^j \neq \emptyset \}
\end{equation}
Lemma \ref{lem:flow_interval} below is an immediate consequence of the definitions. Lemma \ref{lem:generating_unique} is an immediate consequence of Lemma \ref{lem:flow_interval} and the definitions.
\begin{lemma}[Flow intervals]\label{lem:flow_interval}
Let $E$ be a compact hybrid time domain generated by $\{ t_j \}_{j = 0}^{n}$.
Then
\begin{equation}\label{eq:flow_interval_id}
I_E^j = [t_j, t_{j + 1}]
\end{equation}
for all $j \in \{ 0, 1, \dots, n - 1 \}$, and
\begin{equation}\label{eq:flow_interval_empty}
I_E^j = \emptyset
\end{equation}
for all $j \in \mathbb{Z}_{\geq 0}$ with $j \geq n$.
\end{lemma}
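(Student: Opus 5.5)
The plan is to compute $I_E^j$ directly from the representation \eqref{eq:htd}, using only that the union is indexed by the disjoint second coordinates $\{0\},\{1\},\dots,\{n-1\}$. I fix $j \in \mathbb{Z}_{\geq 0}$ and unfold \eqref{eq:flow_interval}: $t \in I_E^j$ if and only if $(t,j) \in [t_k,t_{k+1}] \times \{k\}$ for some $k \in \{0,1,\dots,n-1\}$. Membership in the factor $\{k\}$ forces $k = j$. Hence the defining condition collapses to the following: $j \leq n-1$ and $t \in [t_j,t_{j+1}]$.

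This splits into two cases. If $j \in \{0,1,\dots,n-1\}$, then the condition is exactly $t \in [t_j,t_{j+1}]$, which gives \eqref{eq:flow_interval_id}. Since the generating sequence is non-decreasing, $t_j \leq t_{j+1}$, so this interval is a nonempty closed interval, possibly degenerate. This does not affect the identity, but it is worth recording for later use, for instance in \eqref{eq:J_E}. If instead $j \geq n$, then no index $k \leq n-1$ equals $j$, so no pair $(t,j)$ lies in $E$. Hence $I_E^j = \emptyset$, which is \eqref{eq:flow_interval_empty}.

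There is no genuine obstacle in this argument. The only point needing care is to use the second coordinate, and not the time coordinate, to isolate the single relevant piece of the union. Adjacent pieces may share the time value $t_{j+1}$, and when $t_j = t_{j+1}$ the time intervals may coincide entirely, so an argument based on times would fail. Because the proof keys on the index component, these overlaps play no role. I also note that the reasoning never uses $t_0 = 0$ or $n \geq 1$ beyond fixing the index range.
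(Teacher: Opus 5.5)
Your proposal is correct and follows the same route as the paper, which states the lemma without proof as an immediate consequence of the definitions; your unfolding of \eqref{eq:htd} by the index coordinate is exactly that argument. One step is left tacit: the constraint $t \in \mathbb{R}_{\geq 0}$ in \eqref{eq:flow_interval} holds automatically, because each $t_k$ lies in $\mathbb{R}_{\geq 0}$.
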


\begin{lemma}[Uniqueness of the generating sequence]\label{lem:generating_unique}
Let $E$ be a compact hybrid time domain and let $\{ t_j \}_{j = 0}^{n}$ and $\{
t'_j \}_{j = 0}^{n'}$ generate $E$. Then, $n = n'$ and $t_j = t'_j$ for all $j
\in \{ 0, 1, \dots, n \}$.
\end{lemma}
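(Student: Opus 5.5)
The plan is to recover both $n$ and the sequence $\{t_j\}$ from $E$ alone, using the description of the slices $I_E^j$ given by Lemma~\ref{lem:flow_interval}. Applying that lemma to each generating sequence gives two descriptions of the same set $I_E^j$ for every $j \in \mathbb{Z}_{\geq 0}$. On one side, $I_E^j = [t_j, t_{j+1}]$ for $j \leq n-1$ and $I_E^j = \emptyset$ for $j \geq n$. On the other side, $I_E^j = [t'_j, t'_{j+1}]$ for $j \leq n'-1$ and $I_E^j = \emptyset$ for $j \geq n'$.

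First I show $n = n'$. The key point is that a closed interval $[a,b]$ with $a \leq b$ is never empty, since it contains $a$. So the set $\{ j \in \mathbb{Z}_{\geq 0} : I_E^j \neq \emptyset \}$ equals $\{0, \dots, n-1\}$ by the first description and $\{0, \dots, n'-1\}$ by the second. It is determined by $E$ alone, hence $n = n'$.

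Next, for each $j \in \{0, \dots, n-1\}$ we have $[t_j, t_{j+1}] = [t'_j, t'_{j+1}]$, and both intervals are nonempty because the sequences are non-decreasing. A nonempty closed interval determines its endpoints as its minimum and maximum. Therefore $t_j = t'_j$ and $t_{j+1} = t'_{j+1}$ for all $j \leq n-1$, which covers every index in $\{0, \dots, n\}$. Alternatively, the case $j=0$ follows directly from $t_0 = t'_0 = 0$.

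I do not expect a real obstacle. The only subtlety is degenerate slices with $t_j = t_{j+1}$: such a slice has empty interior, but the set itself is still nonempty. This is why the argument uses nonemptiness of $I_E^j$ rather than membership in $\mathcal{J}_E$, which would miss these degenerate slices.
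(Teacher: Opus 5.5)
Your proposal is correct and follows the paper's route: the paper proves this lemma only by noting that it is an immediate consequence of Lemma~\ref{lem:flow_interval} and the definitions, and you have written that comparison out. You rightly recover $n$ from nonemptiness of the slices $I_E^j$ rather than from $\mathcal{J}_E$, which would miss degenerate slices, and then read off the endpoints of each closed interval.
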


If $(T, J) \in E$, let
\begin{equation}\label{eq:truncation}
E_{(T, J)} \triangleq E \cap \left( [0, T] \times \{ 0, 1, \dots, J \} \right)
\end{equation}
which is a compact hybrid time domain by Definition~\ref{def:htd}, generated by
the sequence $\{ t_j \}_{j = 0}^{J + 1}$ with $t_0 = 0$ and $t_{J + 1} = T$.

\begin{definition}[Hybrid arc]\label{def:arc}
Let $n \in \mathbb{Z}_{\geq 1}$ and let $E$ be a hybrid time domain. A map $x :
E \longrightarrow \mathbb{R}^n$ is a \emph{hybrid arc} if, for all $j \in
\mathcal{J}_E$, the map $x(\cdot, j)$ is continuously differentiable upon
$I_E^j$, the derivative at a point of $I_E^j$ that is not interior to $I_E^j$
being one-sided. For such a map, $\dot{x}(t, j)$ denotes the derivative of
$x(\cdot, j)$ at $t$, for all $j \in \mathcal{J}_E$ and $t \in I_E^j$.
\end{definition}

\begin{definition}[Hybrid signal]\label{def:signal}
Let $n \in \mathbb{Z}_{\geq 1}$ and let $E$ be a hybrid time domain. A map $u :
E \longrightarrow \mathbb{R}^n$ is a \emph{hybrid signal} if, for all $j \in
\mathcal{J}_E$, the map $u(\cdot, j)$ is continuous upon $I_E^j$.
\end{definition}

\begin{definition}[Hybrid system with inputs and outputs]\label{def:hybrid}
Let $n_s, n_i, n_o \in \mathbb{Z}_{\geq 1}$. A \emph{hybrid system with inputs
and outputs} is a quintuple $\mathcal{H} \triangleq (C, f, D, g, h)$ in which
the \emph{flow set} $C$ and the \emph{jump set} $D$ are subsets of
$\mathbb{R}^{n_s} \times \mathbb{R}^{n_i}$, and the \emph{flow map} $f$, the
\emph{jump map} $g$, and the \emph{output map} $h$ are maps such that
\begin{equation}\label{eq:hybrid_maps}
f, g : \mathbb{R}^{n_s} \times \mathbb{R}^{n_i} \longrightarrow
  \mathbb{R}^{n_s}, \qquad
h : \mathbb{R}^{n_s} \times \mathbb{R}^{n_i} \longrightarrow \mathbb{R}^{n_o}
\end{equation}
\end{definition}

\begin{remark}
A hybrid system with inputs and outputs is written suggestively as
\begin{equation}\label{eq:hybrid_system}
\mathcal{H} :
\begin{cases}
\dot{x} = f(x, u) & (x, u) \in C \\
x^{+} = g(x, u) & (x, u) \in D \\
y = h(x, u) &
\end{cases}
\end{equation}
where $x^{+}$ denotes the value of the state immediately after a transition. The
flow set is the set of pairs of states and inputs from which continuous
evolution may proceed, and the jump set is the set of such pairs from which an
abrupt transition may occur. 
\end{remark}

\begin{definition}[Solution pair]\label{def:solution}
Let $\mathcal{H} = (C, f, D, g, h)$ be a hybrid system with inputs and outputs.
Suppose $x : E \longrightarrow \mathbb{R}^{n_s}$ is a hybrid arc and $u : E \longrightarrow \mathbb{R}^{n_i}$ is a hybrid signal. The pair $(x, u)$ is a \emph{solution pair} to $\mathcal{H}$ if 
\[
(x(0, 0), u(0, 0)) \in
\closure C \cup D
\]
and the two following conditions are satisfied:
\begin{enumerate}[label=(S\arabic*), ref=S\arabic*, leftmargin=*, nosep]
\item\label{s:flow} for all $j \in \mathcal{J}_E$ and for all $t \in \interior
I_E^j$, the pair $(x(t, j), u(t, j))$ belongs to $C$ and
\begin{equation}\label{eq:flow_dynamics}
\dot{x}(t, j) = f(x(t, j), u(t, j))
\end{equation}
\item\label{s:jump} for all $(t, j) \in E$ such that $(t, j + 1) \in E$, the
pair $(x(t, j), u(t, j))$ belongs to $D$ and
\begin{equation}\label{eq:jump_dynamics}
x(t, j + 1) = g(x(t, j), u(t, j))
\end{equation}
\end{enumerate}
The \emph{output} of the solution pair is the map $y : E \longrightarrow
\mathbb{R}^{n_o}$ given by $y(t, j) \triangleq h(x(t, j), u(t, j))$.
\end{definition}

For a hybrid time domain $E$, let
\begin{equation}\label{eq:sup_t}
\sup\nolimits_t E \triangleq \sup \{ t \in \mathbb{R}_{\geq 0} : \exists j \in
\mathbb{Z}_{\geq 0}, (t, j) \in E \}
\end{equation}
\begin{equation}\label{eq:sup_j}
\sup\nolimits_j E \triangleq \sup \{ j \in \mathbb{Z}_{\geq 0} : \exists t \in
\mathbb{R}_{\geq 0}, (t, j) \in E \}
\end{equation}
which are, respectively, the extent of the ordinary time and the number of the
transitions attained upon $E$.

\begin{definition}[Maximal solution pair]\label{def:maximal}
A solution pair $(x, u)$ to $\mathcal{H}$ is \emph{maximal} if there exists no
solution pair $(x', u')$ to $\mathcal{H}$ such that $\dom x \subset \dom x'$ and such that $x' = x$ and $u' = u$ upon $\dom x$.
\end{definition}

\begin{definition}[Classification of solution pairs]\label{def:classification}
Let $(x, u)$ be a solution pair to $\mathcal{H}$ and let $E \triangleq \dom x$.
The solution pair is
\begin{itemize}[leftmargin=*, nosep]
\item \emph{trivial} if $E$ consists of a single point, and \emph{non-trivial}
otherwise
\item \emph{complete} if $\sup_t E = +\infty$ or $\sup_j E = +\infty$
\item \emph{Zeno} if it is complete and $\sup_t E < +\infty$
\item \emph{eventually continuous} if $\sup_j E < +\infty$ and $\sup_t E =
+\infty$
\item \emph{bounded} if the range of $x$ is a bounded subset of
$\mathbb{R}^{n_s}$
\end{itemize}
\end{definition}

\begin{definition}[Hybrid basic conditions]\label{def:hbc}
A hybrid system with inputs and outputs $\mathcal{H} = (C, f, D, g, h)$ with the
dimensions $n_s$, $n_i$, and $n_o$ satisfies the \emph{hybrid basic conditions}
if
\begin{enumerate}[label=(A\arabic*), ref=A\arabic*, leftmargin=*, nosep]
\item\label{a:closed} $C$ and $D$ are closed subsets of $\mathbb{R}^{n_s}
\times \mathbb{R}^{n_i}$
\item\label{a:flow} $f$ is continuous
\item\label{a:jump} $g$ is continuous
\item\label{a:output} $h$ is continuous
\end{enumerate}
\end{definition}

A hybrid signal need not be determined by the ordinary time alone: it may take
distinct values at $(t, j)$ and at $(t, j')$ with $j \neq j'$. The external
loads considered in the sequel are, however, prescribed as functions of the
ordinary time, and the hybrid signals to which they give rise are of the following form.

\begin{definition}[Induced hybrid signal]\label{def:induced}
Let $n_i \in \mathbb{Z}_{\geq 1}$, let $w : \mathbb{R}_{\geq 0}
\longrightarrow \mathbb{R}^{n_i}$ be continuous, and let $E$ be a hybrid time
domain. The map $u : E \longrightarrow \mathbb{R}^{n_i}$ given by
\begin{equation}\label{eq:induced}
u(t, j) \triangleq w(t)
\end{equation}
for all $(t, j) \in E$ is a hybrid signal in the sense of
Definition~\ref{def:signal}, the map $u(\cdot, j)$ being the restriction of $w$
to $I_E^j$ for all $j \in \mathbb{Z}_{\geq 0}$; it is the hybrid signal
\emph{induced} by $w$ upon $E$.
\end{definition}

\begin{definition}[Uniqueness]\label{def:uniqueness}
Let $\mathcal{H}$ be a hybrid system with inputs and outputs with the
dimensions $n_s$, $n_i$, and $n_o$, let $x_0 \in \mathbb{R}^{n_s}$, and let $w
: \mathbb{R}_{\geq 0} \longrightarrow \mathbb{R}^{n_i}$ be continuous. The
solutions to $\mathcal{H}$ \emph{from $x_0$ under $w$} are \emph{unique} if,
for any two solution pairs $(x_1, u_1)$ and $(x_2, u_2)$ to $\mathcal{H}$ such
that $x_k(0, 0) = x_0$ and $u_k$ is the hybrid signal induced by $w$ upon $\dom
x_k$, for $k \in \{ 1, 2 \}$, one has $\dom x_1 \subseteq \dom x_2$ or $\dom
x_2 \subseteq \dom x_1$, and $x_1 = x_2$ upon $\dom x_1 \cap \dom x_2$.
\end{definition}

\subsection{Integration}\label{sec:hybrid_integration}

\begin{definition}[Integral upon a hybrid time domain]\label{def:hint}
Let $E$ be a compact hybrid time domain, generated by $\{ t_j \}_{j = 0}^{n}$
with $n \in \mathbb{Z}_{\geq 1}$, and let $\phi : E \longrightarrow \mathbb{R}$
be such that $\phi(\cdot, j)$ is continuous upon $I_E^j$ for all $j \in
\mathcal{J}_E$. The \emph{integral of $\phi$ upon $E$} is
\begin{equation}\label{eq:hint}
\int_E \phi \triangleq \sum_{j = 0}^{n - 1} \int_{t_j}^{t_{j + 1}} \phi(t, j)
  \, \mathrm{d}t
\end{equation}
\end{definition}

\begin{proposition}[Integral of an induced signal]\label{prop:induced_integral}
Let $n_i \in \mathbb{Z}_{\geq 1}$, let $w : \mathbb{R}_{\geq 0}
\longrightarrow \mathbb{R}^{n_i}$ be continuous, let $E$ be a hybrid time
domain, let $u$ be the hybrid signal induced by $w$ upon $E$, and let $(T, J)
\in E$. Then
\begin{equation}\label{eq:induced_integral}
\int_{E_{(T, J)}} \abs{u} = \int_0^{T} \abs{w(t)} \, \mathrm{d}t
\end{equation}
\end{proposition}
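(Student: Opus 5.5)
The plan is to unfold Definition~\ref{def:hint} on the compact hybrid time domain $E_{(T, J)}$ and recognise the result as a partition of $[0, T]$. By Definition~\ref{def:htd}, $E_{(T, J)}$ is a compact hybrid time domain. Let $\{ s_k \}_{k = 0}^{n}$ be a sequence generating it, with $s_0 = 0$; it is unique by Lemma~\ref{lem:generating_unique}. The integral is therefore well defined as
\[
\int_{E_{(T, J)}} \abs{u} = \sum_{k = 0}^{n - 1} \int_{s_k}^{s_{k + 1}} \abs{u(t, k)} \, \mathrm{d}t .
\]
The integrability hypothesis of Definition~\ref{def:hint} holds because $\abs{u(\cdot, k)} = \abs{w(\cdot)}$ restricted to $I^k$ is continuous, $w$ being continuous.

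Next, I will identify $s_n = T$. Since $(T, J) \in E_{(T, J)}$, Lemma~\ref{lem:flow_interval} places $T$ in some interval $[s_k, s_{k+1}]$, so $T \leq s_n$. Conversely, every point of $E_{(T, J)}$ has ordinary-time coordinate at most $T$. The point $(s_n, n-1)$ belongs to the domain, so $s_n \leq T$. (Alternatively one may use the generating sequence noted after \eqref{eq:truncation}, with $t_{J+1} = T$.)

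On each piece, Definition~\ref{def:induced} gives $u(t, k) = w(t)$ for $t \in [s_k, s_{k+1}]$. Hence the $k$-th summand equals $\int_{s_k}^{s_{k+1}} \abs{w(t)} \, \mathrm{d}t$. This includes degenerate intervals with $s_k = s_{k+1}$, which contribute zero. Additivity of the Riemann integral over the non-decreasing partition $0 = s_0 \leq s_1 \leq \dots \leq s_n = T$ then telescopes the sum to $\int_0^T \abs{w(t)} \, \mathrm{d}t$, which is the claim.

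The only step needing care is confirming that the last element of the generating sequence equals $T$, and that the sum runs over all indices rather than only those in $\mathcal{J}_E$. I handle the first through Lemma~\ref{lem:flow_interval} as above. For the second, the degenerate intervals contribute nothing, so summing over every index is harmless. Everything else is routine.
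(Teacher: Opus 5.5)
Your proposal is correct and follows the paper's proof. Both unfold Definition~\ref{def:hint} on $E_{(T,J)}$, replace $u(t,k)$ by $w(t)$ on each piece, and add the integrals over the consecutive intervals covering $[0,T]$; the only difference is that you check directly that the last element of the generating sequence equals $T$, which the paper takes from the remark following Eq.~\eqref{eq:truncation}.
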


\begin{proof}
The set $E_{(T, J)}$ is a compact hybrid time domain, generated by $\{ t_j \}_{j
= 0}^{J + 1}$ with $t_0 = 0$ and $t_{J + 1} = T$. By
Definition~\ref{def:induced} and the continuity of $w$, the map $u(\cdot, j)$
is continuous upon $I_{E_{(T, J)}}^j$ for all $j \in \mathcal{J}_{E_{(T, J)}}$,
whence the left-hand side of Eq. \eqref{eq:induced_integral} is defined by
Definition~\ref{def:hint}. By Eq. \eqref{eq:hint}
and Definition~\ref{def:induced},
\begin{equation}\label{eq:induced_integral_sum}
\int_{E_{(T, J)}} \abs{u} 
  = \sum_{j = 0}^{J} \int_{t_j}^{t_{j + 1}} \abs{w(t)} \, \mathrm{d}t
  = \int_0^{T} \abs{w(t)} \, \mathrm{d}t
\end{equation}
the intervals $[t_j, t_{j + 1}]$ being consecutive and covering $[0, T]$.
\end{proof}

\subsection{Dissipativity}\label{sec:hybrid_passivity}

The notion of dissipativity that is adopted here is similar to that of Ref.
\cite{naldi_passivity-based_2013}, transcribed to the setting of
Section~\ref{sec:hybrid_basics}. Throughout this subsection, $\mathcal{H} = (C, f, D,
g, h)$ is a hybrid system with inputs and outputs in the sense of
Definition~\ref{def:hybrid} that satisfies the hybrid basic conditions in the sense of Definition~\ref{def:hbc}, and $n_s$, $n_i$, and $n_o$ are its dimensions.

\begin{definition}[Dissipativity]\label{def:dissipativity}
A \emph{supply rate} is a continuous map $s : \mathbb{R}^{n_i} \times
\mathbb{R}^{n_o} \longrightarrow \mathbb{R}$. The system $\mathcal{H}$ is
\emph{dissipative with respect to a supply rate $s$} if there exists a
continuously differentiable map $V : \mathbb{R}^{n_s} \longrightarrow
\mathbb{R}_{\geq 0}$ such that
\begin{enumerate}[label=(D\arabic*), ref=D\arabic*, leftmargin=*, nosep]
\item\label{d:flow} for all $(x, u) \in C$,
\begin{equation}\label{eq:diss_flow}
\left\langle \nabla V(x), f(x, u) \right\rangle \leq s\left( u, h(x, u) \right)
\end{equation}
\item\label{d:jump} for all $(x, u) \in D$,
\begin{equation}\label{eq:diss_jump}
V\left( g(x, u) \right) \leq V(x)
\end{equation}
\end{enumerate}
Such a map $V$ is a \emph{storage function} for $\mathcal{H}$ with respect to
$s$.
\end{definition}

\begin{definition}[Passivity]\label{def:passivity}
Suppose $n_i = n_o$. The system $\mathcal{H}$ is \emph{passive} if it is
dissipative with respect to the supply rate $s : \mathbb{R}^{n_i} \times
\mathbb{R}^{n_o} \longrightarrow \mathbb{R}$ given by $s(u, y) \triangleq
\left\langle u, y \right\rangle$.
\end{definition}

Throughout the remainder of this subsection, $\mathcal{H}$ is dissipative with
respect to $s$ with the storage function $V$, and $(x, u)$ is a solution pair
to $\mathcal{H}$, with the output $y$; it is written $E \triangleq \dom x$, and
$(T, J) \in E$. If $\phi : \mathbb{R}^{n_s} \times \mathbb{R}^{n_i}
\longrightarrow \mathbb{R}^{m}$ is a map, then $\phi(x, u)$ denotes the map $(t,
j) \mapsto \phi(x(t, j), u(t, j))$.

\begin{definition}[Dissipation rates]\label{def:rates}
The \emph{dissipation rate upon the flow set} and the \emph{dissipation rate
upon the jump set} are the maps $d_C, d_D : \mathbb{R}^{n_s} \times
\mathbb{R}^{n_i} \longrightarrow \mathbb{R}$ given by
\begin{equation}\label{eq:d_C}
d_C(x, u) \triangleq s\left( u, h(x, u) \right) -
  \left\langle \nabla V(x), f(x, u) \right\rangle
\end{equation}
\begin{equation}\label{eq:d_D}
d_D(x, u) \triangleq V(x) - V\left( g(x, u) \right)
\end{equation}
respectively.
\end{definition}

The maps $d_C$ and $d_D$ are continuous, by \ref{a:flow}, \ref{a:jump},
\ref{a:output} and the continuous differentiability of $V$; the first is
non-negative upon $C$ and the second upon $D$, by Eqs. \eqref{eq:diss_flow} and
\eqref{eq:diss_jump} respectively.

\begin{definition}[Supply]\label{def:supply}
The \emph{supply} received by $\mathcal{H}$ up to $(T, J)$ is
\begin{equation}\label{eq:supply}
\mathcal{I}_s(x, u, T, J) \triangleq \int_{E_{(T, J)}} s(u, y)
\end{equation}
\end{definition}

\begin{definition}[Dissipated energies]\label{def:dissipated}
Let $\{ t_j \}_{j = 0}^{J + 1}$ be the sequence generating $E_{(T, J)}$. The
energy \emph{dissipated during the continuous evolutions} up to $(T, J)$ is
\begin{equation}\label{eq:Delta_C}
\Delta_C(x, u, T, J) \triangleq \int_{E_{(T, J)}} d_C(x, u)
\end{equation}
and the energy \emph{dissipated at the transitions} up to $(T, J)$ is
\begin{equation}\label{eq:Delta_D}
\Delta_D(x, u, T, J) \triangleq \sum_{j = 0}^{J - 1}
  d_D\left( x(t_{j + 1}, j), u(t_{j + 1}, j) \right)
\end{equation}
\end{definition}

The quantities of Definitions~\ref{def:supply} and \ref{def:dissipated} are
defined and finite. For all $j \in \mathcal{J}_E$, the maps $s(u, y)(\cdot, j)$
and $d_C(x, u)(\cdot, j)$ are continuous upon $I_E^j$, by
Definitions~\ref{def:arc} and \ref{def:signal} and by the continuity of $s$,
$h$, and $d_C$; the integrals of Eqs. \eqref{eq:supply} and \eqref{eq:Delta_C}
are accordingly defined by Definition~\ref{def:hint}. 

\begin{proposition}[Balance of energy]\label{prop:balance}
The quantities of Eqs. \eqref{eq:Delta_C} and \eqref{eq:Delta_D} are
non-negative, and
\begin{equation}\label{eq:balance}
V\left( x(T, J) \right) - V\left( x(0, 0) \right) + \Delta_C + \Delta_D
  = \mathcal{I}_s
\end{equation}
in which the arguments of $\mathcal{I}_s$, $\Delta_C$ and $\Delta_D$ have been
suppressed. In particular,
\begin{equation}\label{eq:dissipation_inequality}
V\left( x(T, J) \right) - V\left( x(0, 0) \right) \leq \mathcal{I}_s
\end{equation}
\end{proposition}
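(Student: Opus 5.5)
The plan is to telescope $V$ along the solution over the compact hybrid time domain $E_{(T, J)}$, generated by $\{ t_j \}_{j = 0}^{J + 1}$ with $t_0 = 0$ and $t_{J + 1} = T$. I will split the increment $V(x(T, J)) - V(x(0, 0))$ into flow contributions and jump contributions, and then identify each with the corresponding term of Eq. \eqref{eq:balance}.

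\textbf{Telescoping.} I write
\[
V(x(T,J)) - V(x(0,0)) = \sum_{j=0}^{J} \bigl[ V(x(t_{j+1},j)) - V(x(t_j,j)) \bigr] + \sum_{j=0}^{J-1} \bigl[ V(x(t_{j+1},j+1)) - V(x(t_{j+1},j)) \bigr].
\]
This is a pure rearrangement, since the consecutive terms cancel.

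\textbf{Jump contributions.} For $j \le J - 1$, both $(t_{j+1}, j)$ and $(t_{j+1}, j+1)$ belong to $E$. Condition \ref{s:jump} then gives $x(t_{j+1}, j+1) = g(x(t_{j+1}, j), u(t_{j+1}, j))$ with the pair in $D$. Hence each jump bracket equals $-d_D(x(t_{j+1}, j), u(t_{j+1}, j))$, and the second sum is $-\Delta_D$ by Eq. \eqref{eq:Delta_D}. Each summand is non-negative by \ref{d:jump}, so $\Delta_D \ge 0$.

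\textbf{Flow contributions.} For each $j \le J$ there are two cases.
\begin{itemize}
\item If $t_j = t_{j+1}$, the bracket and the $j$-th term of the hybrid integral both vanish.
\item Otherwise $j \in \mathcal{J}_E$, and $x(\cdot, j)$ is continuously differentiable on $[t_j, t_{j+1}]$ by Definition \ref{def:arc}. Since $V$ is $C^1$, the chain rule and the fundamental theorem of calculus give
\[
V(x(t_{j+1},j)) - V(x(t_j,j)) = \int_{t_j}^{t_{j+1}} \langle \nabla V(x(t,j)), \dot x(t,j)\rangle \, dt .
\]
On the open interval $(t_j, t_{j+1})$, condition \ref{s:flow} gives $\dot x = f(x, u)$ and $(x, u) \in C$. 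The integrand therefore equals $s(u, y) - d_C(x, u)$ there, by Eq. \eqref{eq:d_C} with $y = h(x, u)$.
\end{itemize}
Summing over $j$ and invoking Definitions \ref{def:hint}, \ref{def:supply} and \ref{def:dissipated} yields $\mathcal{I}_s - \Delta_C$. Combining the two parts gives Eq. \eqref{eq:balance}. Inequality \eqref{eq:dissipation_inequality} then follows from $\Delta_C, \Delta_D \ge 0$.

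\textbf{Main obstacle.} The delicate step is the flow part, because the solution conditions only hold on the interior of $I_E^j$, while the integral runs over the closed interval. Two points need care:
\begin{itemize}
\item \emph{Replacing the integrand.} The endpoints are a null set, and both integrands are continuous on the closed interval (Definitions \ref{def:arc} and \ref{def:signal}, together with the continuity of $f$, $h$, $s$ and $\nabla V$). So replacing $\langle \nabla V, \dot x \rangle$ by $s - d_C$ on the interior does not change the Riemann integral.
\item \emph{Sign of $\Delta_C$.} Non-negativity comes from $d_C \ge 0$ on $C$, which holds on each open interval. By continuity the integrand is $\ge 0$ almost everywhere, so each integral is non-negative.
\end{itemize}
The telescoping requires only that the truncation indices match the generating sequence, which Lemma \ref{lem:flow_interval} supplies.
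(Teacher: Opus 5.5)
Your proposal is correct and follows essentially the same route as the paper: telescoping $V$ over the sequence generating $E_{(T,J)}$, writing each flow increment as $\int (s - d_C)$ via the chain rule on the interior of $I_E^j$ and continuity up to the endpoints (the degenerate case $t_j = t_{j+1}$ being trivial), and writing each jump increment as $-d_D$ by condition \ref{s:jump}. The non-negativity of $\Delta_C$ and $\Delta_D$, and hence Eq. \eqref{eq:dissipation_inequality}, then follow exactly as in the paper.
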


\begin{proof}
Let $\{ t_j \}_{j = 0}^{J + 1}$ generate $E_{(T, J)}$ and let $j \in \{ 0, 1,
\dots, J \}$.

Suppose first that $t_j < t_{j + 1}$, whence $j \in \mathcal{J}_E$. By
Definition~\ref{def:arc} and the continuous differentiability of $V$, the map
$V(x(\cdot, j))$ is continuously differentiable upon $I_E^j$, with
\begin{equation}\label{eq:flow_pointwise}
\frac{\mathrm{d}}{\mathrm{d}t} V\left( x(t, j) \right)
  = s(u, y)(t, j) - d_C(x, u)(t, j)
\end{equation}
and $d_C(x, u)(t, j) \geq 0$, for all $t \in \interior I_E^j$ by condition
\ref{s:flow} and Eqs. \eqref{eq:diss_flow} and \eqref{eq:d_C}, and hence for
all $t \in I_E^j$ by continuity. Integration of Eq. \eqref{eq:flow_pointwise}
upon $[t_j, t_{j + 1}]$ yields
\begin{equation}\label{eq:flow_balance}
\begin{multlined}
V\left( x(t_{j + 1}, j) \right) - V\left( x(t_j, j) \right) = \\
  \int_{t_j}^{t_{j + 1}} s(u, y)(t, j) \, \mathrm{d}t  - \int_{t_j}^{t_{j + 1}} d_C(x, u)(t, j) \, \mathrm{d}t
\end{multlined}
\end{equation}
If instead $t_j = t_{j + 1}$, then Eq. \eqref{eq:flow_balance} holds, every
term vanishing.

Suppose now that $j \leq J - 1$. By condition \ref{s:jump}, the pair $(x(t_{j +
1}, j), u(t_{j + 1}, j))$ belongs to $D$, and by Eqs. \eqref{eq:jump_dynamics}
and \eqref{eq:d_D},
\begin{equation}\label{eq:jump_balance}
\begin{multlined}
V\left( x(t_{j + 1}, j + 1) \right) - V\left( x(t_{j + 1}, j) \right) = \\
 - d_D\left( x(t_{j + 1}, j), u(t_{j + 1}, j) \right)
\end{multlined}
\end{equation}
the right-hand side of which is non-positive by Eq. \eqref{eq:diss_jump}.

The quantity of Eq. \eqref{eq:Delta_C} is therefore non-negative, its integrand
being non-negative upon $I_E^j$ for all $j \in \mathcal{J}_E$, and so is that
of Eq. \eqref{eq:Delta_D}, its summands being non-negative by Eq.
\eqref{eq:jump_balance}. 

Adding the left-hand sides of Eq. \eqref{eq:flow_balance} for $j \in \{ 0, 1, \dots, J \}$ and Eq.
\eqref{eq:jump_balance} for $j \in \{ 0, 1, \dots, J - 1 \}$, yields
\begin{equation}\label{eq:telescope}
\begin{split}
& \sum_{j = 0}^{J} \left[ V\left( x(t_{j + 1}, j) \right) -
    V\left( x(t_j, j) \right) \right] \\
& \quad + \sum_{j = 0}^{J - 1} \left[ V\left( x(t_{j + 1}, j + 1) \right) -
    V\left( x(t_{j + 1}, j) \right) \right] \\
& \qquad = V\left( x(t_{J + 1}, J) \right) - V\left( x(t_0, 0) \right) \\
& \qquad \qquad  = V(x(T, J)) - V(x(0, 0))
\end{split}
\end{equation}
since $t_0 = 0$ and $t_{J + 1} = T$. The right-hand sides
add to $\mathcal{I}_s - \Delta_C - \Delta_D$, and Eq. \eqref{eq:balance} follows. Eq.
\eqref{eq:dissipation_inequality} is obtained upon discarding $\Delta_C$ and
$\Delta_D$, which are non-negative.
\end{proof}

\begin{proposition}[Growth of the storage]\label{prop:growth}
Let $\mathcal{H}$ be dissipative with respect to a supply rate $s$ with the
storage function $V$, and suppose that there exists $\gamma \in
\mathbb{R}_{>0}$ such that
\begin{equation}\label{eq:supply_bound}
s\left( u, h(x, u) \right) \leq \gamma \abs{u} \sqrt{V(x)}
\end{equation}
for all $(x, u) \in C$. Let $(x, u)$ be a solution pair to $\mathcal{H}$ and
let $(T, J) \in \dom x$. Then
\begin{equation}\label{eq:growth_bound}
\sqrt{V\left( x(T, J) \right)} \leq \sqrt{V\left( x(0, 0) \right)} +
  \gamma \int_{E_{(T, J)}} \abs{u}
\end{equation}
\end{proposition}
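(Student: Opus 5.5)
The plan is to prove the bound interval by interval and chain the pieces together, exactly as in the proof of Proposition~\ref{prop:balance}. Let $\{ t_j \}_{j = 0}^{J + 1}$ generate $E_{(T, J)}$, and write $W(t, j) \triangleq \sqrt{V(x(t, j))}$. The target is
\[
W(t_{j + 1}, j) \leq W(t_j, j) + \gamma \int_{t_j}^{t_{j + 1}} \abs{u(t, j)} \, \mathrm{d}t
\]
for each $j \in \{ 0, 1, \dots, J \}$, together with
\[
W(t_{j + 1}, j + 1) \leq W(t_{j + 1}, j)
\]
for each $j \leq J - 1$. The jump inequality is immediate: condition \ref{s:jump} gives $x(t_{j+1}, j+1) = g(x(t_{j+1}, j), u(t_{j+1}, j))$ with the argument in $D$, so Eq. \eqref{eq:diss_jump} applies, and the square root is monotone. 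Telescoping as in Eq. \eqref{eq:telescope} and summing the integrals via Definition~\ref{def:hint} then yields Eq. \eqref{eq:growth_bound}. The degenerate intervals with $t_j = t_{j+1}$ are trivial.

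For a flow interval with $t_j < t_{j+1}$, set $v(t) \triangleq V(x(t, j))$. This function is $C^1$ on $[t_j, t_{j+1}]$. By \ref{s:flow} and Eqs. \eqref{eq:diss_flow} and \eqref{eq:supply_bound}, for $t$ interior we have
\[
\dot v(t) \leq s(u, h(x, u))(t, j) \leq \gamma \abs{u(t, j)} \sqrt{v(t)}.
\]
Formally this says $\frac{\mathrm{d}}{\mathrm{d}t}\sqrt{v} = \dot v / (2\sqrt v) \leq \tfrac{\gamma}{2}\abs{u}$, which would even give a factor $\gamma/2$. The bound claimed with $\gamma$ is therefore weaker, and this slack is welcome.

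The main obstacle is that $\sqrt{v}$ is not differentiable where $v = 0$, so the formal differentiation above is not licensed there. To handle this, fix $\varepsilon > 0$ and consider $w_\varepsilon \triangleq \sqrt{v + \varepsilon}$, which is $C^1$. Since $\sqrt{v} \leq \sqrt{v + \varepsilon}$, we get
\[
\dot w_\varepsilon = \frac{\dot v}{2 w_\varepsilon} \leq \frac{\gamma}{2} \abs{u} \frac{\sqrt{v}}{\sqrt{v + \varepsilon}} \leq \frac{\gamma}{2}\abs{u}
\]
on the interior of the interval. Integrating over $[t_j, t_{j+1}]$ uses only the continuity of $u(\cdot, j)$ and the $C^1$ regularity of $w_\varepsilon$ on the closed interval, with the endpoints handled by continuity. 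This gives
\[
\sqrt{v(t_{j+1}) + \varepsilon} \leq \sqrt{v(t_j) + \varepsilon} + \frac{\gamma}{2} \int_{t_j}^{t_{j+1}} \abs{u(t, j)} \, \mathrm{d}t.
\]
Letting $\varepsilon \to 0$ yields the per-interval inequality, with $\gamma/2 \leq \gamma$.

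Finally, the integral $\int_{E_{(T,J)}} \abs{u}$ is well defined by Definition~\ref{def:hint}, since $\abs{u(\cdot, j)}$ is continuous on each flow interval. Summing the flow inequalities and the jump inequalities completes the argument.
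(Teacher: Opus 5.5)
Your argument is correct, but it takes a different route from the paper's. The paper does not argue interval by interval. It substitutes the bound on the supply rate into the integrated dissipation inequality of Proposition~\ref{prop:balance}, applied at every $(\tau, k) \in E_{(T, J)}$, to obtain $V(x(\tau, k)) \leq V(x(0, 0)) + \gamma \int_{E_{(\tau, k)}} \abs{u} \sqrt{V(x)}$. It then bounds the integrand by $\sigma \abs{u}$, where $\sigma$ is the supremum of $\sqrt{V \circ x}$ over $E_{(T, J)}$; $\sigma$ is finite because $V \circ x$ is continuous on finitely many compact flow intervals. Finally it closes the quadratic inequality $\sigma^2 \leq V(x(0, 0)) + \gamma \sigma \Lambda$, with $\Lambda = \int_{E_{(T, J)}} \abs{u}$, to get $\sigma \leq \sqrt{V(x(0, 0))} + \gamma \Lambda$.

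That route never differentiates $\sqrt{V}$, so it needs no regularization near $V = 0$, and the energy balance absorbs the jumps once and for all. Its costs are the extra step of showing $\sigma$ is finite and a constant that is not sharp: with $V(x(0, 0)) = 0$ the quadratic inequality yields only $\sigma \leq \gamma \Lambda$. Your comparison argument on $\sqrt{V + \varepsilon}$ is local, handles the non-differentiability of the square root cleanly, and gives the sharper constant $\gamma / 2$, from which the stated bound follows a fortiori. The completeness theorems use only the form of the bound, so either version serves there.
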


\begin{proof}
Let $\{ t_j \}_{j = 0}^{J + 1}$ generate $E_{(T, J)}$ and let $(\tau, k) \in
E_{(T, J)}$. Applying Eqs. \eqref{eq:dissipation_inequality} and
\eqref{eq:supply} at $(\tau, k)$, and using Eq. \eqref{eq:supply_bound},
\begin{equation}\label{eq:growth_int}
V\left( x(\tau, k) \right) \leq V\left( x(0, 0) \right) +
  \gamma \int_{E_{(\tau, k)}} \abs{u} \sqrt{V(x)}
\end{equation}
Let
\begin{equation}\label{eq:growth_sigma_def}
\sigma \triangleq \sup \left\{ \sqrt{V\left( x(\tau, k) \right)} :
  (\tau, k) \in E_{(T, J)} \right\}
\end{equation}
\begin{equation}\label{eq:growth_Lambda_def}
\Lambda \triangleq \int_{E_{(T, J)}} \abs{u}
\end{equation}
The set $E_{(T, J)}$ is the union of the finitely many sets $[t_k, t_{k + 1}]
\times \{ k \}$ with $k \in \{ 0, 1, \dots, J \}$, upon each of which the map
$V \circ x$ is continuous by Definitions~\ref{def:arc} and \ref{def:dissipativity};
it is therefore bounded upon $E_{(T, J)}$, and $\sigma$ is finite. Since
$E_{(\tau, k)} \subseteq E_{(T, J)}$ and the integrand of Eq.
\eqref{eq:growth_int} is non-negative, that equation gives
\begin{equation}\label{eq:growth_sigma}
V\left( x(\tau, k) \right) \leq V\left( x(0, 0) \right) + \gamma \sigma \Lambda
\end{equation}
for all $(\tau, k) \in E_{(T, J)}$, whence, upon taking the supremum,
\begin{equation}\label{eq:growth_quadratic}
\sigma^2 \leq V\left( x(0, 0) \right) + \gamma \sigma \Lambda
\end{equation}
Were $\sigma > \sqrt{V\left( x(0, 0) \right)} + \gamma \Lambda$, then $\sigma >
\sqrt{V\left( x(0, 0) \right)}$, whence
\begin{equation}\label{eq:growth_contradiction}
\sigma^2 > \sigma \left( \sqrt{V\left( x(0, 0) \right)} + \gamma \Lambda
  \right) \geq V\left( x(0, 0) \right) + \gamma \sigma \Lambda
\end{equation}
contradicting Eq. \eqref{eq:growth_quadratic}. Eq. \eqref{eq:growth_bound}
follows, $(T, J)$ belonging to $E_{(T, J)}$ and $\sqrt{V(x(T, J))}$ being
accordingly bounded above by $\sigma$.
\end{proof}

\subsection{Stability}\label{sec:hybrid_stability}

Throughout this subsection, $\mathcal{H} = (C, f, D, g, h)$ is a hybrid system
with inputs and outputs in the sense of Definition~\ref{def:hybrid}, and $n_s$,
$n_i$, and $n_o$ are its dimensions.

\begin{definition}[Forward invariance]\label{def:invariance}
A set $S \subseteq \mathbb{R}^{n_s}$ is \emph{forward invariant} for
$\mathcal{H}$ if, for every solution pair $(x, u)$ to $\mathcal{H}$ with $x(0,
0) \in S$, one has $x(t, j) \in S$ for all $(t, j) \in
\dom x$.
\end{definition}

\begin{proposition}[Completeness]\label{prop:completeness}
Let $\mathcal{H}$ satisfy the hybrid basic conditions, let $\mathcal{S}
\subseteq \mathbb{R}^{n_s}$ be forward invariant for $\mathcal{H}$, and let
\begin{equation}\label{eq:covering_S}
\left( \mathcal{S} \times \mathbb{R}^{n_i} \right) \subseteq C \cup D
\end{equation}
Let $(x, u)$ be a maximal solution pair to $\mathcal{H}$ with $x(0, 0) \in
\mathcal{S}$ such that, for all $T \in \mathbb{R}_{\geq 0}$, the set $\left\{
x(t, j) : (t, j) \in \dom x, \; t \leq T \right\}$ is bounded. Then $(x, u)$ is
complete.
\end{proposition}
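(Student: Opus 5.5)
The plan is to argue by contradiction. Suppose the maximal solution pair $(x, u)$ is not complete, and write $E \triangleq \dom x$. Then $T^\ast \triangleq \sup_t E < +\infty$ and $J^\ast \triangleq \sup_j E < +\infty$. By Definition~\ref{def:htd}, $E$ takes one of two shapes: either it is compact with a last point $(T^\ast, J^\ast)$, or its last flow interval is half-open, $I_E^{J^\ast} = [t_{J^\ast}, T^\ast)$. In each case I shall build a solution pair $(x', u')$ with $\dom x \subset \dom x'$ that agrees with $(x, u)$ on $E$, contradicting Definition~\ref{def:maximal}. The input on the added piece may be chosen freely; for instance, it may be held constant at the last value of $u$, or continued continuously in $t$ when $u$ is induced by some $w$.

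\emph{Half-open case.} By hypothesis, $x(\cdot, J^\ast)$ is bounded on $[t_{J^\ast}, T^\ast)$. If $u(\cdot, J^\ast)$ is also bounded there, then $\dot{x} = f(x, u)$ stays in a compact set by \ref{a:flow}. So $x(\cdot, J^\ast)$ is Lipschitz, and the limit $x^\ast \triangleq \lim_{t \to T^\ast} x(t, J^\ast)$ exists. If $u(\cdot, J^\ast)$ admits a continuous extension to $T^\ast$ (as for an induced signal), then $\dot{x}$ extends continuously as well. Hence the pair extended to the closed interval $[t_{J^\ast}, T^\ast]$ is a hybrid arc and a solution pair on a strictly larger domain, which is the desired contradiction. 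When $u$ may blow up at $T^\ast$, this step needs more care. I would first try to reduce to the case of bounded $u$ using that the paper's inputs are induced by continuous $w$ on $\mathbb{R}_{\geq 0}$. This reduces the half-open case to the compact one.

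\emph{Compact case.} Let $\xi \triangleq x(T^\ast, J^\ast)$ and $\mu \triangleq u(T^\ast, J^\ast)$. Forward invariance gives $\xi \in \mathcal{S}$, and hence $(\xi, \mu) \in C \cup D$ by Eq.~\eqref{eq:covering_S}. If $(\xi, \mu) \in D$, I append the single point $(T^\ast, J^\ast + 1)$ with $x' \triangleq g(\xi, \mu)$. Conditions \ref{s:flow} and \ref{s:jump} hold trivially for the new point, so a strictly larger solution pair is obtained. If $(\xi, \mu) \in C \setminus D$, then closedness of $D$ under \ref{a:closed} gives a neighbourhood $U$ of $(\xi, \mu)$ disjoint from $D$. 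Peano's theorem, applied to $\dot z = f(z, \mu)$ using \ref{a:flow}, then gives a $C^1$ solution $z$ on $[T^\ast, T^\ast + \varepsilon]$ with $z(T^\ast) = \xi$, and I append it with constant input $\mu$.

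The main obstacle is verifying \ref{s:flow} for this appended flow, namely that $(z(t), \mu) \in C$ for $t \in (T^\ast, T^\ast + \varepsilon)$. Since $(z(t), \mu)$ lies in $U$, it avoids $D$, so membership in $C$ follows from Eq.~\eqref{eq:covering_S} once $z(t) \in \mathcal{S}$. My first attempt is a bootstrapping argument: use forward invariance on the truncated candidate pairs, exploiting that every piece of the candidate already lies in $C \cup D$ on any initial subinterval where it is a solution, and take the supremum of such subintervals. If that circularity cannot be closed, I would fall back on the standard existence result of Ref.~\cite{goebel_hybrid_2012} for nontrivial solutions from $C \cup D$ under the basic conditions. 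Its tangent-cone viability hypothesis would be discharged by noting that $C$ contains a relative neighbourhood of $(\xi, \mu)$ within $\mathcal{S} \times \mathbb{R}^{n_i}$.
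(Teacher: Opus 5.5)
There is a genuine gap, and it is the one you flagged: nothing in the hypotheses yields $z(t) \in \mathcal{S}$ for $t > T^\ast$. Forward invariance in the sense of Definition~\ref{def:invariance} constrains solution pairs only. It therefore says nothing about the Peano arc $z$ until $z$ is known to flow in $C$. Your bootstrapping cannot start, because at each stage it needs $z \in \mathcal{S}$ slightly beyond the current endpoint, which is the original problem. The fallback fails for the same reason: that $C$ contains a relative neighbourhood of $(\xi, \mu)$ in $\mathcal{S} \times \mathbb{R}^{n_i}$ helps only if $f(\xi, \mu)$ is tangent to $\mathcal{S}$, and nothing says it is.

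In fact no argument can close this step, since the statement as written is false. Take $n_s = n_i = n_o = 1$, $C \triangleq \mathbb{R}_{\leq 0} \times \mathbb{R}$, $D \triangleq \emptyset$, $f \triangleq 1$, $g(x, u) \triangleq x$, $h \triangleq 0$, and $\mathcal{S} \triangleq \mathbb{R}_{\leq 0}$. The basic conditions hold and $\mathcal{S} \times \mathbb{R} = C$. Moreover $\mathcal{S}$ is forward invariant: no solution pair can jump, and by the initial condition, \ref{s:flow}, and continuity its state never becomes positive. Yet $x(t, 0) \triangleq t - 1$, $u \triangleq 0$ upon $[0, 1] \times \{ 0 \}$ is a bounded maximal solution pair that is not complete. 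At $(1, 0)$ no jump is possible, and every continuation of the flow makes $x$ positive at once.

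Your half-open case has a second, independent hole, because the statement places no restriction upon $u$. Take $C \triangleq \mathbb{R}^2$, $D \triangleq \emptyset$, $f \triangleq 0$, $\mathcal{S} \triangleq \mathbb{R}$, and $g$, $h$ as before. The pair $x \triangleq 0$, $u(t, 0) \triangleq \sin(1 / (1 - t))$ upon $[0, 1) \times \{ 0 \}$ is maximal, bounded, and not complete. It is maximal because every strictly larger domain contains $(1, 0)$, and no hybrid signal agreeing with $u$ is continuous up to $t = 1$. Your reduction to induced inputs thus uses something the statement does not supply.

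The paper's one-line proof carries the same lacuna. It rules out finite escape and exit from $C \cup D$, then invokes Proposition 2.10 of Ref.~\cite{goebel_hybrid_2012}. That proposition's trichotomy is asserted only for systems without inputs, and only when the tangent-cone viability condition holds at every point of $C \setminus D$. Apart from this, your case analysis is sound: the jump when $(\xi, \mu) \in D$, the extension to a closed interval in the half-open case, and the matching of one-sided derivatives at the junction. It becomes a complete, self-contained proof once two hypotheses are added:
\begin{enumerate}[label=(\roman*), leftmargin=*, nosep]
\item the input extends continuously to the right endpoint of a half-open flow interval;
\item for every $(\xi, \mu) \in C \setminus D$ with $\xi \in \mathcal{S}$, some solution of $\dot{z} = f(z, \mu)$ from $\xi$ remains in $\mathcal{S}$ for a short time.
\end{enumerate}
With the second, your neighbourhood $U$ and Eq.~\eqref{eq:covering_S} give $(z(t), \mu) \in C$ directly. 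Both hypotheses hold in the setting of Theorems~\ref{thm:completeness} and \ref{thm:completeness_Hr}: the inputs there are induced by a continuous $w$, and the component of $f$ that governs $q_c$ is zero. Those theorems are therefore unaffected.
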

\begin{proof}
The solution pair remains in $\mathcal{S}$ by
Definition~\ref{def:invariance}, whence by Eq. \eqref{eq:covering_S} it does
not leave $\closure C \cup D$; and it does not escape in finite time, the
set above being bounded for all $T$. The assertion follows from Proposition
2.10 of Ref. \cite{goebel_hybrid_2012}.
\end{proof}

\section{Model}\label{sec:model}

\subsection{Binary Direct Collinear Collisions of Rigid Bodies}\label{sec:bodies}

The discussion that follows is with reference to Fig. \ref{fig:nc}. The
notational conventions for mechanics are adopted from Refs.
\cite{roithmayr_dynamics_2016} and \cite{stronge_impact_2018}. It is
assumed that $\mathcal{B}_1$ and $\mathcal{B}_2$ are compact convex rigid bodies,
each of which is axisymmetric about the line $A'B'$, and that the motion of each
body is a translation along $A'B'$. The centers of mass $G_1$ and $G_2$ of the
two bodies lie upon $A'B'$ and are assumed to be distinct in the initial configuration; let
$\hat{\bm{n}}_1$ be a unit vector directed along $A'B'$ from $G_2$ toward
$G_1$, and let $O \triangleq (0, 0, 0)$ be a point of $A'B'$ that is fixed in an
inertial frame. Let $C_1$ denote the point of $\mathcal{B}_1 \cap A'B'$ that is
extreme in the direction of $-\hat{\bm{n}}_1$, and $C_2$ the point of
$\mathcal{B}_2 \cap A'B'$ that is extreme in the direction of
$\hat{\bm{n}}_1$; then $\bm{r}_{C_1/G_1} = -l_1 \hat{\bm{n}}_1$ and
$\bm{r}_{C_2/G_2} = l_2 \hat{\bm{n}}_1$ for some $l_1, l_2 \in
\mathbb{R}_{\geq 0}$. Initially, $\mathcal{B}_1 \cap \mathcal{B}_2 \subseteq \{ C_1 \} \cap \{ C_2
\}$, so that the bodies either are apart or touch at a single point. Since the
bodies are convex and axisymmetric, and their motion is confined to the axis of
symmetry, any region of contact between them is symmetric about $A'B'$; the
resultant of the contact tractions therefore acts along $A'B'$ and exerts no
moment upon either body.\footnote{Under these constraints the detection of
contact reduces to the monitoring of a single scalar quantity, introduced below;
for the general case, the methodology suggested in Ref.
\cite{pfeiffer_multibody_2004} may be employed.} The configuration, as
hereinbefore described, corresponds to a binary direct collinear impact (e.g.,
see Ref. \cite{stronge_impact_2018}). Building upon the methodology proposed, for example, in Refs. \cite{movahedi-lankarani_canonical_1988, flores_contact_2016, stronge_impact_2018}, it is assumed that while the bodies remain in contact, the motion of the system is governed by the laws of rigid body dynamics (these laws are usually attributed to Sir Isaac Newton \cite{newton_mathematical_1729} and Leonhard Euler \cite{euler_theoria_1765}). 

\begin{figure}
\centering

\begin{tikzpicture}[>={Stealth[scale=0.6]}]
 
    \coordinate (O) at (0, 0);
    \coordinate (C1) at (0, 0.8);
    \coordinate (C2) at (0, -0.8);
    \coordinate (G1) at (0, 2.0);
    \coordinate (G2) at (0, -2.1);
    \coordinate (E1) at (0, 2.5);
    \coordinate (E2) at (0, -3.0);
 
    \draw plot [smooth cycle, tension=.8]
        coordinates {(0,0.8) (1.1,2.0) (0,3.3) (-1.1,2.0)};
    \draw plot [smooth cycle, tension=.8]
        coordinates {(0,-0.8) (0.9,-2.1) (0,-3.6) (-0.9,-2.1)};
 
    \draw[dotted, line width=1](0,3.8) node [right] {$A'$}
        -- (0,-4.1) node [right] {$B'$};
 
    \draw[dotted, line width=1](0,0) -- (3.1,0);
    \draw[dotted, line width=1](-1.9,0.8) -- (3.1,0.8);
    \draw[dotted, line width=1](-1.9,-0.8) -- (3.1,-0.8);
    \draw[dotted, line width=1](0,2.0) -- (3.1,2.0);
    \draw[dotted, line width=1](0,-2.1) -- (3.1,-2.1);
 
    \draw[<->, dotted, line width=1](-1.6,-0.8) -- (-1.6,0.8)
        node [midway, left] {$x$};
    \draw[<->, dotted, line width=1](1.75,0.8) -- (1.75,2.0)
        node [midway, right] {$l_1$};
    \draw[<->, dotted, line width=1](1.75,-2.1) -- (1.75,-0.8)
        node [midway, right] {$l_2$};
    \draw[<->, dotted, line width=1](2.9,0) -- (2.9,2.0)
        node [midway, right] {$x_{m,1}$};
    \draw[<->, dotted, line width=1](2.9,-2.1) -- (2.9,0)
        node [midway, right] {$x_{m,2}$};
 
    \draw[->, line width=1](C1) -- (0,1.45) node [right] {$\bm{F}_c$};
    \draw[->, line width=1](C2) -- (0,-1.45) node [right] {$-\bm{F}_c$};
    \draw[->, line width=1](E1) -- ($(E1) + (0,0.6)$)
        node [right] {$\bm{F}_{e,1}$};
    \draw[->, line width=1](E2) -- ($(E2) + (0,0.55)$)
        node [right] {$\bm{F}_{e,2}$};
 
    \draw[fill=black](O) circle (1.4 pt) node [above left] {$O$};
    \draw[fill=black](C1) circle (1.4 pt) node [above left] {$C_1$};
    \draw[fill=black](C2) circle (1.4 pt) node [below left] {$C_2$};
    \draw[fill=black](G1) circle (1.4 pt) node [left] {$G_1$};
    \draw[fill=black](G2) circle (1.4 pt) node [left] {$G_2$};
    \draw[fill=black](E1) circle (1.4 pt) node [left] {$E_1$};
    \draw[fill=black](E2) circle (1.4 pt) node [left] {$E_2$};
 
    \draw[->, line width=0.75, dashed](-2.7,3.0) node [left] {$\mathcal{B}_1$}
        -- (-1.0,2.5);
    \draw[->, line width=0.75, dashed](-2.7,-3.2) node [left] {$\mathcal{B}_2$}
        -- (-0.7,-3.0);
 
    \draw[->, line width=1](2.3,-3.9) -- (2.3,-3.2)
        node [right] {$\hat{\bm{n}}_1$};
 
\end{tikzpicture}
\caption{Bodies $\mathcal{B}_1$ and $\mathcal{B}_2$ at the time of the collision}\label{fig:nc}
\end{figure}

The interaction force $\bm{F}_c \triangleq F_c \hat{\bm{n}}_1$ with $F_c :
\mathbb{R}_{\geq 0} \longrightarrow \mathbb{R}$ being continuous is applied to
$\mathcal{B}_1$ at the point $C_1$, and the corresponding force that acts on the
body $\mathcal{B}_2$ is $-\bm{F}_c \triangleq -F_c \hat{\bm{n}}_1$, applied
at the point $C_2$. The bodies may or may not be in contact at any given
instant; when they are not, $F_c$ vanishes identically. What it is for the bodies
to be in contact is left unspecified for the present, and is made precise in
Section~\ref{sec:interface}. Throughout the present section $F_c$ is likewise
unspecified, being furnished by a contact law, as explained in Section~\ref{sec:interface}.

It is also assumed that the resultant of the external forces, other than
$\bm{F}_c$, that act on the body $\mathcal{B}_i$ is $\bm{F}_{e,i} \triangleq
F_{e,i} \hat{\bm{n}}_1$,\footnote{In what follows, it shall always be assumed that
$i$ ranges over the set $\{ 1, 2 \}$.} and that its point of application $E_i$ (with reference
to $\mathcal{B}_i$) lies upon $A'B'$, so that it exerts no moment upon
$\mathcal{B}_i$. It shall be assumed that $F_{e,i}$ is continuous and bounded. 
Where a narrower class of loads is required, it will be identified at the point of use.

The motion of the system is confined
to a single dimension, and the two coordinates $x_{m,1} \in \mathbb{R}$ and
$x_{m,2} \in \mathbb{R}$ such that $\bm{r}_{G_1/O} \triangleq x_{m,1}
\hat{\bm{n}}_1$ and $\bm{r}_{G_2/O} \triangleq x_{m,2} \hat{\bm{n}}_1$
suffice to describe the dynamics of the system. More specifically, the motion
can be described by the following Initial Value Problem (IVP):
\begin{equation}\label{eq:primary}
\begin{cases}
\ddot{x}_{m,1} = m_1^{-1} F_c + m_1^{-1} F_{e,1}  \\
\ddot{x}_{m,2} = -m_2^{-1} F_c + m_2^{-1} F_{e,2}  \\
x_{m,1}(0) = l_1 + x_{1,0}, \dot{x}_{m,1}(0) = v_{m,1,0}  \\
x_{m,2}(0) = -l_2 + x_{2,0}, \dot{x}_{m,2}(0) = v_{m,2,0}
\end{cases}
\end{equation}
where $m_1, m_2 \in \mathbb{R}_{> 0}$ are the masses of the bodies and
$x_{1,0}, x_{2,0}, v_{m,1,0}, v_{m,2,0} \in \mathbb{R}$ are parameters.

Defining $x_1 \triangleq x_{m,1} - l_1$, $x_2 \triangleq x_{m,2} + l_2$, $v_{1,
0} \triangleq v_{m,1,0}$, and $v_{2, 0} \triangleq v_{m,2,0}$, the IVP given by
Eq. \eqref{eq:primary} can be restated as
\begin{equation}\label{eq:primary_contact}
\begin{cases}
\ddot{x}_1 = m_1^{-1} F_c + m_1^{-1} F_{e,1} \\
\ddot{x}_2 = -m_2^{-1} F_c + m_2^{-1} F_{e,2} \\
x_1(0) = x_{1,0}, \dot{x}_1(0) = v_{1,0} \\
x_2(0) = x_{2,0}, \dot{x}_2(0) = v_{2,0} 
\end{cases}
\end{equation}
Since $\bm{r}_{C_1/O} = x_1 \hat{\bm{n}}_1$ and $\bm{r}_{C_2/O} =
x_2 \hat{\bm{n}}_1$, this IVP describes the evolution of the extremal points
$C_1$ and $C_2$ of the two bodies.

Denoting
\begin{equation}
m \triangleq \frac{m_1 m_2}{m_1 + m_2}
\end{equation}
\begin{equation}
x \triangleq x_1 - x_2
\end{equation}
\begin{equation}
v \triangleq \dot{x} = \dot{x}_1 - \dot{x}_2
\end{equation}
\begin{equation}
x_0 \triangleq x_{1,0} - x_{2,0}
\end{equation}
\begin{equation}
v_0 \triangleq v_{1,0} - v_{2,0}
\end{equation}
\begin{equation}\label{eq:u}
F_{e,r} \triangleq \frac{m_2 F_{e,1} - m_1 F_{e,2}}{m_1 + m_2}
\end{equation}
the equations of motion can be transformed to
\begin{equation}\label{eq:main}
\begin{cases}
\dot{x} = v & x(0) = x_0 \\
\dot{v} = m^{-1} F_c + m^{-1} F_{e,r} & v(0) = v_0 \\
\end{cases}
\end{equation}
It should be noted that $m \in \mathbb{R}_{>0}$ describes the effective mass of
the colliding bodies (e.g., see Ref. \cite{nikravesh_determination_2023}). The
quantity $x$ will be referred to as the relative displacement and $v$ as the
relative velocity. Since the bodies are convex and $C_1$ and $C_2$ are the
extreme points of their intersections with $A'B'$ in the direction of one
another, $x$ is the signed gap between the bodies: the bodies are apart if $x >
0$, they touch at a single point if $x = 0$, and they overlap if $x < 0$. The
assumptions made above concerning the initial configuration are expressed in
these terms as $x_0 \in \mathbb{R}_{\geq 0}$ and $x_{m,1}(0) - x_{m,2}(0) = x_0
+ l_1 + l_2 \in \mathbb{R}_{>0}$. 

It should be noted that if
(by abuse of notation) $m_2 = +\infty$, then $m^{-1} = m_1^{-1}$ and $F_{e,r} =
F_{e,1}$. This situation corresponds to the collision of a body $\mathcal{B}_1$
of finite mass with a stationary body $\mathcal{B}_2$.

The IVP given by Eq. \eqref{eq:main} governs the relative motion of the two
bodies alone. To recover the motion of each body, it must be supplemented by a
description of the motion of the center of mass of the system, which is
considered next; both masses are taken to be finite throughout. First, introduce
the parameters $\eta, \eta_c \in (0, 1)$ given by
\begin{equation}
\eta \triangleq \frac{m_1}{m_1 + m_2}
\end{equation}
\begin{equation}
\eta_c \triangleq 1 - \eta = \frac{m_2}{m_1 + m_2}
\end{equation}
The location of the center of mass is given by
\begin{equation}
\bm{r}_{G/O} = \eta \bm{r}_{G_1/O} + \eta_c \bm{r}_{G_2/O}
  \triangleq x_m \hat{\bm{n}}_1
\end{equation}
Then,
\begin{equation}\label{eq:x_m}
x_m \triangleq \eta x_{m,1} + \eta_c x_{m,2}
  = \eta x_1 + \eta_c x_2 + \eta l_1 - \eta_c l_2
\end{equation}
Introducing $v_m \triangleq \dot{x}_m$, the IVP associated with the evolution of
the location of the center of mass of the system is given by
\begin{equation}\label{eq:com}
\begin{cases}
\dot{x}_m = v_m & x_m (0) = x_{m,0} \\
\dot{v}_m = (m_1 + m_2)^{-1} (F_{e,1} + F_{e,2}) & v_m (0) = v_{m,0}
\end{cases}
\end{equation}
where $x_{m,0} \in \mathbb{R}$ and $v_{m,0} \in \mathbb{R}$ are given by
\begin{equation}\label{eq:x_m_0}
x_{m,0} \triangleq \eta x_{1,0} + \eta_c x_{2,0} + \eta l_1 - \eta_c l_2
\end{equation}
\begin{equation}\label{eq:v_m_0}
v_{m,0} = \eta v_{m,1,0} + \eta_c v_{m,2,0}
\end{equation}
respectively. Once $x$
and $x_m$ are known, the evolution of the locations of the centers of mass of
the individual bodies can be recovered via
\begin{equation}\label{eq:x_m_1}
x_{m,1} = x_m + \eta_c x + \eta_c (l_1 + l_2)\\
\end{equation}
\begin{equation}\label{eq:x_m_2}
x_{m,2} = x_m - \eta x - \eta (l_1 + l_2)\\
\end{equation}
Then, the velocities of the centers of mass of the bodies can be obtained via
\begin{equation}\label{eq:v_m_1}
v_{m,1} \triangleq \dot{x}_{m,1} = v_m + \eta_c v
\end{equation}
\begin{equation}\label{eq:v_m_2}
v_{m,2} \triangleq \dot{x}_{m,2} = v_m - \eta v
\end{equation}

The system of Eqs. \eqref{eq:main} and \eqref{eq:com}, with the state $(x, v,
x_m, v_m)$, the inputs $F_{e,1}$ and $F_{e,2}$ and the contact force $F_c$, and
the outputs $(v_{m,1}, v_{m,2})$ given by Eqs. \eqref{eq:v_m_1} and
\eqref{eq:v_m_2} and $v$, shall be called the rigid body subsystem and denoted
$\mathcal{G}_b$.

\subsection{Contact Interface}\label{sec:interface}

Two matters were left unspecified in Section~\ref{sec:bodies}: the relation by
which $F_c$ is furnished, and what it is for the bodies to be in contact. The
second is taken up first. A contact commences when the bodies, having been
apart, meet while approaching one another; it persists for as long as the
interface presses them apart; and it is terminated at the instant at which the
contact force is extinguished, the bodies receding. A contact is thus delimited
by the contact force. It might equally have been delimited by the relative
displacement, a contact being taken to occupy the interval during which the
bodies overlap; and for an elastic contact law nothing turns upon the choice,
the force and the overlap vanishing together.

The contact laws with which this article is concerned do not possess this
property:  the contact force is extinguished at an indentation that is strictly
negative, the interface being still deformed when it ceases to transmit force.
The two criteria come apart, and the choice made above is no longer
idle. Two consequences follow from it. The bodies part while the interface is
deformed. And the indentation is not the relative displacement:
it is obtained from the configuration at which the contact commenced, which,
the bodies having parted while overlapping, need not be the configuration in
which they touch. The deformation of the interface is accordingly a quantity
that the system of Eq.~\eqref{eq:main} does not contain. It must be carried
separately, and what becomes of it when the bodies part must be said.

It was observed in Section~\ref{sec:introduction} that a permanent indentation
cannot be recorded in the configuration of a system of rigid bodies, and that
the tension so arising is relieved by an argument of scale. That argument is now
made to do further work. Two assumptions are required. The indentation is taken to be small in comparison with the dimensions of the bodies, whence the surfaces are left as they are and the residual deformation is borne entirely by the constitutive description. The area of surface that a
residual deformation occupies is, on the same reckoning, small; and it is taken
that a subsequent contact does not fall upon it. What warrants this is that the
point at which the bodies meet does not stand still. It travels over the
surfaces as the bodies move, and an impact that is oblique, or that sets a body
turning, shifts it by a distance far exceeding the width of an indented patch.
It follows from the second assumption that at the outset of every contact the
interface is undeformed, and that each contact is therefore a fresh loading of
material that has not been indented before. For a solid displaying an elastic aftereffect the second assumption may be exchanged for another. If the recovery is complete within the
interval separating two contacts, then the interface is undeformed at the outset
of each whether or not the same region of the surface is engaged again. The
requirement is then upon the recovery time rather than upon the migration of the
point of contact.

A word is owed upon the consistency of the second assumption with the
configuration of Section~\ref{sec:bodies}, in which the motion is collinear and
the bodies meet at the same point upon every occasion. Taken by itself, that
configuration is precisely the one in which the point of contact does not
migrate, and the assumption is there a stipulation rather than a consequence. It
is nonetheless the configuration in which a normal contact law is customarily
developed, for it isolates the normal response from the tangential one; and a
law so developed is ordinarily employed as the normal component of a model of
oblique impact, in which the coupling is unidirectional, the normal force
entering the relations that govern the friction and the transfer of angular
momentum, and not conversely. The direct collinear collision is
retained here as the simplest configuration in which the interface may be
exhibited, and not as that in which the assumptions are most nearly satisfied.

The two assumptions being granted, the model may be assembled. The system of
Eq. \eqref{eq:main} evolves in two distinct regimes. While the bodies are apart,
$F_c$ vanishes and the relative motion is governed by the external force
alone. While the bodies are in contact, $F_c$ is furnished by an incremental
contact law, and the deformation of the interface accumulates from the instant
at which that contact commenced. The passage from either regime to the other
occurs at instants determined by the motion itself, and the deformation
accumulated during a contact does not survive it, the interface being restored
to its undeformed configuration. Both assumptions bear upon this restoration: by
the first, the geometry of the bodies retains no record of the indentation, and
by the second, the material engaged by the contact that follows has not been
indented before. It is a further consequence of the second that the contact law
need describe a single loading from an undeformed state, and not a sequence of
loadings; the laws that are ordinarily employed are of exactly this kind, and
are thereby admitted into the framework unaltered.

The instants at which the passage between the regimes occurs, left unspecified
in Section~\ref{sec:bodies}, may now be given. A contact commences, as above,
when the bodies meet while approaching one another. It is terminated when the
contact force, having been positive throughout, reaches zero, the bodies
receding; and this, and not the vanishing of the gap, is the condition, the
bodies parting while they still overlap.

These two conditions, upon the meeting of the bodies and upon the vanishing of
the force, are moreover conditions upon different quantities. The
relative displacement, which governs the commencement of a contact, is a
quantity of the system of Eq. \eqref{eq:main}; the indentation, which governs
its termination, is a quantity of the interface, and the two may differ by the gap
at which the contact commenced. The interface is accordingly furnished with a state of its own, comprising the
indentation, the internal state of the contact law, and a variable recording
which of the two regimes is active. It receives the relative velocity of the bodies and returns the force exerted
upon it by them, the contact force being its negative, and it is by this
exchange alone that it is coupled to them. The commencement of a contact and the
separation of the bodies are transitions of this state. What results is a hybrid dynamical system in
the sense of Section~\ref{sec:hybrid}. 

\begin{figure*}
\centering
 
\begin{tikzpicture}[>={Stealth[scale=0.6]}]
 
    \def\rb{0.55}
 
    \foreach \i in {-0.6,-0.3,...,14.6}{
        \draw[line width=0.35](\i,0) -- (\i-0.22,-0.28);}
 
    \draw[dotted, line width=1] plot [smooth, tension=0.7]
        coordinates {(0,2.2) (1.25,0.55) (2.5,0.13) (3.75,0.33) (5.0,1.35)
                     (6.25,1.8) (7.5,0.55)};
    \draw[dotted, line width=1] plot [smooth, tension=0.7]
        coordinates {(10.0,0.28) (11.25,0.42) (12.5,0.28) (13.75,0.25)};
 
    \foreach \p/\h/\d in {0/2.2/-1, 1.25/0.55/-1, 2.5/0.13/0, 3.75/0.33/1,
                          5.0/1.35/1, 6.25/1.8/0, 7.5/0.55/-1,
                          10.0/0.28/1, 11.25/0.42/0, 12.5/0.28/-1,
                          13.75/0.25/0}{
        \begin{scope}
            \clip (-1,-1.0) rectangle (14.9,0);
            \fill[white](\p,\h) circle (\rb);
            \fill[black!14](\p,\h) circle (\rb);
            \draw[dashed, line width=0.7](\p,\h) circle (\rb);
        \end{scope}
        \begin{scope}
            \clip (-1,0) rectangle (14.9,3.2);
            \draw[line width=0.9](\p,\h) circle (\rb);
        \end{scope}
        \ifnum\d=-1 \draw[->, line width=0.9](\p,\h+0.28) -- (\p,\h-0.28);\fi
        \ifnum\d=1 \draw[->, line width=0.9](\p,\h-0.28) -- (\p,\h+0.28);\fi}
 
    \draw[<->, line width=0.9](13.75,0.03) -- (13.75,0.47);
 
    \draw[line width=1](-0.9,0) -- (14.6,0);
 
    \node at (8.75,0.6) {$\cdots$};
 
    \foreach \p/\l in {0/a, 1.25/b, 2.5/c, 3.75/d, 5.0/e, 6.25/f, 7.5/g,
                       10.0/h, 11.25/i, 12.5/j, 13.75/k}{
        \node at (\p,-0.8) {(\l)};}
 
    \draw[->, line width=0.9](-0.9,-1.35) -- (14.6,-1.35);
    \node at (6.85,-1.75) {time, not to scale};
 
\end{tikzpicture}
 
\caption{Successive stages of a sequence of collisions of a ball with a fixed
obstacle under gravity. The portion of the
ball lying below the undeformed surface is shaded. The indentation is exaggerated.}\label{fig:stages}
 
\end{figure*}

The two regimes, the transitions between them, and the restoration effected at
those transitions are exhibited together in Fig. \ref{fig:stages}, for a ball
falling upon a fixed obstacle under gravity. The ball descends (a) and meets the
obstacle while approaching it (b), whereupon a contact commences and the
interface, undeformed at that instant, begins to be compressed. The compression
attains its greatest extent (c) and thereafter diminishes, and at (d) the
contact force is extinguished while the bodies are receding and still overlap:
the interface is restored to its undeformed configuration and the bodies part.
The ball ascends (e), reverses (f), and meets the obstacle a second time (g),
the interface being undeformed at the outset of this contact as of the first. As
the sequence proceeds the ball parts with an ever smaller velocity, and at (h) a
separation occurs from which it does not rise clear of the surface. It is
thereafter in free flight while the bodies overlap, and a contact commences
afresh at the instant at which its velocity vanishes (i), the bodies then
overlapping already; throughout the contact that follows (j) the indentation and
the relative displacement differ by the overlap at (i). At length the contact
force is not again extinguished, and the ball settles into an oscillation about
a resting indentation (k).

What the diagram described above exhibits informally, the remainder of this subsection
constructs. A contact law is first given in the form in which such laws are
customarily written, and the conditions that it is required to satisfy are
stated; the data of the hybrid system are assembled from the law thereafter,
without alteration to it.

An incremental contact law is assumed to be specified by a differential relation
driven by the deformation of the contact interface. Let $n_z \in
\mathbb{Z}_{\geq 0}$, and let $\Phi : \mathbb{R} \times \mathbb{R}^{n_z} \times \mathbb{R} \longrightarrow
\mathbb{R}$ and $\Psi : \mathbb{R} \times \mathbb{R}^{n_z} \times \mathbb{R} \longrightarrow
\mathbb{R}^{n_z}$ be continuous. The incremental contact laws considered in this study have the form
\begin{equation}\label{eq:law}
\begin{cases}
\dot{x}_c = v_c & x_c(0) = 0 \\
\dot{z}_c = \Psi(x_c, z_c, v_c) & z_c(0) = 0 \\
F_s = \Phi(x_c, z_c, v_c) &
\end{cases}
\end{equation}
in which $x_c$ is the indentation of the interface, $v_c$ is the rate at which
the interface is deformed, $z_c$ is the internal state of the law, and $F_s$ is
the force exerted upon the interface by the bodies. The contact force $F_c$ of
Section~\ref{sec:bodies}, which is exerted upon the bodies by the interface, is
its negative: $F_c = -F_s$. The indentation
is measured from the undeformed configuration of the interface and is negative
in compression, whence $F_s$ is negative in compression likewise, and the power
delivered to the interface by the bodies is $F_s v_c$. The instant $t = 0$ is
that at which the contact commences, at which instant the interface is
undeformed and its internal state is at rest. It is the pair $(\Phi, \Psi)$
that constitutes the law, the initial conditions of Eq. \eqref{eq:law}
recording the configuration at the commencement of a contact. The case $n_z = 0$
is admitted, the internal state being then absent.

The maps $\Phi$ and $\Psi$ are defined upon $\mathbb{R} \times
\mathbb{R}^{n_z} \times \mathbb{R}$. In practice, contact laws are often defined
only on $\mathbb{R}_{\leq 0} \times \mathbb{R}^{n_z} \times \mathbb{R}$. 
Such laws need to be extended continuously in any manner. The extension does not
bear upon the force transmitted while the bodies are apart; and it is assumed not to be reached during a contact, the indentation not becoming positive there. An extension possessing some adhesion near the boundary is nevertheless to be
preferred, and for a reason beyond the numerical one: it is shown in
Section~\ref{sec:nu} to bear upon the solutions themselves. Any such extension
will serve, the particular form it takes being immaterial; and it is what a
physical interface would in any case exhibit.

It is assumed throughout that the initial value problem of Eq. \eqref{eq:law}
possesses a unique solution for every $v_c$ that arises; the establishment of this for a particular law is not attempted here, sufficient conditions being available in the literature but more
restrictive than the laws in common use require. However, one condition is
imposed upon $\Phi$, namely
\begin{equation}\label{eq:vanishing}
\Phi(0, 0, v_c) = 0
\end{equation}
for all $v_c \in \mathbb{R}$: an interface that has not been deformed transmits
no force, no matter the rate at which it is approached. It is by this condition that the contact force is continuous at the
commencement of a contact: the interface is undeformed at that instant, so that
the force furnished by the law is $\Phi(0, 0, v_c)$, which vanishes, as does
the contact force. The construction of the hybrid interface
model does not otherwise depend upon this condition.

A second property, that of passivity, is desirable but is not made a
requirement. The property is expected of any law
that is to represent an inactive physical interface, and it is possessed by the laws in
common use; but its verification is not always straightforward. It is
accordingly assumed only where it is needed.

It remains to describe how the bodies interact, and thereby to furnish the
contact force that Eq. \eqref{eq:main} requires. This is done by modeling the
contact interface as a hybrid system with inputs and outputs $\mathcal{H}_c
\triangleq (C_c, f_c, D_c, g_c, h_c)$ in the sense of
Definition~\ref{def:hybrid}, which receives the relative velocity of the bodies
and returns the force exerted upon the interface by them, and of which a
contact law of the form described above is a constituent.

The state of $\mathcal{H}_c$ is
\begin{equation}\label{eq:Hc_state}
\chi_c \triangleq (x, x_c, z_c, q_c) \in \mathbb{R} \times \mathbb{R} \times
\mathbb{R}^{n_z} \times \{ 0, 1 \}
\end{equation}
in which $x$ is the relative displacement of the bodies, $x_c$ and $z_c$ are the
indentation and the internal state of the contact law, and $q_c$ takes the value
$1$ while the bodies are in contact and $0$ while they are apart. The input is
the relative velocity $v$ and the output is the force $F_s$ exerted upon the
interface, the contact force being its negative. The
dimensions of $\mathcal{H}_c$ are therefore $n_s = n_z + 3$, $n_i = 1$, and $n_o
= 1$. It should be remarked that the interface carries its own copy of the relative displacement, which
coincides with that of Eq. \eqref{eq:main} along the solutions of the
interconnected system of Section~\ref{sec:interconnection}.

The maps of $\mathcal{H}_c$ are
\begin{equation}\label{eq:Hc_f}
f_c(\chi_c, v) \triangleq \left( v, \; q_c v, \; q_c \, \Psi(x_c, z_c, v), \; 0
\right)
\end{equation}
\begin{equation}\label{eq:Hc_g}
g_c(\chi_c, v) \triangleq \left( x, \; 0, \; 0, \; 1 - q_c \right)
\end{equation}
\begin{equation}\label{eq:Hc_h}
h_c(\chi_c, v) \triangleq q_c \, \Phi(x_c, z_c, v)
\end{equation}
While $q_c = 1$ the interface is deformed at the rate of the
relative velocity, its internal state evolves according to the contact law, and
the force transmitted is that furnished by the law, so that Eq. \eqref{eq:law} is
recovered; while $q_c = 0$ the interface does not evolve at all and the force
transmitted vanishes. At the termination of a contact, the interface is
restored to its undeformed configuration and $q_c$ is set to $0$, and at the
commencement of a contact the interface is assumed to be undeformed, so that nothing
is done but to set $q_c$ to $1$.

Writing $\{ q_c = 0 \}$ for the set of pairs $(\chi_c, v)$ whose component $q_c$
is $0$, and similarly for the remaining conditions, the flow set and the jump
set are
\begin{equation}\label{eq:Hc_C}
\begin{split}
C_c \triangleq{} & \left( \{ q_c = 0 \} \cap \left( \{ x \geq 0 \} \cup \{ v
\geq 0 \} \right) \right) \\
& \cup \left( \{ q_c = 1 \} \cap \left( \{ \Phi \leq 0 \} \cup \{ v \leq 0 \}
\right) \right)
\end{split}
\end{equation}
\begin{equation}\label{eq:Hc_D}
\begin{split}
D_c \triangleq{} & \left( \{ q_c = 0 \} \cap \{ x \leq 0 \} \cap \{ v \leq 0 \}
\right) \\
& \cup \left( \{ q_c = 1 \} \cap \{ \Phi \geq 0 \} \cap \{ v \geq 0 \} \right)
\end{split}
\end{equation}
where $\Phi$ is evaluated at $(x_c, z_c, v)$ throughout.

The initial condition is
\begin{equation}\label{eq:Hc_init}
\chi_c(0, 0) = (x_0, 0, 0, 0)
\end{equation}
with $x_0 \in \mathbb{R}_{\geq 0}$, the bodies being apart, or touching, at the
initial instant.

\begin{remark}
The factor $q_c$ in Eq. \eqref{eq:Hc_h} is redundant along the solutions of
$\mathcal{H}_c$ from Eq. \eqref{eq:Hc_init}. It is retained so that the output vanish while the bodies are apart at every state, and not merely at those that arise from Eq.
\eqref{eq:Hc_init}.
\end{remark}

\begin{example}[The Bouc-Wen-Simon-Hunt-Crossley Collision Law]\label{ex:bwshc}

The foregoing is abstract, and an instance may be of assistance. Let $n_z = 1$
and let
\begin{equation}\label{eq:bwshc_psi}
\Psi(x_c, z_c, v_c) = A v_c - \beta z_c \abs{z_c}^{n - 1} \abs{v_c}
  - \gamma \abs{z_c}^{n} v_c
\end{equation}
\begin{equation}\label{eq:bwshc_phi}
\begin{split}
\Phi(x_c, z_c, v_c) = {} & \alpha k \, x_c \abs{x_c}^{p - 1} \\
  & + (1 - \alpha) k \, z_c \abs{z_c}^{p - 1} \\
  & + c \abs{x_c}^{p} v_c
\end{split}
\end{equation}
with $k, A \in \mathbb{R}_{>0}$, $\alpha \in (0, 1)$, $\beta, c \in
\mathbb{R}_{\geq 0}$, $\gamma \in [-\beta, \beta]$, and $n, p \in
\mathbb{R}_{\geq 1}$. This is the Bouc-Wen-Simon-Hunt-Crossley Collision Law (BWSHCCL)
\cite{milehins_boucwen_2025, milehins_incremental_2026}. The first two terms of Eq. \eqref{eq:bwshc_phi}
bear the force in the proportions $\alpha$ and $1 - \alpha$, the former
by an elastic branch depending upon the indentation and the latter by a
hysteretic branch depending upon the internal state $z_c$, which has the
dimensions of a length and is governed by the model of hysteresis of Bouc and
Wen \cite{bouc_forced_1968, bouc_modemathematique_1971, wen_method_1976}. Taken
alone, these two terms constitute a law that is rate-independent. The third term is the rate-dependent term of
Simon, Hunt, and Crossley (see [Simon (1967), as cited in
\citenum{brogliato_nonsmooth_2016}] and Ref. \cite{hunt_coefficient_1975}).

The maps of Eqs. \eqref{eq:bwshc_psi} and \eqref{eq:bwshc_phi} are continuous,
and $\Phi(0, 0, v_c) = 0$ for all $v_c \in \mathbb{R}$, as Eq.
\eqref{eq:vanishing} requires. The contact interface subsystem for this law is obtained by
inserting the two maps into Eqs. \eqref{eq:Hc_f}-\eqref{eq:Hc_D}, the state
being of the dimension four. The jump map of Eq. \eqref{eq:Hc_g} and the initial
condition of Eq. \eqref{eq:Hc_init} are unaltered; and the same would be true of any
other admitted law.
\end{example}

\begin{example}[The Bouc-Wen-Maxwell Collision Law]\label{ex:bwm}

A second instance exhibits an internal state of more than one component, and an
output that depends upon that state alone. Let $n_z = 2$, write $z_c =
(z_{c,1}, z_{c,2})$, and let
\begin{equation}\label{eq:bwm_psi1}
\Psi_1(x_c, z_c, v_c) = A \Psi_2 - \beta z_{c,1} \abs{z_{c,1}}^{n - 1}
  \abs{\Psi_2} - \gamma \abs{z_{c,1}}^{n} \Psi_2
\end{equation}
\begin{equation}\label{eq:bwm_psi2}
\Psi_2(x_c, z_c, v_c) = v_c - \frac{1}{c} \, \Phi(x_c, z_c, v_c)
\end{equation}
\begin{equation}\label{eq:bwm_phi}
\Phi(x_c, z_c, v_c) = \alpha k \, z_{c,2} \abs{z_{c,2}}^{p - 1}
  + (1 - \alpha) k \, z_{c,1} \abs{z_{c,1}}^{p - 1}
\end{equation}
in which $\Psi_2$ abbreviates $\Psi_2(x_c, z_c, v_c)$ throughout Eq.
\eqref{eq:bwm_psi1}, with $k, A, c \in \mathbb{R}_{>0}$, $\alpha \in (0, 1)$,
$\beta \in \mathbb{R}_{\geq 0}$, $\gamma \in [-\beta, \beta]$, and $n, p \in
\mathbb{R}_{\geq 1}$. This is the Bouc-Wen-Maxwell Collision Law (BWMCL) 
\cite{milehins_boucwen_2025, milehins_incremental_2026}.

The interface is a linear dashpot of coefficient $c$ disposed in series
with the elastic and hysteretic branches of Eq. \eqref{eq:bwshc_phi}. The
deflection of those branches is $z_{c,2}$, that of the dashpot is $x_c -
z_{c,2}$, and the two sum to the indentation. The series arrangement is what distinguishes this law from the foregoing one.

Two consequences are worth remarking. First, $\Phi$ does not depend upon
$v_c$, so that Eq. \eqref{eq:vanishing} is satisfied without the expedient of
the exponent $p$ that was required of Eq. \eqref{eq:bwshc_phi}.
Second, $\Psi_2$ does not vanish when $v_c$ does: an interface held at a fixed
indentation continues to deform internally. 

The maps are continuous, and $\Phi(0, 0, v_c) = 0$ for all $v_c \in
\mathbb{R}$, as Eq. \eqref{eq:vanishing} requires; the state of the contact
interface subsystem is of the dimension five. 
\end{example}

\subsection{The Interconnected System}\label{sec:interconnection}

\begin{figure}
\centering

\begin{tikzpicture}[>={Stealth[scale=0.7]}, line width=0.7]

    \draw[dashed, line width=0.6](-3.6,-1.8) rectangle (3.05,4.6);
    \node at (-3.15,-1.50) {$\mathcal{H}_f$};

    \draw (-1.4,1.2) rectangle (1.4,4.2);
    \node at (0,2.7) {$\mathcal{G}_b$};
    \draw (-1.4,-1.4) rectangle (1.4,-0.2);
    \node at (0,-0.8) {$\mathcal{H}_c$};

    \draw[->](-4.6,3.45) -- (-1.4,3.45);
    \node at (-4.10,3.62) {$F_{e,1}$};
    \draw[->](-4.6,2.70) -- (-1.4,2.70);
    \node at (-4.10,2.87) {$F_{e,2}$};

    \draw[->](1.4,3.45) -- (4.0,3.45);
    \node at (1.88,3.62) {$v_{m,1}$};
    \draw[->](1.4,2.70) -- (4.0,2.70);
    \node at (1.88,2.87) {$v_{m,2}$};

    \draw[->](1.4,1.95) -- (2.5,1.95) -- (2.5,-0.8) -- (1.4,-0.8);
    \node at (1.62,2.12) {$v$};

    \draw (-3.1,1.95) circle (0.18);
    \node at (-3.1,1.95) {$-$};

    \draw[->](-1.4,-0.8) -- (-3.1,-0.8) -- (-3.1,1.77);
    \node at (-1.64,-0.63) {$F_s$};
    \draw[->](-2.92,1.95) -- (-1.4,1.95);
    \node at (-1.72,2.12) {$F_c$};

\end{tikzpicture}

\caption{The interconnected system $\mathcal{H}_f$, formed of the rigid body
subsystem $\mathcal{G}_b$ and the contact interface subsystem $\mathcal{H}_c$.}\label{fig:interconnection}

\end{figure}

The rigid body subsystem $\mathcal{G}_b$ was given in Section~\ref{sec:bodies}
up to the contact force, and the contact interface $\mathcal{H}_c$ in
Section~\ref{sec:interface} up to the relative velocity. Coupling the two
closes the system, as is shown informally in Fig.
\ref{fig:interconnection}: the relative velocity is delivered to
$\mathcal{H}_c$ as its input, and the negative of the force returned by
$\mathcal{H}_c$, which is the contact force, is delivered to $\mathcal{G}_b$ in
its turn, the external forces upon the two bodies being the input of the system
that results and the velocities of the centers of mass of the two bodies, its
output. The coupling is unambiguous, the
relative velocity delivered to $\mathcal{H}_c$ not depending upon the contact
force, which enters the relative acceleration alone; and the relative
displacement, a state of each subsystem, is carried but once, the two copies
being governed alike by $\dot{x} = v$, altered at neither transition, and
agreeing initially.

Performed formally, this would require $\mathcal{G}_b$ to be recast as a hybrid
system with an empty jump set, the two systems to be interconnected in the
manner of Refs. \cite{sanfelice_results_2010, sanfelice_hybrid_2021}, and the duplicated relative displacement to be deleted
from the result. The apparatus that this requires is not developed here. The
system $\mathcal{H}_f$, of which the data are given directly below, is instead
taken as the object of study, and the reading of it as an interconnection is
nowhere relied upon: the properties of $\mathcal{H}_c$ and of $\mathcal{H}_f$
established in Section~\ref{sec:properties} are proved of each directly.

The data of $\mathcal{H}_f$ may now be described. The state of $\mathcal{H}_f$
is
\begin{equation}\label{eq:H_state}
\chi \triangleq (x, v, x_m, v_m, x_c, z_c, q_c)
\end{equation}
which is an element of $\mathbb{R}^4 \times \mathbb{R} \times \mathbb{R}^{n_z}
\times \{ 0, 1 \}$, the input is $(F_{e,1}, F_{e,2}) \in \mathbb{R}^2$, and the
output is $(v_{m,1}, v_{m,2}) \in \mathbb{R}^2$.
The dimensions of $\mathcal{H}_f$ are therefore $n_s = n_z + 6$, $n_i = 2$, and
$n_o = 2$.

The maps of $\mathcal{H}_f$ are given by
\begin{equation}\label{eq:H_f}
\begin{aligned}
f_1(\chi, (F_{e,1}, F_{e,2})) &\triangleq v \\
f_2(\chi, (F_{e,1}, F_{e,2})) &\triangleq
  m^{-1} (-q_c \, \Phi(x_c, z_c, v) + F_{e,r}) \\
f_3(\chi, (F_{e,1}, F_{e,2})) &\triangleq v_m \\
f_4(\chi, (F_{e,1}, F_{e,2})) &\triangleq
  (m_1 + m_2)^{-1} (F_{e,1} + F_{e,2}) \\
f_5(\chi, (F_{e,1}, F_{e,2})) &\triangleq q_c v \\
f_6(\chi, (F_{e,1}, F_{e,2})) &\triangleq q_c \, \Psi(x_c, z_c, v) \\
f_7(\chi, (F_{e,1}, F_{e,2})) &\triangleq 0
\end{aligned}
\end{equation}
\begin{equation}\label{eq:H_g}
g(\chi, (F_{e,1}, F_{e,2})) \triangleq (x, v, x_m, v_m, 0, 0, 1 - q_c)
\end{equation}
\begin{equation}\label{eq:H_h}
\begin{aligned}
h_1(\chi, (F_{e,1}, F_{e,2})) &\triangleq v_m + \eta_c v \\
h_2(\chi, (F_{e,1}, F_{e,2})) &\triangleq v_m - \eta v
\end{aligned}
\end{equation}
in which $F_{e,r}$ is given by Eq.~\eqref{eq:u}. 

The flow set and the jump set are
\begin{equation}\label{eq:H_C}
\begin{split}
C \triangleq{} & \left( \{ q_c = 0 \} \cap \left( \{ x \geq 0 \} \cup \{ v \geq
0 \} \right) \right) \\
& \cup \left( \{ q_c = 1 \} \cap \left( \{ \Phi \leq 0 \} \cup \{ v \leq 0 \}
\right) \right)
\end{split}
\end{equation}
\begin{equation}\label{eq:H_D}
\begin{split}
D \triangleq{} & \left( \{ q_c = 0 \} \cap \{ x \leq 0 \} \cap \{ v \leq 0 \}
\right) \\
& \cup \left( \{ q_c = 1 \} \cap \{ \Phi \geq 0 \} \cap \{ v \geq 0 \} \right)
\end{split}
\end{equation}
these being subsets of $\left( \mathbb{R}^4 \times \mathbb{R} \times
\mathbb{R}^{n_z} \times \{ 0, 1 \} \right) \times \mathbb{R}^2$. The initial
condition is
\begin{equation}\label{eq:H_init}
\chi(0, 0) \triangleq (x_0, v_0, x_{m,0}, v_{m,0}, 0, 0, 0)
\end{equation}
in which $x_0, v_0 \in \mathbb{R}$ are the parameters of Eq. \eqref{eq:main} and
$x_{m,0}, v_{m,0} \in \mathbb{R}$ are given by Eqs. \eqref{eq:x_m_0} and
\eqref{eq:v_m_0}, subject to the restrictions $x_0 \in \mathbb{R}_{\geq 0}$ and
$x_0 + l_1 + l_2 \in \mathbb{R}_{>0}$ imposed in Section~\ref{sec:bodies}. The
bodies are apart, or touching, at the initial instant, and the interface is
undeformed, whence $q_c(0, 0) = 0$.

The components $x_m$ and $v_m$ are absent from the flow set, from the jump set,
and from the jump map, and the components of the flow map that govern them
involve no other component of the state. They may accordingly be deleted, together with the input and the output of the port to which they belong,
and the motion of the center of mass recovered afterwards from Eq.
\eqref{eq:com}.

The deletion is obligatory when one of the bodies is of infinite mass, the
center of mass of the system being then that of the heavier body. Suppose, by abuse of notation, $m_2 = +\infty$, so that $m = m_1$ and $F_{e,r} = F_{e,1}$, as remarked in Section \ref{sec:bodies}. Then, the symbol $\mathcal{H}_r$ shall denote the system that remains, of which the state is
\begin{equation}\label{eq:Hr_state}
\chi \triangleq (x, v, x_c, z_c, q_c) \in \mathbb{R}^2 \times \mathbb{R} \times \mathbb{R}^{n_z} \times \{ 0, 1 \}
\end{equation}
the input is $F_{e,r} \in \mathbb{R}$, and the output is $v \in \mathbb{R}$,
whence $n_s = n_z + 4$, $n_i = 1$, and $n_o = 1$. The symbols $\chi$, $f$, $g$,
$h$, $C$, and $D$ are employed for the data of $\mathcal{H}_r$ as they are for
those of $\mathcal{H}_f$. Its maps are
\begin{equation}\label{eq:Hr_f}
\begin{aligned}
f_1(\chi, F_{e,r}) &\triangleq v \\
f_2(\chi, F_{e,r}) &\triangleq
  m^{-1} (-q_c \, \Phi(x_c, z_c, v) + F_{e,r}) \\
f_3(\chi, F_{e,r}) &\triangleq q_c v \\
f_4(\chi, F_{e,r}) &\triangleq q_c \, \Psi(x_c, z_c, v) \\
f_5(\chi, F_{e,r}) &\triangleq 0
\end{aligned}
\end{equation}
\begin{equation}\label{eq:Hr_g}
g(\chi, F_{e,r}) \triangleq (x, v, 0, 0, 1 - q_c)
\end{equation}
\begin{equation}\label{eq:Hr_h}
h(\chi, F_{e,r}) \triangleq v
\end{equation}
its flow set and jump set are those of Eqs. \eqref{eq:H_C} and \eqref{eq:H_D},
read as subsets of $\left( \mathbb{R}^2 \times \mathbb{R} \times
\mathbb{R}^{n_z} \times \{ 0, 1 \} \right) \times \mathbb{R}$, and its initial
condition is
\begin{equation}\label{eq:Hr_init}
\chi(0, 0) \triangleq (x_0, v_0, 0, 0, 0)
\end{equation}
in which $x_0, v_0 \in \mathbb{R}$ are the parameters of Eq. \eqref{eq:main}, 
subject to the restrictions $x_0 \in \mathbb{R}_{\geq 0}$ and
$x_0 + l_1 + l_2 \in \mathbb{R}_{>0}$ imposed in Section~\ref{sec:bodies}.

\section{Properties}\label{sec:properties}

\subsection{Background}

\begin{lemma}[Hybrid basic conditions for $\mathcal{H}_c$]\label{lem:hbc_Hc}
The system $\mathcal{H}_c$ satisfies the hybrid basic conditions of
Definition~\ref{def:hbc}.
\end{lemma}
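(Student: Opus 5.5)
The four conditions of Definition~\ref{def:hbc} will be checked one at a time for $\mathcal{H}_c$. The state space is $\mathbb{R} \times \mathbb{R} \times \mathbb{R}^{n_z} \times \{0,1\}$, regarded as a subset of $\mathbb{R}^{n_z+3}$, and the input space is $\mathbb{R}$. A preliminary remark handles the discrete component. The set $\{q_c = 0\}$, meaning the pairs $(\chi_c, v) \in \mathbb{R}^{n_z+3} \times \mathbb{R}$ whose last state coordinate is $0$, is the preimage of the closed set $\{0\}$ under a coordinate projection, and so it is closed. The same holds for $\{q_c = 1\}$. The data are written only for $q_c \in \{0,1\}$, but the maps of Eqs.~\eqref{eq:Hc_f}--\eqref{eq:Hc_h} are in fact given by formulas that make sense for every real $q_c$. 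So they can be read as maps on all of $\mathbb{R}^{n_z+3} \times \mathbb{R}$, as Definition~\ref{def:hybrid} requires.

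For \ref{a:closed}, the plan is to write each constituent of $C_c$ and $D_c$ as a preimage of a closed half-line under a continuous function. The sets $\{x \ge 0\}$, $\{x \le 0\}$, $\{v \ge 0\}$ and $\{v \le 0\}$ are preimages under coordinate projections. The sets $\{\Phi \le 0\}$ and $\{\Phi \ge 0\}$ are preimages under $(\chi_c, v) \mapsto \Phi(x_c, z_c, v)$, which is continuous because $\Phi$ is continuous and is composed with a projection. Eqs.~\eqref{eq:Hc_C} and \eqref{eq:Hc_D} build $C_c$ and $D_c$ from these sets using finitely many unions and intersections, and closedness is preserved under both operations. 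Hence $C_c$ and $D_c$ are closed.

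For \ref{a:flow}, \ref{a:jump} and \ref{a:output}, each component of $f_c$, $g_c$ and $h_c$ is a polynomial expression, namely a sum or product, in the coordinates $x$, $x_c$, $z_c$, $q_c$, $v$ and in the compositions $\Psi(x_c, z_c, v)$ and $\Phi(x_c, z_c, v)$. These compositions are continuous because $\Psi$ and $\Phi$ are continuous by hypothesis. The maps to check are the products $q_c v$, $q_c\Psi$ and $q_c\Phi$, the constant components, and the affine component $1 - q_c$. All are continuous, and a map into a product space is continuous when each of its components is.

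No step presents a genuine obstacle. The only point needing care is the interpretation of the discrete coordinate $q_c \in \{0,1\}$ within the ambient Euclidean space. The plan is to state explicitly that the maps are defined by the same formulas for all real $q_c$, or equivalently to observe that on the closed set where $q_c \in \{0,1\}$ the restrictions are continuous. Once this is fixed, the rest is routine bookkeeping.
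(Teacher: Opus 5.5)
Your proposal is correct and follows essentially the same route as the paper. The paper also writes the constituent sets as preimages of closed sets under coordinate projections and under $(\chi_c, v) \mapsto \Phi(x_c, z_c, v)$, obtains $C_c$ and $D_c$ as finite unions of finite intersections of them, and gets continuity of $f_c$, $g_c$, $h_c$ from sums and products of $\Phi$, $\Psi$ and projections; your explicit remark that the formulas extend to all real $q_c$ is a small clarification the paper leaves implicit.
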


\begin{proof}
The maps $(\chi_c, v) \mapsto x$, $(\chi_c, v) \mapsto v$, $(\chi_c, v)
\mapsto q_c$, and $(\chi_c, v) \mapsto \Phi(x_c, z_c, v)$ are
continuous, the first three being projections and the last being the
composition of $\Phi$ with a projection. The sets $\{ q_c = 0 \}$ and $\{ q_c =
1 \}$ are the preimages of $\{ 0 \}$ and $\{ 1 \}$ under the third of these
maps; the sets $\{ x \geq 0 \}$ and $\{ x \leq 0 \}$ are the preimages of
$\mathbb{R}_{\geq 0}$ and $\mathbb{R}_{\leq 0}$ under the first; the sets $\{ v
\geq 0 \}$ and $\{ v \leq 0 \}$, of the same two sets under the second; and the
sets $\{ \Phi \geq 0 \}$ and $\{ \Phi \leq 0 \}$, of the same two sets under
the fourth. All eight are accordingly closed, and the sets $C_c$ and $D_c$ of
Eqs. \eqref{eq:Hc_C} and \eqref{eq:Hc_D} are finite unions of finite
intersections of them, whence condition \ref{a:closed}. The maps $f_c$, $g_c$,
and $h_c$ of Eqs. \eqref{eq:Hc_f} to \eqref{eq:Hc_h} are continuous, being
composed of $\Phi$, $\Psi$, and the projections onto the components of the
state by addition and multiplication, whence conditions \ref{a:flow},
\ref{a:jump}, and \ref{a:output}.
\end{proof}

\begin{lemma}[Hybrid basic conditions for $\mathcal{H}_f$]\label{lem:hbc_Hf}
The system $\mathcal{H}_f$ satisfies the hybrid basic conditions of
Definition~\ref{def:hbc}.
\end{lemma}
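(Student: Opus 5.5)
The proof will mirror that of Lemma~\ref{lem:hbc_Hc}, each condition of Definition~\ref{def:hbc} being checked directly on the data of $\mathcal{H}_f$ rather than inferred from an interconnection. The sets are now subsets of $\left( \mathbb{R}^4 \times \mathbb{R} \times \mathbb{R}^{n_z} \times \{ 0, 1 \} \right) \times \mathbb{R}^2$, which I read, as throughout, as subsets of $\mathbb{R}^{n_z + 6} \times \mathbb{R}^2$.

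\textbf{Closedness (\ref{a:closed}).} The maps $(\chi, (F_{e,1}, F_{e,2})) \mapsto x$, $\mapsto v$ and $\mapsto q_c$ are projections, hence continuous. The map $(\chi, (F_{e,1}, F_{e,2})) \mapsto \Phi(x_c, z_c, v)$ is the composition of the continuous $\Phi$ with a projection, hence continuous. The eight sets $\{ q_c = 0 \}$, $\{ q_c = 1 \}$, $\{ x \geq 0 \}$, $\{ x \leq 0 \}$, $\{ v \geq 0 \}$, $\{ v \leq 0 \}$, $\{ \Phi \geq 0 \}$ and $\{ \Phi \leq 0 \}$ are therefore preimages of closed subsets of $\mathbb{R}$ under continuous maps, and so are closed. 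The sets $C$ and $D$ of Eqs.~\eqref{eq:H_C} and \eqref{eq:H_D} are finite unions of finite intersections of these, so they are closed. The state space itself is closed, $\{0,1\}$ being closed.

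\textbf{Continuity of the maps (\ref{a:flow}--\ref{a:output}).} Each component of $f$ in Eq.~\eqref{eq:H_f} is continuous:
\begin{itemize}
\item $F_{e,r}$, given by Eq.~\eqref{eq:u}, is a fixed linear combination of the input components;
\item $f_2$ is a linear combination of $F_{e,r}$ and the product $q_c \, \Phi(x_c, z_c, v)$;
\item $f_4$ is linear in the input;
\item $f_5$ and $f_6$ are products of $q_c$ with $v$ and with $\Psi \circ$ projection, $\Psi$ being continuous;
\item $f_1$, $f_3$ and $f_7$ are projections or constants.
\end{itemize}
The jump map $g$ of Eq.~\eqref{eq:H_g} has components that are projections, constants, or the affine map $1 - q_c$, so it is continuous. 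The output map $h$ of Eq.~\eqref{eq:H_h} is linear in $(v_m, v)$, so it is continuous. Sums and products of continuous maps being continuous, conditions \ref{a:flow}, \ref{a:jump} and \ref{a:output} follow.

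There is no real obstacle here. The only points requiring care are that $\Phi$ and $\Psi$ are continuous on all of $\mathbb{R} \times \mathbb{R}^{n_z} \times \mathbb{R}$ by hypothesis (any extension beyond $x_c \leq 0$ having been taken continuous), and that the new components $x_m$ and $v_m$, together with the input, enter only linearly.
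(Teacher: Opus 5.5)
Your proof is correct and follows the paper's own argument. Closedness of $C$ and $D$ comes from the preimage-and-finite-union/intersection argument of Lemma~\ref{lem:hbc_Hc}, and continuity of $f$, $g$, $h$ comes from their being built from $\Phi$, $\Psi$ and projections by sums and products; you simply write out the component-by-component check that the paper compresses into one sentence.
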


\begin{proof}
The sets $C$ and $D$ of Eqs. \eqref{eq:H_C} and \eqref{eq:H_D} are of the same
form as those of Eqs. \eqref{eq:Hc_C} and \eqref{eq:Hc_D}, and are closed by
the argument of the proof of Lemma~\ref{lem:hbc_Hc}, whence condition
\ref{a:closed}. The maps $f$, $g$, and $h$ of Eqs. \eqref{eq:H_f} to
\eqref{eq:H_h} are continuous, being composed of $\Phi$, $\Psi$, and the
projections onto the components of the state by addition and multiplication,
whence conditions \ref{a:flow}, \ref{a:jump}, and \ref{a:output}.
\end{proof}

\begin{lemma}[Hybrid basic conditions for $\mathcal{H}_r$]\label{lem:hbc_Hr}
The system $\mathcal{H}_r$ satisfies the hybrid basic conditions of
Definition~\ref{def:hbc}.
\end{lemma}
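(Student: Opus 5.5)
The proof will follow the pattern of Lemma~\ref{lem:hbc_Hf}, itself modelled on Lemma~\ref{lem:hbc_Hc}, with $\mathcal{H}_r$ in place of $\mathcal{H}_f$. I will check the four conditions of Definition~\ref{def:hbc} in turn, working on the ambient space $\left( \mathbb{R}^2 \times \mathbb{R} \times \mathbb{R}^{n_z} \times \{ 0, 1 \} \right) \times \mathbb{R}$. The state space contains the discrete factor $\{0,1\}$; I regard it as a subset of $\mathbb{R}$, so the whole is a subset of $\mathbb{R}^{n_z+4} \times \mathbb{R}$.

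For condition \ref{a:closed}, I first introduce the coordinate maps $(\chi, F_{e,r}) \mapsto x$, $(\chi, F_{e,r}) \mapsto v$ and $(\chi, F_{e,r}) \mapsto q_c$, which are projections and hence continuous. I also use $(\chi, F_{e,r}) \mapsto \Phi(x_c, z_c, v)$, which is continuous as the composition of the continuous $\Phi$ with a projection. The eight elementary sets are $\{ q_c = 0 \}$, $\{ q_c = 1 \}$, $\{ x \gtrless 0 \}$, $\{ v \gtrless 0 \}$ and $\{ \Phi \gtrless 0 \}$. Each is the preimage of a closed subset of $\mathbb{R}$ ($\{0\}$, $\{1\}$, $\mathbb{R}_{\geq 0}$ or $\mathbb{R}_{\leq 0}$) under one of these maps, so each is closed. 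The sets $C$ and $D$ of Eqs.~\eqref{eq:H_C} and \eqref{eq:H_D}, read in this ambient space, are finite unions of finite intersections of these sets, and are therefore closed.

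For conditions \ref{a:flow} through \ref{a:output}, I will observe that each component of $f$ in Eq.~\eqref{eq:Hr_f} is built from projections, the continuous maps $\Phi$ and $\Psi$ composed with projections, the constant $m^{-1}$, and the input $F_{e,r}$, combined by sums and products. Each component is therefore continuous. The jump map $g$ of Eq.~\eqref{eq:Hr_g} is affine in the state, and the output map $h$ of Eq.~\eqref{eq:Hr_h} is a projection, so both are continuous.

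I expect no genuine obstacle. The only point needing care is that the input of $\mathcal{H}_r$ is now the scalar $F_{e,r}$ rather than the pair $(F_{e,1}, F_{e,2})$. Here $F_{e,r}$ enters $f_2$ linearly, so continuity is immediate, whereas in $\mathcal{H}_f$ it entered through the linear combination of Eq.~\eqref{eq:u}.
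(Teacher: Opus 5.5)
Your proposal is correct and matches the paper's proof: $C$ and $D$ are closed by the preimage argument of Lemma~\ref{lem:hbc_Hc}, and $f$, $g$, $h$ are continuous because they are built from $\Phi$, $\Psi$, and projections by sums and products. Your remark that the factor $\{0,1\}$ is regarded as a subset of $\mathbb{R}$, so that $C$ and $D$ are closed in all of $\mathbb{R}^{n_s} \times \mathbb{R}^{n_i}$, is a harmless clarification that the paper leaves implicit.
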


\begin{proof}
The flow set and the jump set of $\mathcal{H}_r$ are those of Eqs.
\eqref{eq:H_C} and \eqref{eq:H_D} and are closed by the argument of the proof
of Lemma~\ref{lem:hbc_Hc}, whence condition \ref{a:closed}. The maps $f$, $g$,
and $h$ of Eqs. \eqref{eq:Hr_f} to \eqref{eq:Hr_h} are continuous, being
composed of $\Phi$, $\Psi$, and the projections onto the components of the
state by addition and multiplication, whence conditions \ref{a:flow},
\ref{a:jump}, and \ref{a:output}.
\end{proof}

\begin{lemma}[Forward invariance of the regimes]\label{lem:invariant}
Let $\mathcal{K}$ be either of $\mathcal{H}_f$ and $\mathcal{H}_r$, with the
state dimension $n_s$. The set
\begin{equation}\label{eq:S_regimes}
\mathcal{S} \triangleq \left\{ \chi \in \mathbb{R}^{n_s} :
  q_c \in \{ 0, 1 \} \right\}
\end{equation}
is forward invariant for $\mathcal{K}$.
\end{lemma}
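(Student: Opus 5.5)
The plan is to fix a solution pair $(\chi, u)$ to $\mathcal{K}$ with $\chi(0,0) \in \mathcal{S}$ and show that $q_c(t,j) \in \{0,1\}$ for every $(t,j) \in \dom \chi$. The argument will be an induction on the jump counter $j$. The claim is that $q_c(\cdot, j)$ takes a value in $\{0,1\}$, constant on $I_E^j$, for every $j$ with $I_E^j \neq \emptyset$, where $E \triangleq \dom \chi$.

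The base case and the propagation along flows will both come from the flow map. The seventh component of $f$ in Eq. \eqref{eq:H_f} and the fifth of Eq. \eqref{eq:Hr_f} vanish identically. Suppose $j \in \mathcal{J}_E$. By condition \ref{s:flow}, $\dot{q}_c(t,j) = 0$ for all $t \in \interior I_E^j$. By Definition~\ref{def:arc}, $q_c(\cdot,j)$ is continuously differentiable on the interval $I_E^j$, so it is constant there by the mean value theorem, including at the endpoints by continuity. If instead $j \notin \mathcal{J}_E$, then $I_E^j$ is a single point or empty, and there is nothing to propagate. In either case, the value of $q_c$ at $(t_j, j)$ determines its value on all of $I_E^j \times \{j\}$.

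For the inductive step, let $(t_{j+1}, j)$ and $(t_{j+1}, j+1)$ both lie in $E$. Condition \ref{s:jump} and Eqs. \eqref{eq:H_g} and \eqref{eq:Hr_g} give $q_c(t_{j+1}, j+1) = 1 - q_c(t_{j+1}, j)$. The map $q \mapsto 1-q$ sends $\{0,1\}$ onto itself, so membership passes from step $j$ to step $j+1$. Since every point of $E$ has the form $(t,j)$ with $t \in I_E^j$, and $I_E^j = [t_j, t_{j+1}]$ on each truncation by Lemma~\ref{lem:flow_interval}, induction on $j$ covers all of $E$. The start is $q_c(0,0) \in \{0,1\}$, which holds by hypothesis.

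No step is expected to be a real obstacle. The only point needing care is the bookkeeping of hybrid time domains: the $t_j$ are taken from a compact truncation $E_{(T,J)}$, and degenerate intervals with $t_j = t_{j+1}$ must be handled. I would treat both by working on $E_{(T,J)}$ for an arbitrary $(T,J) \in E$, where the generating sequence is finite, and then noting that $(T,J)$ was arbitrary.

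A second remark concerns $\mathbb{R}^{n_s}$. The state space literally places $q_c$ in $\{0,1\}$, but the statement reads $\mathcal{S}$ as a subset of $\mathbb{R}^{n_s}$, so the argument above is the genuine content. One should also note that $C$ and $D$ contain only pairs with $q_c \in \{0,1\}$, which gives an alternative immediate proof. That proof would apply on interiors of flow intervals and at jump points, with the remaining endpoints handled by continuity. The induction is the cleaner route and I would present it.
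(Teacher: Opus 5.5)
Your proposal is correct and matches the paper's proof: $q_c$ is constant on each flow interval because its component of the flow map vanishes, the jump map sends $q_c$ to $1-q_c$, which preserves $\{0,1\}$, and induction on $j$ covers the whole domain. The alternative route in your aside, via $C$ and $D$, would also need the jump map (not continuity) at a post-jump point that ends the domain on a degenerate interval; the induction you chose to present has no such gap.
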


\begin{proof}
Let $(\chi, u)$ be a solution pair to $\mathcal{K}$ with $\chi(0, 0) \in
\mathcal{S}$ and let $E \triangleq \dom \chi$. For $j \in \mathcal{J}_E$ the
map $q_c(\cdot, j)$ is constant upon $I_E^j$, its derivative vanishing by
condition \ref{s:flow} and Eqs. \eqref{eq:H_f} and \eqref{eq:Hr_f}; and $q_c(t,
j + 1) = 1 - q_c(t, j)$ at each transition by condition \ref{s:jump} and Eqs.
\eqref{eq:H_g} and \eqref{eq:Hr_g}, the set $\{ 0, 1 \}$ being invariant under
$q \mapsto 1 - q$. The assertion follows by induction upon $j$.
\end{proof}

\subsection{Passivity}

\begin{definition}[Passive contact law]\label{def:passive_law}
A contact law of the form of Eq. \eqref{eq:law} is \emph{passive} if there
exists a continuously differentiable map $V_c : \mathbb{R} \times
\mathbb{R}^{n_z} \longrightarrow \mathbb{R}_{\geq 0}$ with $V_c(0, 0) = 0$ such
that, for all $(x_c, z_c, v_c) \in \mathbb{R} \times \mathbb{R}^{n_z} \times
\mathbb{R}$,
\begin{equation}\label{eq:law_passive}
\begin{split}
& \partial_1 V_c(x_c, z_c) \, v_c +
  \left\langle \partial_2 V_c(x_c, z_c), \Psi(x_c, z_c, v_c) \right\rangle
  \\
& \qquad \leq \Phi(x_c, z_c, v_c) \, v_c
\end{split}
\end{equation}
The map $V_c$ is a \emph{storage function} for the law, and the map $d_c :
\mathbb{R} \times \mathbb{R}^{n_z} \times \mathbb{R} \longrightarrow
\mathbb{R}_{\geq 0}$ given by the difference of the two sides of Eq.
\eqref{eq:law_passive} is the \emph{dissipation rate} of the law.
\end{definition}

\begin{proposition}[Passivity of the contact interface]\label{prop:pass_Hc}
Let the contact law of Eq. \eqref{eq:law} be passive with the storage function
$V_c$ and the dissipation rate $d_c$. Then $\mathcal{H}_c$ is passive with the
storage function
\begin{equation}\label{eq:V_Hc}
V_{\mathcal{H}_c}(\chi_c) \triangleq V_c(x_c, z_c)
\end{equation}
\end{proposition}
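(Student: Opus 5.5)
The plan is to verify the two conditions of Definition~\ref{def:dissipativity} directly for $\mathcal{H}_c$, with the supply rate $s(v, F_s) = v F_s$ of Definition~\ref{def:passivity} ($n_i = n_o = 1$). The hypotheses of that definition are available: $\mathcal{H}_c$ satisfies the hybrid basic conditions by Lemma~\ref{lem:hbc_Hc}. The map $V_{\mathcal{H}_c}$ of Eq.~\eqref{eq:V_Hc} is non-negative, and it is continuously differentiable as a function of the full state $\chi_c = (x, x_c, z_c, q_c)$, being $V_c$ composed with a projection. Its gradient is
\[
\nabla V_{\mathcal{H}_c}(\chi_c) = \left( 0, \; \partial_1 V_c(x_c, z_c), \; \partial_2 V_c(x_c, z_c), \; 0 \right).
\]
Here $q_c$ is treated as a real coordinate of $\mathbb{R}^{n_z + 3}$, on which $V_{\mathcal{H}_c}$ does not depend.

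For condition \ref{d:flow}, fix $(\chi_c, v) \in C_c$. Pairing the gradient with $f_c$ of Eq.~\eqref{eq:Hc_f} gives
\[
\left\langle \nabla V_{\mathcal{H}_c}(\chi_c), f_c(\chi_c, v) \right\rangle
= q_c \left[ \partial_1 V_c(x_c, z_c)\, v + \left\langle \partial_2 V_c(x_c, z_c), \Psi(x_c, z_c, v) \right\rangle \right].
\]
I will split on $q_c$, which takes values in $\{0,1\}$ on $C_c$.
\begin{itemize}
\item If $q_c = 0$, both sides vanish: $h_c = 0$ by Eq.~\eqref{eq:Hc_h}, so the supply $v\,h_c$ is zero.
\item If $q_c = 1$, the inequality is precisely Eq.~\eqref{eq:law_passive} evaluated at $(x_c, z_c, v)$, because then $h_c(\chi_c, v)\, v = \Phi(x_c, z_c, v)\, v$.
\end{itemize}
Membership in $C_c$ is used only to know $q_c \in \{0,1\}$. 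Since the law's inequality holds for every $(x_c, z_c, v_c)$, no further structure of the flow set is needed.

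For condition \ref{d:jump}, fix $(\chi_c, v) \in D_c$. By Eq.~\eqref{eq:Hc_g}, $g_c$ resets $(x_c, z_c)$ to $(0,0)$, so
\[
V_{\mathcal{H}_c}(g_c(\chi_c, v)) = V_c(0,0) = 0 \leq V_c(x_c, z_c) = V_{\mathcal{H}_c}(\chi_c),
\]
using the normalization $V_c(0,0) = 0$ from Definition~\ref{def:passive_law} and the non-negativity of $V_c$. This holds for both branches of $D_c$ simultaneously.

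I expect no step to be a genuine obstacle. The only point needing care is bookkeeping: the discrete component $q_c$ must be regarded as a real coordinate so that $\nabla V_{\mathcal{H}_c}$ is defined as Definition~\ref{def:dissipativity} requires. I will also note that the dissipation rate on $C_c$ comes out as $q_c\, d_c(x_c, z_c, v)$, and on $D_c$ as $V_c(x_c, z_c)$; this is what the later energy accounting via Proposition~\ref{prop:balance} will use.
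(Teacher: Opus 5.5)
Your proposal is correct and follows essentially the same route as the paper. It computes $\langle \nabla V_{\mathcal{H}_c}, f_c\rangle = q_c(\partial_1 V_c\, v + \langle \partial_2 V_c, \Psi\rangle)$ and bounds it by $v\,h_c$ using the law's passivity inequality, and it handles the jump condition by the reset to $(0,0)$ together with $V_c(0,0)=0$. The only cosmetic difference is that you split on $q_c\in\{0,1\}$, whereas the paper writes the flow term as $q_c\Phi v - q_c d_c$ and appeals to $q_c\ge 0$ and $d_c\ge 0$.
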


\begin{proof}
Let $(\chi_c, v) \in C_c$. By Eqs. \eqref{eq:Hc_f} and \eqref{eq:V_Hc},
\begin{equation}\label{eq:pass_Hc_flow}
\begin{split}
& \left\langle \nabla V_{\mathcal{H}_c}(\chi_c), f_c(\chi_c, v)
  \right\rangle \\
& \qquad = q_c \left( \partial_1 V_c \, v + \left\langle \partial_2
  V_c, \Psi(x_c, z_c, v) \right\rangle \right) \\
& \qquad = q_c \, \Phi(x_c, z_c, v) \, v - q_c \, d_c(x_c, z_c, v)
\end{split}
\end{equation}
in which $V_c$ and its partial derivatives are evaluated at $(x_c, z_c)$, the
second equality holding by Definition~\ref{def:passive_law}. Since $q_c \geq 0$
and $d_c \geq 0$, and since $q_c \Phi(x_c, z_c, v) = h_c(\chi_c, v)$ by Eq.
\eqref{eq:Hc_h}, the right-hand side of Eq.
\eqref{eq:pass_Hc_flow} is bounded above by $v \, h_c(\chi_c, v)$, which is
condition \ref{d:flow}.

Let $(\chi_c, v) \in D_c$. By Eqs. \eqref{eq:Hc_g} and \eqref{eq:V_Hc},
\begin{equation}\label{eq:pass_Hc_jump}
V_{\mathcal{H}_c}\left( g_c(\chi_c, v) \right) = V_c(0, 0) = 0 \leq
  V_{\mathcal{H}_c}(\chi_c)
\end{equation}
which is condition \ref{d:jump}. The map $V_{\mathcal{H}_c}$ is continuously
differentiable and non-negative, $V_c$ being so.
\end{proof}

\begin{lemma}[Relative motion and interface]\label{lem:pass_rel}
Let the contact law of Eq. \eqref{eq:law} be passive with the storage function
$V_c$ and the dissipation rate $d_c$. Then, for all $m \in \mathbb{R}_{> 0}$, $v, x_c, F \in \mathbb{R}$, all $z_c \in \mathbb{R}^{n_z}$, and all $q_c \in \mathbb{R}$,
\begin{equation}\label{eq:pass_rel}
\begin{split}
& m v \cdot m^{-1} \left( -q_c \, \Phi(x_c, z_c, v) + F \right) \\
& \qquad + q_c \, \partial_1 V_c(x_c, z_c) \, v +
  q_c \left\langle \partial_2 V_c(x_c, z_c), \Psi(x_c, z_c, v) \right\rangle \\
& \qquad = F v - q_c \, d_c(x_c, z_c, v)
\end{split}
\end{equation}
\end{lemma}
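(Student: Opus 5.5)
The identity in Eq.~\eqref{eq:pass_rel} is pointwise and algebraic, so the plan is to expand the left-hand side and substitute the definition of the dissipation rate from Definition~\ref{def:passive_law}; no solution of any system enters.

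First, since $m > 0$, the factor $m \cdot m^{-1}$ cancels. The first term therefore equals
\[
-q_c \, \Phi(x_c, z_c, v) \, v + F v .
\]

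Second, Definition~\ref{def:passive_law} defines $d_c$ as the difference of the two sides of Eq.~\eqref{eq:law_passive}. Evaluating it at $(x_c, z_c, v)$ gives the identity
\[
\partial_1 V_c(x_c, z_c) \, v + \left\langle \partial_2 V_c(x_c, z_c), \Psi(x_c, z_c, v) \right\rangle = \Phi(x_c, z_c, v) \, v - d_c(x_c, z_c, v)
\]
for every $(x_c, z_c, v)$. This is an equality by construction and not merely an inequality. Multiplying it by $q_c$ is legitimate for any real $q_c$, because only an equality is used, so the sign of $q_c$ plays no role. This rewrites the two remaining terms of the left-hand side as $q_c \Phi(x_c, z_c, v) \, v - q_c \, d_c(x_c, z_c, v)$.

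Finally, adding the two pieces, the terms $\mp q_c \Phi(x_c, z_c, v) \, v$ cancel, leaving $F v - q_c \, d_c(x_c, z_c, v)$, which is the right-hand side of Eq.~\eqref{eq:pass_rel}. There is no genuine obstacle. The only point needing care is to note that $q_c$ ranges over all of $\mathbb{R}$, which is harmless precisely because the definition of $d_c$ is used as an equality rather than the passivity inequality itself.
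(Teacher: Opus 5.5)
Your proof is correct and follows the paper's own argument: the paper likewise uses Definition~\ref{def:passive_law} to rewrite the second line of Eq.~\eqref{eq:pass_rel} as $q_c \Phi(x_c, z_c, v) v - q_c d_c(x_c, z_c, v)$ and cancels this against the $\Phi$-term of the first line. You also point out that $d_c$ enters only through its defining equality, so the sign of $q_c$ is immaterial; this is correct, and the paper leaves it implicit.
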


\begin{proof}
By Definition~\ref{def:passive_law}, the second line of Eq.
\eqref{eq:pass_rel} equals $q_c \Phi(x_c, z_c, v) v - q_c d_c(x_c, z_c, v)$,
which cancels the term in $\Phi$ of the first line.
\end{proof}

\begin{proposition}[Passivity of $\mathcal{H}_r$]\label{prop:pass_Hr}
Let the contact law of Eq. \eqref{eq:law} be passive with the storage function
$V_c$ and the dissipation rate $d_c$. Then $\mathcal{H}_r$ is passive with the
storage function
\begin{equation}\label{eq:V_Hr}
V_{\mathcal{H}_r}(\chi) \triangleq \tfrac{1}{2} m v^2 + V_c(x_c, z_c)
\end{equation}
\end{proposition}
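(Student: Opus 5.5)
We verify the two conditions of Definition~\ref{def:dissipativity} with the supply rate $s(F_{e,r}, v) = F_{e,r} v$ of Definition~\ref{def:passivity}. This is admissible since $n_i = n_o = 1$ for $\mathcal{H}_r$, and the output of Eq.~\eqref{eq:Hr_h} is $v$. Before doing so, we record that $V_{\mathcal{H}_r}$ is continuously differentiable and non-negative, since $\tfrac{1}{2} m v^2$ is a polynomial with $m > 0$ and $V_c$ has both properties by Definition~\ref{def:passive_law}.

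For the flow condition \ref{d:flow}, we compute the gradient of $V_{\mathcal{H}_r}$ with respect to $\chi = (x, v, x_c, z_c, q_c)$. Its components are $(0, m v, \partial_1 V_c, \partial_2 V_c, 0)$. Pairing this with the flow map of Eq.~\eqref{eq:Hr_f} gives
\[
m v \cdot m^{-1}\left(-q_c \Phi(x_c, z_c, v) + F_{e,r}\right) + q_c \, \partial_1 V_c \, v + q_c \left\langle \partial_2 V_c, \Psi(x_c, z_c, v) \right\rangle .
\]
By Lemma~\ref{lem:pass_rel}, applied with $F = F_{e,r}$, this equals $F_{e,r} v - q_c \, d_c(x_c, z_c, v)$. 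On $C$ we have $q_c \in \{0,1\}$, so $q_c \geq 0$, and $d_c \geq 0$. Hence the expression is bounded above by $F_{e,r} v = s(F_{e,r}, h(\chi, F_{e,r}))$. Neither of the defining clauses of $C$ is actually needed for this step; only the membership $q_c \in \{0,1\}$ is used.

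For the jump condition \ref{d:jump}, let $(\chi, F_{e,r}) \in D$. The jump map of Eq.~\eqref{eq:Hr_g} leaves $v$ unchanged and resets $(x_c, z_c)$ to $(0, 0)$. Therefore
\[
V_{\mathcal{H}_r}(g) = \tfrac{1}{2} m v^2 + V_c(0, 0) = \tfrac{1}{2} m v^2 \leq \tfrac{1}{2} m v^2 + V_c(x_c, z_c),
\]
because $V_c(0, 0) = 0$ and $V_c \geq 0$.

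Lemma~\ref{lem:hbc_Hr} supplies the hybrid basic conditions that Definition~\ref{def:dissipativity} presupposes. The main, and rather mild, obstacle is bookkeeping: the passivity inequality of Definition~\ref{def:passive_law} must be invoked only after multiplying by $q_c \geq 0$, and Lemma~\ref{lem:pass_rel} is stated for arbitrary real $q_c$ as an identity. It is the restriction $q_c \in \{0, 1\}$ on $C$ that converts this identity into an inequality.
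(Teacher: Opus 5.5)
Your proposal is correct and follows the paper's proof essentially step for step. The flow condition comes from Lemma~\ref{lem:pass_rel} with $F = F_{e,r}$ together with $q_c \geq 0$ on $C$ and $d_c \geq 0$, and the jump condition from $V_c(0,0) = 0$ and $V_c \geq 0$. Your explicit appeal to Lemma~\ref{lem:hbc_Hr} for the standing hybrid basic conditions is a small piece of bookkeeping that the paper leaves implicit.
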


\begin{proof}
Let $(\chi, F_{e,r}) \in C$. By Eqs. \eqref{eq:Hr_f} and \eqref{eq:V_Hr}, the
left-hand side of Eq. \eqref{eq:pass_rel} with $F = F_{e,r}$ is $\left\langle
\nabla V_{\mathcal{H}_r}(\chi), f(\chi, F_{e,r}) \right\rangle$, whence by
Lemma~\ref{lem:pass_rel}
\begin{equation}\label{eq:pass_Hr_flow}
\left\langle \nabla V_{\mathcal{H}_r}(\chi), f(\chi, F_{e,r}) \right\rangle
  = F_{e,r} \, v - q_c \, d_c(x_c, z_c, v)
\end{equation}
Since $q_c \geq 0$ and $d_c \geq 0$, and since $h(\chi, F_{e,r}) = v$ by Eq.
\eqref{eq:Hr_h}, the right-hand side of Eq. \eqref{eq:pass_Hr_flow} is bounded
above by $F_{e,r} \, h(\chi, F_{e,r})$, which is condition \ref{d:flow}.

Let $(\chi, F_{e,r}) \in D$. By Eqs. \eqref{eq:Hr_g} and \eqref{eq:V_Hr},
\begin{equation}\label{eq:pass_Hr_jump}
V_{\mathcal{H}_r}\left( g(\chi, F_{e,r}) \right) = \tfrac{1}{2} m v^2 +
  V_c(0, 0) \leq V_{\mathcal{H}_r}(\chi)
\end{equation}
by $V_c(0, 0) = 0$ and $V_c \geq 0$, which is condition \ref{d:jump}. The map
$V_{\mathcal{H}_r}$ is continuously differentiable and non-negative, $V_c$
being so.
\end{proof}

\begin{proposition}[Passivity of $\mathcal{H}_f$]\label{prop:pass_Hf}
Let the contact law of Eq. \eqref{eq:law} be passive with the storage function
$V_c$ and the dissipation rate $d_c$. Then $\mathcal{H}_f$ is passive with the
storage function
\begin{equation}\label{eq:V_H}
V_{\mathcal{H}_f}(\chi) \triangleq \tfrac{1}{2} m v^2 +
  \tfrac{1}{2} (m_1 + m_2) v_m^2 + V_c(x_c, z_c)
\end{equation}
\end{proposition}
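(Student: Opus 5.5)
The proof will follow that of Proposition~\ref{prop:pass_Hr}, the only new ingredient being the centre-of-mass term and the fact that the input and output of $\mathcal{H}_f$ are two-dimensional. The supply rate of Definition~\ref{def:passivity} is $\langle (F_{e,1}, F_{e,2}), (h_1, h_2) \rangle$. Expanded by Eq.~\eqref{eq:H_h}, it is
\[
F_{e,1}(v_m + \eta_c v) + F_{e,2}(v_m - \eta v) = (F_{e,1} + F_{e,2}) v_m + (\eta_c F_{e,1} - \eta F_{e,2}) v .
\]
Since $\eta_c F_{e,1} - \eta F_{e,2} = (m_2 F_{e,1} - m_1 F_{e,2})/(m_1 + m_2) = F_{e,r}$ by Eq.~\eqref{eq:u}, the supply equals $(F_{e,1} + F_{e,2}) v_m + F_{e,r} v$. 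This algebraic identity is the one step that needs care; the rest is bookkeeping.

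For condition \ref{d:flow}, I will take $(\chi, (F_{e,1}, F_{e,2})) \in C$ and compute $\langle \nabla V_{\mathcal{H}_f}(\chi), f(\chi, (F_{e,1}, F_{e,2})) \rangle$ from Eqs.~\eqref{eq:H_f} and \eqref{eq:V_H}. The gradient has the components $m v$ in $v$, $(m_1 + m_2) v_m$ in $v_m$, $\partial_1 V_c$ in $x_c$ and $\partial_2 V_c$ in $z_c$, and zero in $x$, $x_m$ and $q_c$. The terms in $v$, $x_c$ and $z_c$ are exactly the left-hand side of Eq.~\eqref{eq:pass_rel} with $F = F_{e,r}$, so Lemma~\ref{lem:pass_rel} reduces them to $F_{e,r} v - q_c d_c(x_c, z_c, v)$. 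The term in $v_m$ is $(m_1 + m_2) v_m \cdot (m_1 + m_2)^{-1}(F_{e,1} + F_{e,2}) = (F_{e,1} + F_{e,2}) v_m$. The derivative of the storage is therefore the supply minus $q_c d_c$. Since $q_c \in \{0, 1\}$ on $C$ and $d_c \geq 0$, condition \ref{d:flow} follows.

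For condition \ref{d:jump}, the jump map of Eq.~\eqref{eq:H_g} leaves $v$ and $v_m$ unchanged and sends $(x_c, z_c)$ to $(0, 0)$. Hence $V_{\mathcal{H}_f}(g) = \tfrac12 m v^2 + \tfrac12 (m_1 + m_2) v_m^2 + V_c(0, 0) \leq V_{\mathcal{H}_f}(\chi)$, using $V_c(0, 0) = 0$ and $V_c \geq 0$. Finally, $V_{\mathcal{H}_f}$ is continuously differentiable and non-negative because $V_c$ is, and $\mathcal{H}_f$ satisfies the hybrid basic conditions by Lemma~\ref{lem:hbc_Hf}, as the framework of the subsection on dissipativity presupposes.
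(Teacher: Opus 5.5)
Your proposal is correct and follows the paper's proof step for step. The flow inequality comes from Lemma~\ref{lem:pass_rel} with $F = F_{e,r}$ plus the centre-of-mass term, combined with the identity $F_{e,r} v + (F_{e,1} + F_{e,2}) v_m = F_{e,1} h_1 + F_{e,2} h_2$, and the jump inequality comes from $V_c(0,0) = 0$ and $V_c \geq 0$; the only differences are cosmetic, namely that you state the supply identity first and justify $q_c \geq 0$ explicitly from $q_c \in \{0, 1\}$ on $C$.
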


\begin{proof}
Let $(\chi, (F_{e,1}, F_{e,2})) \in C$. By Eqs. \eqref{eq:H_f} and
\eqref{eq:V_H}, the quantity $\left\langle \nabla V_{\mathcal{H}_f}(\chi),
f(\chi, (F_{e,1}, F_{e,2})) \right\rangle$ is the sum of the left-hand side of
Eq. \eqref{eq:pass_rel} with $F = F_{e,r}$ and of the term
\[
(m_1 + m_2) v_m \cdot (m_1 + m_2)^{-1} (F_{e,1} + F_{e,2})
\]
whence by Lemma~\ref{lem:pass_rel}
\begin{equation}\label{eq:pass_H_flow}
\begin{split}
& \left\langle \nabla V_{\mathcal{H}_f}(\chi),
  f(\chi, (F_{e,1}, F_{e,2})) \right\rangle \\
& \qquad = F_{e,r} v + (F_{e,1} + F_{e,2}) v_m - q_c \, d_c(x_c, z_c, v)
\end{split}
\end{equation}
By Eqs. \eqref{eq:u} and \eqref{eq:H_h},
\begin{equation}\label{eq:pass_H_power}
\begin{split}
& F_{e,r} v + (F_{e,1} + F_{e,2}) v_m \\
& \qquad = \left( \eta_c F_{e,1} - \eta F_{e,2} \right) v
  + (F_{e,1} + F_{e,2}) v_m \\
& \qquad = F_{e,1} \left( v_m + \eta_c v \right) +
  F_{e,2} \left( v_m - \eta v \right) \\
& \qquad = F_{e,1} h_1(\chi, (F_{e,1}, F_{e,2})) + F_{e,2} h_2(\chi, (F_{e,1}, F_{e,2}))
\end{split}
\end{equation}
Since $q_c \geq 0$ and $d_c \geq 0$, Eqs. \eqref{eq:pass_H_flow} and
\eqref{eq:pass_H_power} give condition \ref{d:flow}.

Let $(\chi, (F_{e,1}, F_{e,2})) \in D$. By Eqs. \eqref{eq:H_g} and
\eqref{eq:V_H},
\begin{equation}\label{eq:pass_H_jump}
\begin{split}
& V_{\mathcal{H}_f}\left( g(\chi, (F_{e,1}, F_{e,2})) \right) \\
& \qquad = \tfrac{1}{2} m v^2 + \tfrac{1}{2} (m_1 + m_2) v_m^2 + V_c(0, 0) \\
& \qquad \leq V_{\mathcal{H}_f}(\chi)
\end{split}
\end{equation}
by $V_c(0, 0) = 0$ and $V_c \geq 0$, which is condition \ref{d:jump}. The map
$V_{\mathcal{H}_f}$ is continuously differentiable and non-negative, $V_c$
being so.
\end{proof}

\subsection{Confinement}

\begin{definition}[Confining contact law]\label{def:confining_law}
A contact law of the form of Eq. \eqref{eq:law} is \emph{confining} if it is
passive in the sense of Definition~\ref{def:passive_law} with a storage
function $V_c$ such that, for every compact $K \subset \mathbb{R}$ and every
$\varrho \in \mathbb{R}_{\geq 0}$, the set
\begin{equation}\label{eq:confining}
\left\{ z_c \in \mathbb{R}^{n_z} : \exists x_c \in K, \;
  V_c(x_c, z_c) \leq \varrho \right\}
\end{equation}
is bounded. Such a $V_c$ is a \emph{confining storage function} for the law.
\end{definition}

\begin{lemma}[Confinement of the internal state]\label{lem:confinement}
Let the contact law of Eq. \eqref{eq:law} be confining with the confining
storage function $V_c$, and let $S \subseteq \mathbb{R} \times
\mathbb{R}^{n_z}$ be such that the sets
\begin{equation}\label{eq:confinement_hyp}
\left\{ x_c \in \mathbb{R} : \exists z_c \in \mathbb{R}^{n_z}, \;
  (x_c, z_c) \in S \right\}
\end{equation}
\begin{equation}\label{eq:confinement_hyp_V}
\left\{ V_c(x_c, z_c) : (x_c, z_c) \in S \right\}
\end{equation}
are bounded. Then the set
\begin{equation}\label{eq:confinement_set}
\left\{ z_c \in \mathbb{R}^{n_z} : \exists x_c \in \mathbb{R}, \;
  (x_c, z_c) \in S \right\}
\end{equation}
is bounded.
\end{lemma}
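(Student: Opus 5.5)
The plan is to reduce the statement directly to Definition~\ref{def:confining_law} by choosing a compact set $K$ and a level $\varrho$ that cover $S$. By hypothesis the set of Eq.~\eqref{eq:confinement_hyp}, the projection of $S$ onto its first factor, is bounded. We therefore fix $a \in \mathbb{R}_{\geq 0}$ such that $\abs{x_c} \leq a$ for every $(x_c, z_c) \in S$, and set $K \triangleq [-a, a]$, which is compact.

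Likewise, the set of Eq.~\eqref{eq:confinement_hyp_V} is bounded. Since $V_c$ takes values in $\mathbb{R}_{\geq 0}$, this set is contained in some interval $[0, \varrho]$ with $\varrho \in \mathbb{R}_{\geq 0}$. If $S$ is empty, the set of Eq.~\eqref{eq:confinement_set} is empty and there is nothing to prove, so any $\varrho \geq 0$ will do.

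Next we show that the set of Eq.~\eqref{eq:confinement_set} is contained in the set of Eq.~\eqref{eq:confining} for this $K$ and this $\varrho$. Let $z_c$ belong to the former. Then there is $x_c$ with $(x_c, z_c) \in S$. By the choice of $a$, $x_c \in K$, and by the choice of $\varrho$, $V_c(x_c, z_c) \leq \varrho$. Hence $z_c$ satisfies the defining condition of Eq.~\eqref{eq:confining}, with this $x_c$ as the witness.

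Since $V_c$ is a confining storage function, the set of Eq.~\eqref{eq:confining} is bounded, and a subset of a bounded set is bounded. No step presents a real obstacle. The only points requiring care are the passage from boundedness of a subset of $\mathbb{R}$ to containment in a compact interval, and the use of the non-negativity of $V_c$, so that one upper bound $\varrho$ suffices.
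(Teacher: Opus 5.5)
Your proof is correct and essentially identical to the paper's: it takes $K \triangleq [-\kappa, \kappa]$ from a bound on the set of Eq.~\eqref{eq:confinement_hyp} and $\varrho$ from a bound on the set of Eq.~\eqref{eq:confinement_hyp_V}. It then places every $z_c$ of Eq.~\eqref{eq:confinement_set} in the set of Eq.~\eqref{eq:confining}, which is bounded by Definition~\ref{def:confining_law}. Your remarks on the empty case and on choosing $\varrho \geq 0$ are harmless extra care that the paper leaves implicit.
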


\begin{proof}
Let $\kappa \in \mathbb{R}_{\geq 0}$ bound the set of Eq.
\eqref{eq:confinement_hyp} and let $\varrho \in \mathbb{R}_{\geq 0}$ bound that
of Eq. \eqref{eq:confinement_hyp_V}. If $(x_c, z_c) \in S$, then $x_c \in
[-\kappa, \kappa]$ and $V_c(x_c, z_c) \leq \varrho$, whence $z_c$ belongs to
the set of Eq. \eqref{eq:confining} with $K \triangleq [-\kappa, \kappa]$,
which is bounded by Definition~\ref{def:confining_law}.
\end{proof}

\begin{proposition}[Sufficient condition for confinement]%
\label{prop:radial_confining}
Let the contact law of Eq. \eqref{eq:law} be passive with the storage function
$V_c$, let
\begin{equation}\label{eq:radial_index}
P(K, r) \triangleq \left\{ (x_c, z_c) \in \mathbb{R} \times \mathbb{R}^{n_z} :
  x_c \in K, \; \abs{z_c} \geq r \right\}
\end{equation}
for $K \subseteq \mathbb{R}$ and $r \in \mathbb{R}_{\geq 0}$, and suppose that,
for every compact $K \subset \mathbb{R}$,
\begin{equation}\label{eq:radial}
\lim_{r \rightarrow +\infty} \; \inf V_c\left[ P(K, r) \right] = +\infty
\end{equation}
Then the law is
confining, with the confining storage function $V_c$.
\end{proposition}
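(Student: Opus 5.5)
The plan is to verify Definition~\ref{def:confining_law} directly. Passivity is assumed, so what remains is, for each compact $K \subset \mathbb{R}$ and each $\varrho \in \mathbb{R}_{\geq 0}$, to show that the set of Eq.~\eqref{eq:confining} is bounded. I would argue by contraposition through the radial growth hypothesis of Eq.~\eqref{eq:radial}.

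Fix a compact $K$ and $\varrho \geq 0$. By Eq.~\eqref{eq:radial} there is some $r_0 \in \mathbb{R}_{\geq 0}$ with
\[
\inf V_c\left[ P(K, r_0) \right] > \varrho .
\]
Such an $r_0$ exists by the definition of a limit equal to $+\infty$, applied with the threshold $\varrho$ (or, say, $\varrho + 1$). If $P(K, r_0)$ is empty, the infimum is $+\infty$ by convention and the inequality holds trivially. This is harmless, and the same convention makes the hypothesis meaningful when $K$ is empty.

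Now let $z_c$ belong to the set of Eq.~\eqref{eq:confining}. Then there is $x_c \in K$ with $V_c(x_c, z_c) \leq \varrho$. Suppose $\abs{z_c} \geq r_0$. Then $(x_c, z_c) \in P(K, r_0)$, so
\[
V_c(x_c, z_c) \geq \inf V_c\left[ P(K, r_0) \right] > \varrho ,
\]
which is a contradiction. Hence $\abs{z_c} < r_0$, and the set is contained in the open ball of radius $r_0$, so it is bounded.

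Since $K$ and $\varrho$ were arbitrary, $V_c$ is a confining storage function and the law is confining. The case $n_z = 0$ is trivial, $\mathbb{R}^0$ being a single point.

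There is no real obstacle here. The only point needing care is the bookkeeping around the infimum over a possibly empty set, together with the choice of a strict threshold so that the contradiction is strict.
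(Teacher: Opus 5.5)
Your proposal is correct and takes the same approach as the paper's proof: use Eq.~\eqref{eq:radial} to pick $r$ with $\inf V_c\left[P(K, r)\right] > \varrho$, then observe that any $z_c$ in the set of Eq.~\eqref{eq:confining} with $\abs{z_c} \geq r$ would place $(x_c, z_c)$ in $P(K, r)$ and contradict $V_c(x_c, z_c) \leq \varrho$, so the set lies in the ball of radius $r$. Your remarks on the empty-set convention for the infimum and on the case $n_z = 0$ are harmless additions that the paper leaves implicit.
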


\begin{proof}
Let $K \subset \mathbb{R}$ be compact, let $\varrho \in \mathbb{R}_{\geq 0}$,
and let $Z$ denote the set of Eq. \eqref{eq:confining}. By Eq.
\eqref{eq:radial} there exists $r \in \mathbb{R}_{>0}$ such that
\begin{equation}\label{eq:radial_choice}
\inf V_c\left[ P(K, r) \right] > \varrho
\end{equation}
Let $z_c \in Z$ and let $x_c \in K$ be such that $V_c(x_c, z_c) \leq \varrho$.
Were $\abs{z_c} \geq r$, the pair $(x_c, z_c)$ would belong to $P(K, r)$,
whence $V_c(x_c, z_c) > \varrho$ by Eq. \eqref{eq:radial_choice}, a
contradiction. Therefore $\abs{z_c} < r$, and $Z$ is bounded.
\end{proof}

\begin{proposition}[Uniform sufficient condition for confinement]%
\label{prop:uniform_confining}
Let the contact law of Eq. \eqref{eq:law} be passive with the storage function
$V_c$, and suppose that there exists $\alpha \in \mathcal{K}_\infty$ such that,
for all $(x_c, z_c) \in \mathbb{R} \times \mathbb{R}^{n_z}$,
\begin{equation}\label{eq:radial_uniform}
V_c(x_c, z_c) \geq \alpha\left( \abs{z_c} \right)
\end{equation}
Then the law is confining, with the confining storage function $V_c$.
\end{proposition}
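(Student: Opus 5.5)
The plan is to reduce the statement to Proposition~\ref{prop:radial_confining}. The law is passive with the storage function $V_c$, so it suffices to verify Eq.~\eqref{eq:radial} for every compact $K \subset \mathbb{R}$. Once that limit is established, Proposition~\ref{prop:radial_confining} yields that the law is confining with the confining storage function $V_c$.

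First, fix a compact $K$ and $r \in \mathbb{R}_{\geq 0}$. If $P(K, r)$ is empty, which happens only when $K$ is empty, the infimum is $+\infty$ by the usual convention and there is nothing to prove. Otherwise, take any $(x_c, z_c) \in P(K, r)$. Then $\abs{z_c} \geq r$, and since $\alpha \in \mathcal{K}_\infty$ is non-decreasing (indeed strictly increasing), Eq.~\eqref{eq:radial_uniform} gives
\[
V_c(x_c, z_c) \geq \alpha\left( \abs{z_c} \right) \geq \alpha(r).
\]
Taking the infimum over $P(K, r)$ gives $\inf V_c\left[ P(K, r) \right] \geq \alpha(r)$. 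Finally, $\alpha(r) \to +\infty$ as $r \to +\infty$, by the unboundedness in the definition of $\mathcal{K}_\infty$, so Eq.~\eqref{eq:radial} follows.

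As an alternative, the definition can be checked directly. For $V_c(x_c, z_c) \leq \varrho$ one has $\alpha(\abs{z_c}) \leq \varrho$. Because $\alpha$ is strictly increasing and unbounded, there is $R$ with $\alpha(R) > \varrho$, so $\abs{z_c} < R$ and the set of Eq.~\eqref{eq:confining} is bounded, uniformly in $K$. There is no real obstacle in either route. The only points to state explicitly are the monotonicity and unboundedness of $\mathcal{K}_\infty$ functions, and the treatment of the empty $K$.
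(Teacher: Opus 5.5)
Your proof is correct and follows essentially the same route as the paper: you bound $\inf V_c[P(K, r)]$ below by $\alpha(r)$ using the monotonicity of $\alpha$, then conclude via the unboundedness of $\alpha$ and Proposition~\ref{prop:radial_confining}. Your remarks on the empty $K$ and the direct alternative argument are sound but not needed.
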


\begin{proof}
Let $K \subset \mathbb{R}$ be compact and let $r \in \mathbb{R}_{\geq 0}$. By
Eqs. \eqref{eq:radial_index} and \eqref{eq:radial_uniform} and the
monotonicity of $\alpha$,
\begin{equation}\label{eq:radial_uniform_inf}
\inf V_c\left[ P(K, r) \right] \geq \alpha(r)
\end{equation}
Since $\alpha(r) \rightarrow +\infty$ as $r \rightarrow +\infty$, Eq.
\eqref{eq:radial} holds, and the assertion follows by
Proposition~\ref{prop:radial_confining}.
\end{proof}

Lemma~\ref{lem:confinement} bounds the internal state upon a set of
configurations. It is applied below to the configurations attained by a
solution pair up to a given ordinary time, and thereby transferred to the time
domain. In each of the two propositions that follow, $\mathcal{K}$ denotes the
system named, $\chi$ its state, $V$ the storage function named alongside it,
$n_s$ its dimension, and $(\chi, u)$ a solution pair to it; the components
$x_c$ and $z_c$ of $\chi$ are the indentation and the internal state of the
contact law, as in Eqs. \eqref{eq:Hr_state} and \eqref{eq:H_state}. In each
case there exists a map $R : \mathbb{R}^{n_s} \longrightarrow \mathbb{R}_{\geq
0}$ such that
\begin{equation}\label{eq:confinement_split}
V(\chi) = V_c(x_c, z_c) + R(\chi)
\end{equation}
for all $\chi \in \mathbb{R}^{n_s}$, namely $R(\chi) \triangleq \tfrac{1}{2} m
v^2$ for $\mathcal{H}_r$ by Eq. \eqref{eq:V_Hr} and $R(\chi) \triangleq
\tfrac{1}{2} m v^2 + \tfrac{1}{2} (m_1 + m_2) v_m^2$ for $\mathcal{H}_f$ by Eq.
\eqref{eq:V_H}.

The two propositions are proved alike. Let $T \in \mathbb{R}_{\geq 0}$ be as in
the statement and let
\begin{equation}\label{eq:confinement_S}
S \triangleq \left\{ \left( x_c(t, j), z_c(t, j) \right) :
  (t, j) \in \dom \chi, \; t \leq T \right\}
\end{equation}
The set of Eq. \eqref{eq:confinement_hyp} is then the first of the two sets
supposed bounded in the statement. By Eq. \eqref{eq:confinement_split} and the
non-negativity of $V_c$ and of $R$,
\begin{equation}\label{eq:confinement_sandwich}
0 \leq V_c\left( x_c(t, j), z_c(t, j) \right) \leq V\left( \chi(t, j) \right)
\end{equation}
for all $(t, j) \in \dom \chi$ with $t \leq T$, whence the set of Eq.
\eqref{eq:confinement_hyp_V} is bounded, the second of the two sets supposed
bounded in the statement being so. Lemma~\ref{lem:confinement} applies, and the
set of Eq. \eqref{eq:confinement_set} is the set asserted to be bounded in the
statement.

\begin{proposition}[Confinement for $\mathcal{H}_r$]\label{prop:confinement_Hr}
Let the contact law of Eq. \eqref{eq:law} be confining with the confining
storage function $V_c$, let $V_{\mathcal{H}_r}$ be given by Eq.
\eqref{eq:V_Hr}, let $(\chi, F_{e,r})$ be a solution pair to $\mathcal{H}_r$,
let $T \in \mathbb{R}_{\geq 0}$, and let the sets
\begin{equation}\label{eq:confinement_Hr_x}
\left\{ x_c(t, j) : (t, j) \in \dom \chi, \; t \leq T \right\}
\end{equation}
\begin{equation}\label{eq:confinement_Hr_V}
\left\{ V_{\mathcal{H}_r}\left( \chi(t, j) \right) :
  (t, j) \in \dom \chi, \; t \leq T \right\}
\end{equation}
be bounded. Then the set
\begin{equation}\label{eq:confinement_Hr_z}
\left\{ z_c(t, j) : (t, j) \in \dom \chi, \; t \leq T \right\}
\end{equation}
is bounded.
\end{proposition}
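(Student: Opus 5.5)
The plan is to follow the common argument that the paper has already set out just before the two propositions, specialized to $\mathcal{H}_r$. I shall take the set
\[
S \triangleq \left\{ \left( x_c(t, j), z_c(t, j) \right) : (t, j) \in \dom \chi, \; t \leq T \right\}
\]
of Eq.~\eqref{eq:confinement_S}. The aim is to check the two hypotheses of Lemma~\ref{lem:confinement} for this $S$ and then read off the conclusion.

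First, the set of Eq.~\eqref{eq:confinement_hyp} formed from $S$ is exactly the set of Eq.~\eqref{eq:confinement_Hr_x}: its elements are the $x_c$ for which some $z_c$ has $(x_c, z_c) \in S$, that is, the values $x_c(t, j)$ with $t \leq T$. It is therefore bounded by hypothesis.

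Second, I use the decomposition of Eq.~\eqref{eq:confinement_split} with $R(\chi) = \tfrac{1}{2} m v^2 \geq 0$, which is immediate from Eq.~\eqref{eq:V_Hr}. Together with $V_c \geq 0$ from Definition~\ref{def:passive_law}, this gives the sandwich of Eq.~\eqref{eq:confinement_sandwich},
\[
0 \leq V_c\left( x_c(t, j), z_c(t, j) \right) \leq V_{\mathcal{H}_r}\left( \chi(t, j) \right),
\]
for all $(t, j) \in \dom \chi$ with $t \leq T$. Every element of the set of Eq.~\eqref{eq:confinement_hyp_V} is of this form, so a bound for the set of Eq.~\eqref{eq:confinement_Hr_V} also bounds it.

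With both hypotheses in hand, Lemma~\ref{lem:confinement} applies to the confining storage function $V_c$. It yields boundedness of the set of Eq.~\eqref{eq:confinement_set}, which is exactly the set of Eq.~\eqref{eq:confinement_Hr_z}.

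I do not expect a genuine obstacle here. The only care needed is in identifying the projected sets with the sets in the statement, which amounts to unwinding the existential quantifiers. One should also note that $V_c$, and not merely $V_{\mathcal{H}_r}$, is the function supposed confining.
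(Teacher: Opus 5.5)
Your proposal is correct and takes the same route as the paper's proof. You form the set $S$ of Eq.~\eqref{eq:confinement_S}, identify its projection with the set of Eq.~\eqref{eq:confinement_Hr_x}, bound $V_c$ along the solution by $V_{\mathcal{H}_r}$ through the split with $R(\chi) = \tfrac{1}{2} m v^2 \geq 0$, and apply Lemma~\ref{lem:confinement}, exactly as the paper does.
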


\begin{proposition}[Confinement for $\mathcal{H}_f$]\label{prop:confinement_Hf}
Let the contact law of Eq. \eqref{eq:law} be confining with the confining
storage function $V_c$, let $V_{\mathcal{H}_f}$ be given by Eq. \eqref{eq:V_H},
let $(\chi, u)$ be a solution pair to $\mathcal{H}_f$, let $T \in
\mathbb{R}_{\geq 0}$, and let the sets
\begin{equation}\label{eq:confinement_Hf_x}
\left\{ x_c(t, j) : (t, j) \in \dom \chi, \; t \leq T \right\}
\end{equation}
\begin{equation}\label{eq:confinement_Hf_V}
\left\{ V_{\mathcal{H}_f}\left( \chi(t, j) \right) :
  (t, j) \in \dom \chi, \; t \leq T \right\}
\end{equation}
be bounded. Then the set
\begin{equation}\label{eq:confinement_Hf_z}
\left\{ z_c(t, j) : (t, j) \in \dom \chi, \; t \leq T \right\}
\end{equation}
is bounded.
\end{proposition}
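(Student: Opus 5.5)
The plan follows the template already laid down before the two propositions, applied to $\mathcal{H}_f$ with the decomposition of Eq. \eqref{eq:confinement_split}, namely $R(\chi) \triangleq \tfrac{1}{2} m v^2 + \tfrac{1}{2} (m_1 + m_2) v_m^2$, which is non-negative since $m, m_1 + m_2 \in \mathbb{R}_{>0}$. We fix $T$ as in the statement and form the set $S$ of Eq. \eqref{eq:confinement_S} from the components $x_c$ and $z_c$ of the solution pair $(\chi, u)$ to $\mathcal{H}_f$. The goal is to verify the two hypotheses of Lemma~\ref{lem:confinement} for this $S$ and then read off its conclusion.

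\emph{First hypothesis.} The set of Eq. \eqref{eq:confinement_hyp} for this $S$ consists of the first coordinates of its points. It is therefore exactly the set of Eq. \eqref{eq:confinement_Hf_x}, which is bounded by assumption.

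\emph{Second hypothesis.} By Eq. \eqref{eq:V_H}, $V_{\mathcal{H}_f}(\chi) = V_c(x_c, z_c) + R(\chi)$ holds identically. Since $V_c$ and $R$ are both non-negative, the sandwich of Eq. \eqref{eq:confinement_sandwich} holds at every $(t, j) \in \dom \chi$ with $t \leq T$. Consequently, if $\varrho$ bounds the set of Eq. \eqref{eq:confinement_Hf_V}, then every element of the set of Eq. \eqref{eq:confinement_hyp_V} lies in $[0, \varrho]$, and that set is bounded.

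\emph{Conclusion.} Lemma~\ref{lem:confinement} applies, $V_c$ being a confining storage function by hypothesis. It yields that the set of Eq. \eqref{eq:confinement_set} is bounded, and for this $S$ that set coincides with the set of Eq. \eqref{eq:confinement_Hf_z}.

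There is no genuine obstacle in this argument. The only point needing care is the identification of the sets, which must be matched element by element:
\begin{itemize}
\item the projections of $S$ with the sets indexed by $(t, j) \in \dom \chi$ with $t \leq T$;
\item $x_c$ and $z_c$ with the fifth and sixth components of the state of Eq. \eqref{eq:H_state}.
\end{itemize}
The remaining components $v_m$ and $v$ enter only through $R$, and $R$ is discarded by non-negativity.
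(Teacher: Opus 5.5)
Your proposal is correct and follows the paper's argument: you take $R(\chi) = \tfrac{1}{2} m v^2 + \tfrac{1}{2}(m_1+m_2) v_m^2$ and form the set $S$ of Eq.~\eqref{eq:confinement_S}. You identify its projections with the sets of Eqs.~\eqref{eq:confinement_Hf_x} and \eqref{eq:confinement_Hf_z}, bound the values of $V_c$ on $S$ through the sandwich of Eq.~\eqref{eq:confinement_sandwich}, and conclude by Lemma~\ref{lem:confinement}. This is exactly the common argument the paper states before Propositions~\ref{prop:confinement_Hr} and~\ref{prop:confinement_Hf}, and, as your argument makes clear, it uses no property of the dynamics of the solution pair.
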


\subsection{Existence}

\begin{lemma}[Covering]\label{lem:covering}
Let $\mathcal{K}$ be either of $\mathcal{H}_f$ and $\mathcal{H}_r$, with the
flow set $C$, the jump set $D$, and the dimensions $n_s$ and $n_i$, and let
$\mathcal{S}$ be given by Eq. \eqref{eq:S_regimes}. Then
\begin{equation}\label{eq:covering_H}
\left( \mathcal{S} \times \mathbb{R}^{n_i} \right) \subseteq C \cup D
\end{equation}
\end{lemma}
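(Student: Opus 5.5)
The argument is a direct case analysis on the discrete component of the state; nothing beyond the definitions of $C$ and $D$ in Eqs. \eqref{eq:H_C} and \eqref{eq:H_D} is needed. Both $\mathcal{H}_f$ and $\mathcal{H}_r$ use the same flow and jump sets, read in the appropriate ambient space. These sets depend only on $q_c$, $x$, $v$, and the scalar $\Phi(x_c, z_c, v)$. The input plays no role in membership, so the argument is the same for both systems and for every input value.

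Let $(\chi, u) \in \mathcal{S} \times \mathbb{R}^{n_i}$. By Eq. \eqref{eq:S_regimes}, $q_c \in \{0, 1\}$, and I split into these two cases.

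\emph{Case $q_c = 0$.} The real number $x$ satisfies $x \geq 0$ or $x < 0$, and likewise $v \geq 0$ or $v < 0$. If $x \geq 0$ or $v \geq 0$, the pair lies in the first branch of $C$. Otherwise $x < 0$ and $v < 0$, so $x \leq 0$ and $v \leq 0$, and the pair lies in the first branch of $D$.

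\emph{Case $q_c = 1$.} Write $\varphi \triangleq \Phi(x_c, z_c, v)$. If $\varphi \leq 0$ or $v \leq 0$, the pair lies in the second branch of $C$. Otherwise $\varphi > 0$ and $v > 0$, so $\varphi \geq 0$ and $v \geq 0$, and the pair lies in the second branch of $D$.

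Put differently, in each regime the defining condition of the jump branch is implied by the negation of the defining condition of the flow branch. The two sets even overlap on the boundaries. The union is therefore all of $\mathcal{S} \times \mathbb{R}^{n_i}$, which is Eq. \eqref{eq:covering_H}.

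I expect no real obstacle. The only point needing care is bookkeeping: the sets $C$ and $D$ for $\mathcal{H}_r$ are those of Eqs. \eqref{eq:H_C} and \eqref{eq:H_D} reinterpreted in the smaller state and input spaces. Since the conditions involve only components common to both state vectors, the same argument applies verbatim.
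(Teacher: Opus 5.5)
Your proof is correct and is essentially the paper's argument. The paper supposes $(\chi, u) \notin C$ and, splitting on $q_c \in \{0,1\}$, reads off either $x<0,\ v<0$ or $\Phi>0,\ v>0$ to place the pair in $D$, which is your case analysis phrased contrapositively.
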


\begin{proof}
Let $(\chi, u) \in \mathcal{S} \times \mathbb{R}^{n_i}$ and suppose that
$(\chi, u) \notin C$. If $q_c = 0$, then $x < 0$ and $v < 0$ by Eq.
\eqref{eq:H_C}, whence $(\chi, u) \in D$ by Eq. \eqref{eq:H_D}; and if $q_c =
1$, then $\Phi(x_c, z_c, v) > 0$ and $v > 0$, whence likewise. The two cases
are exhaustive, $\chi$ belonging to $\mathcal{S}$.
\end{proof}

\begin{lemma}[Output bound for $\mathcal{H}_f$]\label{lem:output_Hf}
Let the contact law of Eq. \eqref{eq:law} be passive with the storage function
$V_c$, let $V_{\mathcal{H}_f}$ be given by Eq. \eqref{eq:V_H}, and let
\begin{equation}\label{eq:gamma_H}
\gamma \triangleq \sqrt{2 / \min \{ m_1, m_2 \}}
\end{equation}
Then, for all $(\chi, u) \in \mathbb{R}^{n_s} \times \mathbb{R}^{n_i}$, in
which $n_s$ and $n_i$ are the dimensions of $\mathcal{H}_f$,
\begin{equation}\label{eq:output_H}
\abs{h(\chi, u)} \leq \gamma \sqrt{V_{\mathcal{H}_f}(\chi)}
\end{equation}
\end{lemma}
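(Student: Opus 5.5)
The plan is to reduce the output to the two velocities of the centres of mass and bound each through the kinetic part of $V_{\mathcal{H}_f}$, discarding $V_c$. By Eq. \eqref{eq:H_h}, $h(\chi, u) = (v_m + \eta_c v, \, v_m - \eta v)$, which by Eqs. \eqref{eq:v_m_1} and \eqref{eq:v_m_2} is the pair $(v_{m,1}, v_{m,2})$. Since $V_c \geq 0$, Eq. \eqref{eq:V_H} gives
\[
V_{\mathcal{H}_f}(\chi) \geq \tfrac{1}{2} m v^2 + \tfrac{1}{2}(m_1 + m_2) v_m^2 ,
\]
and so it suffices to prove
\[
\abs{h(\chi,u)}^2 \leq \gamma^2 \left( \tfrac{1}{2} m v^2 + \tfrac{1}{2}(m_1+m_2) v_m^2 \right).
\]
Taking square roots of this inequality then yields Eq. \eqref{eq:output_H}.

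The key step is the identity of König type for the kinetic energy of two bodies:
\[
\tfrac{1}{2} m v^2 + \tfrac{1}{2}(m_1+m_2) v_m^2 = \tfrac{1}{2} m_1 (v_m + \eta_c v)^2 + \tfrac{1}{2} m_2 (v_m - \eta v)^2 .
\]
To verify it, expand the right-hand side. The cross terms are $m_1 \eta_c v_m v - m_2 \eta v_m v$, which cancel because $m_1 \eta_c = m_2 \eta = m_1 m_2/(m_1+m_2)$. The $v^2$ terms are $\tfrac{1}{2}(m_1 \eta_c^2 + m_2 \eta^2) v^2 = \tfrac{1}{2} m (\eta_c + \eta) v^2 = \tfrac{1}{2} m v^2$. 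The $v_m^2$ terms are $\tfrac{1}{2}(m_1+m_2) v_m^2$. This agrees with the left-hand side.

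From the identity, with $\mu \triangleq \min\{m_1, m_2\}$ and $\abs{\cdot}$ the Euclidean norm on $\mathbb{R}^2$,
\[
\tfrac{1}{2}\mu \abs{h(\chi,u)}^2 = \tfrac{1}{2}\mu \left( v_{m,1}^2 + v_{m,2}^2 \right) \leq \tfrac{1}{2} m_1 v_{m,1}^2 + \tfrac{1}{2} m_2 v_{m,2}^2 \leq V_{\mathcal{H}_f}(\chi).
\]
This gives $\abs{h(\chi,u)}^2 \leq (2/\mu) V_{\mathcal{H}_f}(\chi) = \gamma^2 V_{\mathcal{H}_f}(\chi)$, which is the required bound. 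The inequality holds for every $(\chi, u)$, because neither $C$ nor $D$ is involved and $h$ does not depend on $u$.

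There is no real obstacle here. The only step that needs care is the algebraic verification of the kinetic energy identity, in particular the cancellation of the cross terms, which uses the relations $\eta + \eta_c = 1$ and $m = m_1 \eta_c = m_2 \eta$. The identity also makes it clear that the constant $\gamma$ of Eq. \eqref{eq:gamma_H} comes from comparing the weighted norm with weights $m_1, m_2$ to the Euclidean norm.
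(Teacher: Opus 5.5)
Your proposal is correct and follows the paper's proof essentially step for step: the paper also uses the two-body kinetic energy identity of Eq.~\eqref{eq:kinetic}, bounds its left-hand side below by $\tfrac{1}{2}\min\{m_1,m_2\}\abs{h(\chi,u)}^2$, and bounds its right-hand side above by $V_{\mathcal{H}_f}(\chi)$ using $V_c \geq 0$. Your explicit check of the cross-term cancellation via $m_1\eta_c = m_2\eta = m$ simply supplies the algebra that the paper leaves implicit.
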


\begin{proof}
By Eq. \eqref{eq:H_h} and the definitions of $m$, $\eta$, and $\eta_c$,
\begin{equation}\label{eq:kinetic}
\begin{split}
& \tfrac{1}{2} m_1 h_1(\chi, u)^2 + \tfrac{1}{2} m_2 h_2(\chi, u)^2 \\
& \qquad = \tfrac{1}{2} m v^2 + \tfrac{1}{2} (m_1 + m_2) v_m^2
\end{split}
\end{equation}
The left-hand side of Eq. \eqref{eq:kinetic} is bounded below by $\tfrac{1}{2} \min \{ m_1, m_2 \}
\abs{h(\chi, u)}^2$, and the right-hand side is bounded above by
$V_{\mathcal{H}_f}(\chi)$ by Eq. \eqref{eq:V_H}, the map $V_c$ being
non-negative. The proof follows.
\end{proof}

\begin{lemma}[Output bound for $\mathcal{H}_r$]\label{lem:output_Hr}
Let the contact law of Eq. \eqref{eq:law} be passive with the storage function
$V_c$, let $V_{\mathcal{H}_r}$ be given by Eq. \eqref{eq:V_Hr}, let $n_s$ and
$n_i$ be the dimensions of $\mathcal{H}_r$, and let
\begin{equation}\label{eq:gamma_Hr}
\gamma_r \triangleq \sqrt{2 / m}
\end{equation}
Then, for all $(\chi, F_{e,r}) \in \mathbb{R}^{n_s} \times \mathbb{R}^{n_i}$,
\begin{equation}\label{eq:output_Hr}
\abs{h(\chi, F_{e,r})} \leq \gamma_r \sqrt{V_{\mathcal{H}_r}(\chi)}
\end{equation}
\end{lemma}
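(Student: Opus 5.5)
The bound follows the pattern of Lemma~\ref{lem:output_Hf}, and is in fact simpler, since the output of $\mathcal{H}_r$ is the relative velocity alone. By Eq.~\eqref{eq:Hr_h}, $h(\chi, F_{e,r}) = v$ for every $(\chi, F_{e,r})$, so it suffices to bound $v^2$ by a multiple of $V_{\mathcal{H}_r}(\chi)$.

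The first step is to split the storage function as in Eq.~\eqref{eq:confinement_split}, with $R(\chi) = \tfrac{1}{2} m v^2$. Since $V_c$ maps into $\mathbb{R}_{\geq 0}$ by Definition~\ref{def:passive_law}, Eq.~\eqref{eq:V_Hr} gives
\[
\tfrac{1}{2} m v^2 \leq \tfrac{1}{2} m v^2 + V_c(x_c, z_c) = V_{\mathcal{H}_r}(\chi).
\]

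Next, multiply by $2/m$, which is legitimate because $m \in \mathbb{R}_{>0}$. This yields $\abs{v}^2 \leq (2/m) V_{\mathcal{H}_r}(\chi) = \gamma_r^2 V_{\mathcal{H}_r}(\chi)$ by Eq.~\eqref{eq:gamma_Hr}.

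Finally, take square roots, using the monotonicity of the square root on $\mathbb{R}_{\geq 0}$ and the non-negativity of $V_{\mathcal{H}_r}$. This gives $\abs{h(\chi, F_{e,r})} = \abs{v} \leq \gamma_r \sqrt{V_{\mathcal{H}_r}(\chi)}$, which is Eq.~\eqref{eq:output_Hr}.

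No step is a genuine obstacle. The only point needing care is that the inequality must hold for \emph{all} $(\chi, F_{e,r})$, including states with $q_c \notin \{0,1\}$ or with $V_c$ evaluated away from the origin. The argument above uses nothing but the global non-negativity of $V_c$, so it covers every such state.
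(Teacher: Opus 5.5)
Your proof is correct and is essentially the paper's own argument: $h(\chi, F_{e,r}) = v$ by Eq.~\eqref{eq:Hr_h}, and $\tfrac{1}{2} m v^2 \leq V_{\mathcal{H}_r}(\chi)$ by Eq.~\eqref{eq:V_Hr} because $V_c$ is non-negative. The paper leaves the multiplication by $2/m$ and the square root implicit, and you make them explicit.
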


\begin{proof}
By Eq. \eqref{eq:Hr_h}, $h(\chi, F_{e,r}) = v$, and by Eq. \eqref{eq:V_Hr} the
quantity $\tfrac{1}{2} m v^2$ is bounded above by $V_{\mathcal{H}_r}(\chi)$,
the map $V_c$ being non-negative.
\end{proof}

\begin{lemma}[Bound upon an integrated component]\label{lem:integrated}
Let $\mathcal{K} = (C, f, D, g, h)$ be a hybrid system with inputs and outputs
with the dimensions $n_s$, $n_i$, and $n_o$, let $\iota_1, \iota_2 \in \{ 1,
\dots, n_s \}$, and let $\rho : \mathbb{R}^{n_s} \times \mathbb{R}^{n_i}
\longrightarrow \mathbb{R}$ be such that, for all $(\chi, u) \in
\mathbb{R}^{n_s} \times \mathbb{R}^{n_i}$,
\begin{equation}\label{eq:integrated_f}
f_{\iota_1}(\chi, u) = \rho(\chi, u) \, \chi_{\iota_2}
\end{equation}
\begin{equation}\label{eq:integrated_g}
\abs{g_{\iota_1}(\chi, u)} \leq \abs{\chi_{\iota_1}}
\end{equation}
Let $(\chi, u)$ be a solution pair to $\mathcal{K}$, let $T \in
\mathbb{R}_{\geq 0}$, and let $\kappa \in \mathbb{R}_{\geq 0}$ be such that,
for all $(t, j) \in \dom \chi$ with $t \leq T$,
\begin{equation}\label{eq:integrated_hyp_rho}
\abs{\rho\left( \chi(t, j), u(t, j) \right)} \leq 1
\end{equation}
\begin{equation}\label{eq:integrated_hyp_k}
\abs{\chi_{\iota_2}(t, j)} \leq \kappa
\end{equation}
Then
\begin{equation}\label{eq:integrated_bound}
\abs{\chi_{\iota_1}(t, j)} \leq \abs{\chi_{\iota_1}(0, 0)} + \kappa T
\end{equation}
for all $(t, j) \in \dom \chi$ with $t \leq T$.
\end{lemma}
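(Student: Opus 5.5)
The plan is to bound the growth of $\abs{\chi_{\iota_1}}$ separately upon each interval of flow and across each transition, and then to sum the contributions along the hybrid time domain. Let $(t, j) \in \dom \chi$ with $t \leq T$, and let $\{ t_k \}_{k = 0}^{j + 1}$ generate the compact hybrid time domain $E_{(t, j)}$, so that $t_0 = 0$ and $t_{j + 1} = t$. The inequality to be proved by induction upon $k \in \{ 0, 1, \dots, j \}$ is
\[
\abs{\chi_{\iota_1}(\tau, k)} \leq \abs{\chi_{\iota_1}(0, 0)} + \kappa \tau
\]
for all $\tau \in [t_k, t_{k + 1}]$. Evaluating it at $(\tau, k) = (t, j)$ then gives Eq. \eqref{eq:integrated_bound}, since $t \leq T$.

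\emph{Flow step.} Fix $k$ and suppose the bound holds at $(t_k, k)$. If $t_k = t_{k + 1}$, there is nothing to prove. Otherwise $k \in \mathcal{J}_E$, and by Definition~\ref{def:arc} the map $\chi_{\iota_1}(\cdot, k)$ is continuously differentiable upon $[t_k, t_{k + 1}]$. By condition \ref{s:flow} and Eq. \eqref{eq:integrated_f}, its derivative at interior points equals $\rho(\chi, u)\,\chi_{\iota_2}$. By Eqs. \eqref{eq:integrated_hyp_rho} and \eqref{eq:integrated_hyp_k}, this derivative has absolute value at most $\kappa$ there, all points of $[t_k, t_{k+1}] \times \{k\}$ having ordinary time at most $T$. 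By continuity of the derivative, the same bound holds at the endpoints. The fundamental theorem of calculus then gives
\[
\abs{\chi_{\iota_1}(\tau, k)} \leq \abs{\chi_{\iota_1}(t_k, k)} + \kappa (\tau - t_k),
\]
and combining this with the inductive hypothesis at $t_k$ yields the bound upon the whole interval.

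\emph{Jump step.} For $k < j$, the point $(t_{k + 1}, k + 1)$ belongs to the domain. Condition \ref{s:jump} and Eq. \eqref{eq:integrated_g} then give
\[
\abs{\chi_{\iota_1}(t_{k + 1}, k + 1)} \leq \abs{\chi_{\iota_1}(t_{k + 1}, k)} \leq \abs{\chi_{\iota_1}(0, 0)} + \kappa t_{k + 1},
\]
which is the hypothesis needed to start the next flow step. The base case $k = 0$ starts from $\tau = t_0 = 0$, where the bound is trivial.

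The only delicate point is the flow step's use of the derivative bound. Condition \ref{s:flow} furnishes the flow equation only upon $\interior I_E^k$, and the hypotheses of Eqs. \eqref{eq:integrated_hyp_rho} and \eqref{eq:integrated_hyp_k} are evaluated along the solution rather than pointwise upon the data. I expect to handle this by integrating over $[t_k, t_{k+1}]$, which is insensitive to the two endpoints, or equivalently by appealing to the continuity of $\dot\chi_{\iota_1}(\cdot, k)$ as above. No other obstacle is anticipated.
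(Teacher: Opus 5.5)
Your proposal is correct and follows the paper's own proof. On each nondegenerate flow interval you bound $\abs{\dot{\chi}_{\iota_1}}$ by $\kappa$ via condition \ref{s:flow}, extend the bound to the endpoints by continuity of the derivative, use condition \ref{s:jump} with Eq.~\eqref{eq:integrated_g} to show that $\abs{\chi_{\iota_1}}$ does not grow across transitions, and induct along the generating sequence of the truncated domain. Your induction carries the bound at every $\tau$ in each flow interval rather than only at the endpoints, which is an inessential strengthening.
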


\begin{proof}
Let $(T', J) \in \dom \chi$ with $T' \leq T$ and let $\{ t_j \}_{j = 0}^{J +
1}$ generate $E_{(T', J)}$. For $j \in \{ 0, 1, \dots, J \}$ and $t \in
\interior I_{\dom \chi}^{j}$ with $t \leq T$, condition \ref{s:flow} and Eqs.
\eqref{eq:integrated_f}, \eqref{eq:integrated_hyp_rho} and
\eqref{eq:integrated_hyp_k} give $\abs{\dot{\chi}_{\iota_1}(t, j)} \leq
\kappa$; the map $\dot{\chi}_{\iota_1}(\cdot, j)$ being continuous upon
$I_{\dom \chi}^{j}$ by Definition~\ref{def:arc}, the same bound holds upon
$[t_j, t_{j + 1}]$, whence
\begin{equation}\label{eq:integrated_flow}
\abs{\chi_{\iota_1}(t_{j + 1}, j)} \leq \abs{\chi_{\iota_1}(t_j, j)}
  + \kappa \left( t_{j + 1} - t_j \right)
\end{equation}
By condition \ref{s:jump} and Eq. \eqref{eq:integrated_g},
\begin{equation}\label{eq:integrated_jump}
\abs{\chi_{\iota_1}(t_{j + 1}, j + 1)} \leq
  \abs{\chi_{\iota_1}(t_{j + 1}, j)}
\end{equation}
for $j \in \{ 0, 1, \dots, J - 1 \}$. Induction upon $j$ by means of Eqs.
\eqref{eq:integrated_flow} and \eqref{eq:integrated_jump} gives
$\abs{\chi_{\iota_1}(T', J)} \leq \abs{\chi_{\iota_1}(0, 0)} + \kappa T'$,
whence Eq. \eqref{eq:integrated_bound}.
\end{proof}

\begin{theorem}[Completeness of $\mathcal{H}_f$]\label{thm:completeness}
Let the contact law of Eq. \eqref{eq:law} be confining with the confining
storage function $V_c$, let $V_{\mathcal{H}_f}$ be given by Eq. \eqref{eq:V_H},
let $n_s$ and $n_i$ be the dimensions of $\mathcal{H}_f$, let $w :
\mathbb{R}_{\geq 0} \longrightarrow \mathbb{R}^2$ be continuous and bounded,
and let $(\chi, u)$ be a maximal solution pair to $\mathcal{H}_f$ such that
$\chi(0, 0) \in \mathcal{S}$, with $\mathcal{S}$ given by Eq.
\eqref{eq:S_regimes}, and such that $u$ is the hybrid signal induced by $w$
upon $\dom \chi$. Then $(\chi, u)$ is complete.
\end{theorem}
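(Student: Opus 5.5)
The plan is to reduce to Proposition~\ref{prop:completeness}, with $\mathcal{S}$ from Eq.~\eqref{eq:S_regimes}. Its hypotheses are supplied as follows:
\begin{itemize}[leftmargin=*, nosep]
\item the hybrid basic conditions, by Lemma~\ref{lem:hbc_Hf};
\item forward invariance of $\mathcal{S}$, by Lemma~\ref{lem:invariant};
\item the covering condition of Eq.~\eqref{eq:covering_S}, by Lemma~\ref{lem:covering}.
\end{itemize}
What remains is to show that, for each $T \in \mathbb{R}_{\geq 0}$, the set $\{ \chi(t, j) : (t, j) \in \dom \chi, \; t \leq T \}$ is bounded. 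I would bound the components one group at a time.

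\textbf{Storage and velocities.} Since a confining law is passive, Proposition~\ref{prop:pass_Hf} gives dissipativity of $\mathcal{H}_f$ with storage $V_{\mathcal{H}_f}$ and supply $s(u, y) = \langle u, y \rangle$. By Cauchy--Schwarz and Lemma~\ref{lem:output_Hf}, $s(u, h(\chi, u)) \leq \abs{u} \gamma \sqrt{V_{\mathcal{H}_f}(\chi)}$ on all of $C$, which is Eq.~\eqref{eq:supply_bound}. Proposition~\ref{prop:growth}, combined with Proposition~\ref{prop:induced_integral}, then yields
\[
\sqrt{V_{\mathcal{H}_f}(\chi(t, j))} \leq \sqrt{V_{\mathcal{H}_f}(\chi(0, 0))} + \gamma M T,
\]
with $M \triangleq \sup \abs{w}$, for all $(t, j) \in \dom \chi$ with $t \leq T$. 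Because $\tfrac{1}{2} m v^2 \leq V_{\mathcal{H}_f}$ and $\tfrac{1}{2}(m_1 + m_2) v_m^2 \leq V_{\mathcal{H}_f}$, both $v$ and $v_m$ are bounded by some $\kappa$ up to time $T$.

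\textbf{Displacements.} I would apply Lemma~\ref{lem:integrated} three times:
\begin{itemize}[leftmargin=*, nosep]
\item to $x$, with $\rho \equiv 1$ and $\chi_{\iota_2} = v$;
\item to $x_m$, with $\rho \equiv 1$ and $\chi_{\iota_2} = v_m$;
\item to $x_c$, with $\rho = q_c$ and $\chi_{\iota_2} = v$.
\end{itemize}
In each case the jump condition of Eq.~\eqref{eq:integrated_g} holds by Eq.~\eqref{eq:H_g}: $x$ and $x_m$ are unchanged at jumps, and $x_c$ is reset to $0$. For $x_c$, the requirement $\abs{\rho} \leq 1$ along the solution follows from Lemma~\ref{lem:invariant}, since $q_c \in \{0, 1\}$. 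This gives $\abs{x_c} \leq \kappa T$, $\abs{x} \leq \abs{x(0,0)} + \kappa T$, and similarly for $x_m$.

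\textbf{Internal state.} Proposition~\ref{prop:confinement_Hf} applies, because the set of values of $x_c$ and the set of values of $V_{\mathcal{H}_f}$ up to $T$ are both bounded. It gives boundedness of $z_c$ up to $T$. Together with $q_c \in \{0, 1\}$, this shows the required set is bounded, and Proposition~\ref{prop:completeness} yields completeness.

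The main obstacle I expect is making sure Proposition~\ref{prop:growth} is legitimately applicable. First, its standing hypothesis requires $\mathcal{H}_f$ to satisfy the hybrid basic conditions, which Lemma~\ref{lem:hbc_Hf} supplies. Second, the bound of Eq.~\eqref{eq:supply_bound} must hold at every point of $C$, which it does since it is derived above without using membership in $C$. Third, the hybrid integral $\int_{E_{(t,j)}} \abs{u}$ must be controlled via Proposition~\ref{prop:induced_integral} by $\int_0^t \abs{w} \leq M t$, uniformly over the possibly many jumps before time $T$. Everything else is bookkeeping with the earlier lemmas.
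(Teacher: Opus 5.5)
Your proposal is correct and matches the paper's proof essentially step for step: the growth bound on $V_{\mathcal{H}_f}$ via Propositions~\ref{prop:growth} and~\ref{prop:induced_integral}, the same three applications of Lemma~\ref{lem:integrated} with $\rho \in \{1, 1, q_c\}$, Proposition~\ref{prop:confinement_Hf} for $z_c$, and then Proposition~\ref{prop:completeness}. The only slip is that the theorem assumes merely $\chi(0,0) \in \mathcal{S}$, so $x_c(0,0)$ need not vanish and the bound should read $\abs{x_c(t,j)} \leq \abs{x_c(0,0)} + \kappa T$, as Lemma~\ref{lem:integrated} gives; boundedness is unaffected.
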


\begin{proof}
Let $\mu \in \mathbb{R}_{\geq 0}$ be such that $\abs{w(t)} \leq \mu$ for all $t
\in \mathbb{R}_{\geq 0}$, let $T \in \mathbb{R}_{\geq 0}$, and let $(t, j) \in
\dom \chi$ with $t \leq T$. By Lemma~\ref{lem:invariant}, $q_c(t, j) \in \{ 0,
1 \}$.

By Proposition~\ref{prop:pass_Hf} and Definition~\ref{def:passivity}, the
system $\mathcal{H}_f$ is dissipative with respect to the supply rate $s(u, y)
= \left\langle u, y \right\rangle$ with the storage function
$V_{\mathcal{H}_f}$, and by Lemma~\ref{lem:output_Hf} and the Cauchy--Schwarz
inequality,
\begin{equation}\label{eq:completeness_supply}
\left\langle u, h(\chi, u) \right\rangle \leq
  \gamma \abs{u} \sqrt{V_{\mathcal{H}_f}(\chi)}
\end{equation}
for all $(\chi, u) \in \mathbb{R}^{n_s} \times \mathbb{R}^{n_i}$, with $\gamma$
given by Eq. \eqref{eq:gamma_H}, which is the hypothesis of
Proposition~\ref{prop:growth}. That proposition and
Proposition~\ref{prop:induced_integral} give
\begin{equation}\label{eq:completeness_V}
\sqrt{V_{\mathcal{H}_f}\left( \chi(t, j) \right)} \leq
  \sqrt{V_{\mathcal{H}_f}\left( \chi(0, 0) \right)} + \gamma \mu T
\end{equation}
Let
\begin{equation}\label{eq:completeness_rho}
\varrho_T \triangleq \left( \sqrt{V_{\mathcal{H}_f}\left( \chi(0, 0) \right)}
  + \gamma \mu T \right)^2
\end{equation}
so that $V_{\mathcal{H}_f}\left( \chi(t, j) \right) \leq \varrho_T$ by Eq.
\eqref{eq:completeness_V}. Let
\begin{equation}\label{eq:completeness_kappa_v}
\kappa_v \triangleq \sqrt{2 \varrho_T / m}
\end{equation}
\begin{equation}\label{eq:completeness_kappa_m}
\kappa_m \triangleq \sqrt{2 \varrho_T / (m_1 + m_2)}
\end{equation}
By Eq. \eqref{eq:V_H}, the quantity $\tfrac{1}{2} m v(t, j)^2$ is bounded above
by $V_{\mathcal{H}_f}\left( \chi(t, j) \right)$, the remaining summands
$\tfrac{1}{2} (m_1 + m_2) v_m(t, j)^2$ and $V_c\left( x_c(t, j), z_c(t, j)
\right)$ being non-negative, whence
\begin{equation}\label{eq:completeness_v}
\abs{v(t, j)} \leq \kappa_v
\end{equation}
and likewise $\tfrac{1}{2} (m_1 + m_2) v_m(t, j)^2$ is bounded above by
$V_{\mathcal{H}_f}\left( \chi(t, j) \right)$, whence
\begin{equation}\label{eq:completeness_vm}
\abs{v_m(t, j)} \leq \kappa_m
\end{equation}

The components $x$, $v$, $x_m$, $v_m$, $x_c$, and $q_c$ of the state are the
components $1$, $2$, $3$, $4$, $5$, and $7$ respectively.
Lemma~\ref{lem:integrated} is applied three times, the applications being
indexed by $\ell \in \{ 1, 2, 3 \}$. For each such $\ell$, let $\iota_1^\ell,
\iota_2^\ell \in \{ 1, \dots, n_s \}$ be given by
\begin{equation}\label{eq:completeness_iota_1}
\left( \iota_1^1, \iota_2^1 \right) \triangleq (1, 2)
\end{equation}
\begin{equation}\label{eq:completeness_iota_2}
\left( \iota_1^2, \iota_2^2 \right) \triangleq (3, 4)
\end{equation}
\begin{equation}\label{eq:completeness_iota_3}
\left( \iota_1^3, \iota_2^3 \right) \triangleq (5, 2)
\end{equation}
and let $\rho^\ell : \mathbb{R}^{n_s} \times \mathbb{R}^{n_i}
\longrightarrow \mathbb{R}$ be given by
\begin{equation}\label{eq:completeness_rho_1}
\rho^1(\chi, u) \triangleq 1
\end{equation}
\begin{equation}\label{eq:completeness_rho_2}
\rho^2(\chi, u) \triangleq 1
\end{equation}
\begin{equation}\label{eq:completeness_rho_3}
\rho^3(\chi, u) \triangleq \chi_7
\end{equation}
In the $\ell$th application, the indices $\iota_1^\ell$ and $\iota_2^\ell$ and
the map $\rho^\ell$ are those of Eqs. \eqref{eq:integrated_f} to
\eqref{eq:integrated_hyp_k}.

By Eq. \eqref{eq:H_f}, the hypothesis of Eq. \eqref{eq:integrated_f} holds for
each $\ell \in \{ 1, 2, 3 \}$, since
\begin{equation}\label{eq:completeness_f_1}
f_1(\chi, u) = v
\end{equation}
\begin{equation}\label{eq:completeness_f_3}
f_3(\chi, u) = v_m
\end{equation}
\begin{equation}\label{eq:completeness_f_5}
f_5(\chi, u) = q_c v
\end{equation}
for all $(\chi, u) \in \mathbb{R}^{n_s} \times \mathbb{R}^{n_i}$. By Eq.
\eqref{eq:H_g}, the hypothesis of Eq. \eqref{eq:integrated_g} holds for each
$\ell \in \{ 1, 2, 3 \}$, since
\begin{equation}\label{eq:completeness_g_1}
g_1(\chi, u) = x
\end{equation}
\begin{equation}\label{eq:completeness_g_3}
g_3(\chi, u) = x_m
\end{equation}
\begin{equation}\label{eq:completeness_g_5}
g_5(\chi, u) = 0
\end{equation}
for all $(\chi, u) \in \mathbb{R}^{n_s} \times \mathbb{R}^{n_i}$, with equality
for $\ell \in \{ 1, 2 \}$ and by the non-negativity of $\abs{x_c}$ for $\ell =
3$.

The hypothesis of Eq. \eqref{eq:integrated_hyp_rho} holds for $\ell \in \{ 1, 2
\}$ by Eqs. \eqref{eq:completeness_rho_1} and \eqref{eq:completeness_rho_2},
and for $\ell = 3$ by Eq. \eqref{eq:completeness_rho_3} and $q_c(t, j) \in \{
0, 1 \}$, the component $7$ of the state being $q_c$. The hypothesis of Eq.
\eqref{eq:integrated_hyp_k} holds with $\kappa \triangleq \kappa_v$ for $\ell
\in \{ 1, 3 \}$ and with $\kappa \triangleq \kappa_m$ for $\ell = 2$, by Eqs.
\eqref{eq:completeness_v} and \eqref{eq:completeness_vm}, the component
$\iota_2^\ell$ being $v$ for $\ell \in \{ 1, 3 \}$ and $v_m$ for $\ell = 2$.

Lemma~\ref{lem:integrated} accordingly gives
\begin{equation}\label{eq:completeness_x}
\abs{x(t, j)} \leq \abs{x(0, 0)} + \kappa_v T
\end{equation}
\begin{equation}\label{eq:completeness_xm}
\abs{x_m(t, j)} \leq \abs{x_m(0, 0)} + \kappa_m T
\end{equation}
\begin{equation}\label{eq:completeness_xc}
\abs{x_c(t, j)} \leq \abs{x_c(0, 0)} + \kappa_v T
\end{equation}

The sets of Eqs. \eqref{eq:confinement_Hf_x} and \eqref{eq:confinement_Hf_V}
are bounded, by Eqs. \eqref{eq:completeness_xc} and \eqref{eq:completeness_V},
whence the set of Eq. \eqref{eq:confinement_Hf_z} is bounded by
Proposition~\ref{prop:confinement_Hf}. Every component of $\chi(t, j)$ being
bounded uniformly in $(t, j) \in \dom \chi$ with $t \leq T$, the set
\begin{equation}\label{eq:completeness_set}
\left\{ \chi(t, j) : (t, j) \in \dom \chi, \; t \leq T \right\}
\end{equation}
is bounded.

The system $\mathcal{H}_f$ satisfies the hybrid basic conditions by
Lemma~\ref{lem:hbc_Hf}, the set $\mathcal{S}$ is forward invariant for
$\mathcal{H}_f$ by Lemma~\ref{lem:invariant} and satisfies Eq.
\eqref{eq:covering_S} by Lemma~\ref{lem:covering}, and $\chi(0, 0) \in
\mathcal{S}$. The set of Eq. \eqref{eq:completeness_set} being bounded for all
$T \in \mathbb{R}_{\geq 0}$, Proposition~\ref{prop:completeness} applies.
\end{proof}

\begin{theorem}[Completeness of $\mathcal{H}_r$]\label{thm:completeness_Hr}
Let the contact law of Eq. \eqref{eq:law} be confining with the confining
storage function $V_c$, let $V_{\mathcal{H}_r}$ be given by Eq.
\eqref{eq:V_Hr}, let $n_s$ and $n_i$ be the dimensions of $\mathcal{H}_r$, let
$w : \mathbb{R}_{\geq 0} \longrightarrow \mathbb{R}$ be continuous and bounded,
and let $(\chi, u)$ be a maximal solution pair to $\mathcal{H}_r$ such that
$\chi(0, 0) \in \mathcal{S}$, with $\mathcal{S}$ given by Eq.
\eqref{eq:S_regimes}, and such that $u$ is the hybrid signal induced by $w$
upon $\dom \chi$. Then $(\chi, u)$ is complete.
\end{theorem}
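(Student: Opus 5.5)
The proof will follow that of Theorem~\ref{thm:completeness} step by step, with the center-of-mass components deleted. Let $\mu \in \mathbb{R}_{\geq 0}$ bound $\abs{w}$ on $\mathbb{R}_{\geq 0}$. Fix $T \in \mathbb{R}_{\geq 0}$ and $(t, j) \in \dom \chi$ with $t \leq T$. By Lemma~\ref{lem:invariant}, $q_c(t, j) \in \{ 0, 1 \}$.

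\textbf{Bound on the storage.} By Proposition~\ref{prop:pass_Hr}, the system $\mathcal{H}_r$ is passive with the storage function $V_{\mathcal{H}_r}$. By Lemma~\ref{lem:output_Hr}, $F_{e,r} \, h(\chi, F_{e,r}) \leq \gamma_r \abs{F_{e,r}} \sqrt{V_{\mathcal{H}_r}(\chi)}$ for every $(\chi, F_{e,r})$, which is the hypothesis of Proposition~\ref{prop:growth}. Combining that proposition with Proposition~\ref{prop:induced_integral} gives
\[
\sqrt{V_{\mathcal{H}_r}(\chi(t, j))} \leq \sqrt{V_{\mathcal{H}_r}(\chi(0, 0))} + \gamma_r \mu T .
\]
Let $\varrho_T$ be the square of the right-hand side. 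Since $V_c \geq 0$, Eq.~\eqref{eq:V_Hr} gives $\abs{v(t, j)} \leq \kappa_v \triangleq \sqrt{2 \varrho_T / m}$.

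\textbf{Bounds on $x$ and $x_c$.} In $\mathcal{H}_r$ the components $x$, $v$, $x_c$ and $q_c$ are components $1$, $2$, $3$ and $n_z + 4$. Lemma~\ref{lem:integrated} will be applied twice.
\begin{itemize}
\item With $(\iota_1, \iota_2) = (1, 2)$ and $\rho \equiv 1$: here $f_1 = v$ and $g_1 = x$ by Eqs.~\eqref{eq:Hr_f} and \eqref{eq:Hr_g}.
\item With $(\iota_1, \iota_2) = (3, 2)$ and $\rho(\chi, u) = \chi_{n_z + 4}$: here $f_3 = q_c v$ and $\abs{g_3} = 0 \leq \abs{x_c}$, and $\abs{\rho} \leq 1$ holds along the solution because $q_c \in \{ 0, 1 \}$.
\end{itemize}
Both applications use $\kappa = \kappa_v$. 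They give $\abs{x(t, j)} \leq \abs{x(0, 0)} + \kappa_v T$ and $\abs{x_c(t, j)} \leq \abs{x_c(0, 0)} + \kappa_v T$.

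\textbf{Bound on $z_c$ and conclusion.} The sets of Eqs.~\eqref{eq:confinement_Hr_x} and \eqref{eq:confinement_Hr_V} are now bounded. Proposition~\ref{prop:confinement_Hr} therefore bounds $z_c$ on the same portion of the domain. Every component of the state is then bounded for $t \leq T$, since $q_c$ lies in $\{ 0, 1 \}$. Finally, $\mathcal{H}_r$ satisfies the hybrid basic conditions by Lemma~\ref{lem:hbc_Hr}, $\mathcal{S}$ is forward invariant by Lemma~\ref{lem:invariant}, and $\mathcal{S}$ satisfies Eq.~\eqref{eq:covering_S} by Lemma~\ref{lem:covering}. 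Proposition~\ref{prop:completeness} then yields completeness.

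\textbf{Main obstacle.} No step is genuinely hard. The only point needing care is the index bookkeeping in Lemma~\ref{lem:integrated}: because $n_z$ is arbitrary, the position of $q_c$ shifts with $n_z$, while $x_c$ remains component $3$.
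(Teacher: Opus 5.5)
Your proposal is correct and follows the paper's proof step for step. It uses the same growth bound on $\sqrt{V_{\mathcal{H}_r}}$ via Propositions~\ref{prop:growth} and \ref{prop:induced_integral}, the same two applications of Lemma~\ref{lem:integrated} with $(\iota_1,\iota_2)=(1,2)$ and $(3,2)$, and then Propositions~\ref{prop:confinement_Hr} and \ref{prop:completeness}. The only divergence is cosmetic: you place $q_c$ at component $n_z+4$, which is the correct index under the scalar indexing that Lemma~\ref{lem:integrated} requires, whereas the paper calls it component $5$ because it treats $z_c$ as a single block, as in Eq.~\eqref{eq:Hr_f}.
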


\begin{proof}
Let $\mu \in \mathbb{R}_{\geq 0}$ be such that $\abs{w(t)} \leq \mu$ for all $t
\in \mathbb{R}_{\geq 0}$, let $T \in \mathbb{R}_{\geq 0}$, and let $(t, j) \in
\dom \chi$ with $t \leq T$. By Lemma~\ref{lem:invariant}, $q_c(t, j) \in \{ 0,
1 \}$.

By Proposition~\ref{prop:pass_Hr} and Definition~\ref{def:passivity}, the
system $\mathcal{H}_r$ is dissipative with respect to the supply rate $s(u, y)
= \left\langle u, y \right\rangle$ with the storage function
$V_{\mathcal{H}_r}$, and by Lemma~\ref{lem:output_Hr} and the Cauchy--Schwarz
inequality,
\begin{equation}\label{eq:completeness_Hr_supply}
\left\langle u, h(\chi, u) \right\rangle \leq
  \gamma_r \abs{u} \sqrt{V_{\mathcal{H}_r}(\chi)}
\end{equation}
for all $(\chi, u) \in \mathbb{R}^{n_s} \times \mathbb{R}^{n_i}$, with
$\gamma_r$ given by Eq. \eqref{eq:gamma_Hr}, which is the hypothesis of
Proposition~\ref{prop:growth}. That proposition and
Proposition~\ref{prop:induced_integral} give
\begin{equation}\label{eq:completeness_Hr_V}
\sqrt{V_{\mathcal{H}_r}\left( \chi(t, j) \right)} \leq
  \sqrt{V_{\mathcal{H}_r}\left( \chi(0, 0) \right)} + \gamma_r \mu T
\end{equation}
Let
\begin{equation}\label{eq:completeness_Hr_rho}
\varrho_T \triangleq \left( \sqrt{V_{\mathcal{H}_r}\left( \chi(0, 0) \right)}
  + \gamma_r \mu T \right)^2
\end{equation}
so that $V_{\mathcal{H}_r}\left( \chi(t, j) \right) \leq \varrho_T$ by Eq.
\eqref{eq:completeness_Hr_V}. Let
\begin{equation}\label{eq:completeness_Hr_kappa_v}
\kappa_v \triangleq \sqrt{2 \varrho_T / m}
\end{equation}
By Eq. \eqref{eq:V_Hr}, the quantity $\tfrac{1}{2} m v(t, j)^2$ is bounded
above by $V_{\mathcal{H}_r}\left( \chi(t, j) \right)$, the summand $V_c\left(
x_c(t, j), z_c(t, j) \right)$ being non-negative, whence
\begin{equation}\label{eq:completeness_Hr_v}
\abs{v(t, j)} \leq \kappa_v
\end{equation}

The components $x$, $v$, $x_c$, and $q_c$ of the state are the components $1$,
$2$, $3$, and $5$ respectively. Lemma~\ref{lem:integrated} is applied twice,
the applications being indexed by $\ell \in \{ 1, 2 \}$. For each such $\ell$,
let $\iota_1^\ell, \iota_2^\ell \in \{ 1, \dots, n_s \}$ be given by
\begin{equation}\label{eq:completeness_Hr_iota_1}
\left( \iota_1^1, \iota_2^1 \right) \triangleq (1, 2)
\end{equation}
\begin{equation}\label{eq:completeness_Hr_iota_2}
\left( \iota_1^2, \iota_2^2 \right) \triangleq (3, 2)
\end{equation}
and let $\rho^\ell : \mathbb{R}^{n_s} \times \mathbb{R}^{n_i} \longrightarrow
\mathbb{R}$ be given by
\begin{equation}\label{eq:completeness_Hr_rho_1}
\rho^1(\chi, u) \triangleq 1
\end{equation}
\begin{equation}\label{eq:completeness_Hr_rho_2}
\rho^2(\chi, u) \triangleq \chi_5
\end{equation}
In the $\ell$th application, the indices $\iota_1^\ell$ and $\iota_2^\ell$ and
the map $\rho^\ell$ are those of Eqs. \eqref{eq:integrated_f} to
\eqref{eq:integrated_hyp_k}.

By Eq. \eqref{eq:Hr_f}, the hypothesis of Eq. \eqref{eq:integrated_f} holds for
each $\ell \in \{ 1, 2 \}$, since
\begin{equation}\label{eq:completeness_Hr_f_1}
f_1(\chi, u) = v
\end{equation}
\begin{equation}\label{eq:completeness_Hr_f_3}
f_3(\chi, u) = q_c v
\end{equation}
for all $(\chi, u) \in \mathbb{R}^{n_s} \times \mathbb{R}^{n_i}$. By Eq.
\eqref{eq:Hr_g}, the hypothesis of Eq. \eqref{eq:integrated_g} holds for each
$\ell \in \{ 1, 2 \}$, since
\begin{equation}\label{eq:completeness_Hr_g_1}
g_1(\chi, u) = x
\end{equation}
\begin{equation}\label{eq:completeness_Hr_g_3}
g_3(\chi, u) = 0
\end{equation}
for all $(\chi, u) \in \mathbb{R}^{n_s} \times \mathbb{R}^{n_i}$, with equality
for $\ell = 1$ and by the non-negativity of $\abs{x_c}$ for $\ell = 2$.

The hypothesis of Eq. \eqref{eq:integrated_hyp_rho} holds for $\ell = 1$ by Eq.
\eqref{eq:completeness_Hr_rho_1}, and for $\ell = 2$ by Eq.
\eqref{eq:completeness_Hr_rho_2} and $q_c(t, j) \in \{ 0, 1 \}$, the component
$5$ of the state being $q_c$. The hypothesis of Eq.
\eqref{eq:integrated_hyp_k} holds with $\kappa \triangleq \kappa_v$ for each
$\ell \in \{ 1, 2 \}$, by Eq. \eqref{eq:completeness_Hr_v}, the component
$\iota_2^\ell$ being $v$ in both cases.

Lemma~\ref{lem:integrated} accordingly gives
\begin{equation}\label{eq:completeness_Hr_x}
\abs{x(t, j)} \leq \abs{x(0, 0)} + \kappa_v T
\end{equation}
\begin{equation}\label{eq:completeness_Hr_xc}
\abs{x_c(t, j)} \leq \abs{x_c(0, 0)} + \kappa_v T
\end{equation}

The sets of Eqs. \eqref{eq:confinement_Hr_x} and \eqref{eq:confinement_Hr_V}
are bounded, by Eqs. \eqref{eq:completeness_Hr_xc} and
\eqref{eq:completeness_Hr_V}, whence the set of Eq.
\eqref{eq:confinement_Hr_z} is bounded by
Proposition~\ref{prop:confinement_Hr}. Every component of $\chi(t, j)$ being
bounded uniformly in $(t, j) \in \dom \chi$ with $t \leq T$, the set
\begin{equation}\label{eq:completeness_Hr_set}
\left\{ \chi(t, j) : (t, j) \in \dom \chi, \; t \leq T \right\}
\end{equation}
is bounded.

The system $\mathcal{H}_r$ satisfies the hybrid basic conditions by
Lemma~\ref{lem:hbc_Hr}, the set $\mathcal{S}$ is forward invariant for
$\mathcal{H}_r$ by Lemma~\ref{lem:invariant} and satisfies Eq.
\eqref{eq:covering_S} by Lemma~\ref{lem:covering}, and $\chi(0, 0) \in
\mathcal{S}$. The set of Eq. \eqref{eq:completeness_Hr_set} being bounded for
all $T \in \mathbb{R}_{\geq 0}$, Proposition~\ref{prop:completeness} applies.
\end{proof}

\begin{remark}
Theorems \ref{thm:completeness} and \ref{thm:completeness_Hr} exclude the escape of the state in finite ordinary time; they do not exclude the Zeno behaviour of
Definition~\ref{def:classification}, and the ordinary time attained by a
solution pair furnished by them may be finite.
\end{remark}

\subsection{Non-Uniqueness}\label{sec:nu}

The solutions of $\mathcal{H}_f$ and $\mathcal{H}_r$ are not, necessarily,
unique.

\begin{example}[Non-uniqueness at a resting contact]\label{ex:nonuniq_rest}
Let the contact law of Eq. \eqref{eq:law} be arbitrary, subject to Eq.
\eqref{eq:vanishing}, and let $w : \mathbb{R}_{\geq 0} \longrightarrow
\mathbb{R}$ be the constant map of value $0$, which is continuous and bounded.
Let the initial state be that of Eq. \eqref{eq:Hr_init} with
\begin{equation}\label{eq:nonuniq_init}
x_0 \triangleq 0
\end{equation}
\begin{equation}\label{eq:nonuniq_v0}
v_0 \triangleq 0
\end{equation}
which is admissible, the restrictions of Section~\ref{sec:bodies} requiring
only $x_0 \in \mathbb{R}_{\geq 0}$ and $x_0 + l_1 + l_2 \in \mathbb{R}_{>0}$.
Let
\begin{equation}\label{eq:nonuniq_chi0}
\chi_0 \triangleq (0, 0, 0, 0, 0)
\end{equation}
denote the resulting initial state and let $u$ denote in either case below the
hybrid signal induced by $w$.

The first solution pair is $(\chi^{\mathrm{a}}, u)$ with
\begin{equation}\label{eq:nonuniq_dom_a}
\dom \chi^{\mathrm{a}} \triangleq \mathbb{R}_{\geq 0} \times \{ 0 \}
\end{equation}
\begin{equation}\label{eq:nonuniq_arc_a}
\chi^{\mathrm{a}}(t, 0) \triangleq (0, 0, 0, 0, 0)
\end{equation}
for all $t \in \mathbb{R}_{\geq 0}$: the bodies remain at rest and touching at a
single point, the interface disengaged, for all ordinary time.

The second solution pair is $(\chi^{\mathrm{b}}, u)$ with
\begin{equation}\label{eq:nonuniq_dom_b}
\dom \chi^{\mathrm{b}} \triangleq \{ 0 \} \times \mathbb{Z}_{\geq 0}
\end{equation}
\begin{equation}\label{eq:nonuniq_arc_b}
\chi^{\mathrm{b}}(0, j) \triangleq \left( 0, 0, 0, 0, \tfrac{1}{2}
  \left( 1 - (-1)^j \right) \right)
\end{equation}
for all $j \in \mathbb{Z}_{\geq 0}$: the interface is engaged and disengaged
without end, at the initial instant, no ordinary time elapsing.
\end{example}

\begin{proposition}[The first solution pair]\label{prop:nonuniq_a}
The pair $(\chi^{\mathrm{a}}, u)$ of Eqs. \eqref{eq:nonuniq_dom_a} and
\eqref{eq:nonuniq_arc_a} is a complete solution pair to $\mathcal{H}_r$ with
$\chi^{\mathrm{a}}(0, 0) = \chi_0$.
\end{proposition}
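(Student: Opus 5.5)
The plan is to verify the definition of a solution pair (Definition~\ref{def:solution}) directly for $(\chi^{\mathrm{a}}, u)$, and then read completeness off the domain. Three routine facts come first. The set $\dom \chi^{\mathrm{a}} = \mathbb{R}_{\geq 0} \times \{0\}$ is a hybrid time domain: for every $(s, 0)$ its truncation is $[0, s] \times \{0\}$, which is generated by $t_0 = 0$, $t_1 = s$. The map $\chi^{\mathrm{a}}(\cdot, 0)$ is constant, hence continuously differentiable with vanishing derivative, so it is a hybrid arc. The induced signal $u(t, 0) = w(t) = 0$ is continuous, so $u$ is a hybrid signal.

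Next come the flow conditions. The initial condition holds because $\chi^{\mathrm{a}}(0,0) = \chi_0$ by Eq.~\eqref{eq:nonuniq_arc_a}. Membership of $(\chi_0, 0)$ in $\closure C \cup D$ will follow from the flow-set check below, since $C$ is closed by Lemma~\ref{lem:hbc_Hr}. For condition \ref{s:flow}, one has $\mathcal{J}_E = \{0\}$ and $\interior I_E^0 = \mathbb{R}_{>0}$. At each such $t$ the state has $q_c = 0$ and $x = 0$, so $(\chi^{\mathrm{a}}(t,0), u(t,0))$ lies in $\{q_c = 0\} \cap \{x \geq 0\} \subseteq C$ by Eq.~\eqref{eq:H_C}. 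The flow map of Eq.~\eqref{eq:Hr_f} at this state with $F_{e,r} = 0$ evaluates as follows:
\[
\begin{aligned}
f_1 &= v = 0, \\
f_2 &= m^{-1}\left(-0 \cdot \Phi(0,0,0) + 0\right) = 0, \\
f_3 &= 0 \cdot v = 0, \\
f_4 &= 0 \cdot \Psi(0,0,0) = 0, \\
f_5 &= 0.
\end{aligned}
\]
This agrees with $\dot{\chi}^{\mathrm{a}}(t, 0) = 0$. Eq.~\eqref{eq:vanishing} is not even needed here, because the factor $q_c = 0$ kills both $\Phi$ and $\Psi$.

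Condition \ref{s:jump} is vacuous: no $(t, 1)$ lies in the domain, so no $(t,0)$ has a successor $(t,1)$ in $E$. Completeness then follows from Definition~\ref{def:classification}, since $\sup_t \dom \chi^{\mathrm{a}} = +\infty$.

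No step is a genuine obstacle. The only point needing care is that the pair also lies in $D$, because $x = 0 \leq 0$ and $v = 0 \leq 0$. This overlap does not matter: \ref{s:flow} only requires membership in $C$ along the flow interval, not exclusion from $D$. Exactly this overlap is what permits the second solution $\chi^{\mathrm{b}}$.
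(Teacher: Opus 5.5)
Your proposal is correct and follows the paper's proof essentially step for step. Both arguments run the same way: the domain is a hybrid time domain, the constant arc is a hybrid arc, condition (S2) is vacuous, membership in $C$ follows from $q_c = 0$ and $x = 0$, the flow map vanishes because $q_c = v = F_{e,r} = 0$, and completeness comes from the unbounded ordinary time. Your explicit check that $(\chi_0, 0) \in \closure C \cup D$ and your remark on the overlap with $D$ are harmless additions the paper leaves implicit, since its membership check for all $t \in \mathbb{R}_{\geq 0}$ already covers $t = 0$.
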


\begin{proof}
The set of Eq. \eqref{eq:nonuniq_dom_a} is a hybrid time domain, and
$\chi^{\mathrm{a}}$ is a hybrid arc in the sense of Definition~\ref{def:arc},
being constant. Condition \ref{s:jump} is vacuous, no $(t, j)$ of the domain
having $(t, j + 1)$ in it.

Let $t \in \mathbb{R}_{\geq 0}$. Since $q_c = 0$ and $x = 0$, the pair
$(\chi^{\mathrm{a}}(t, 0), 0)$ belongs to $C$ by Eq. \eqref{eq:H_C}. By Eq.
\eqref{eq:Hr_f} and $q_c = 0$, $v = 0$, and $w = 0$, every component of
$f(\chi^{\mathrm{a}}(t, 0), 0)$ vanishes, as does
$\dot{\chi}^{\mathrm{a}}(t, 0)$, whence condition \ref{s:flow}.

The pair is complete by Definition~\ref{def:classification}, the ordinary time
attained upon its domain being unbounded.
\end{proof}

\begin{proposition}[The second solution pair]\label{prop:nonuniq_b}
The pair $(\chi^{\mathrm{b}}, u)$ of Eqs. \eqref{eq:nonuniq_dom_b} and
\eqref{eq:nonuniq_arc_b} is a complete solution pair to $\mathcal{H}_r$ with
$\chi^{\mathrm{b}}(0, 0) = \chi_0$.
\end{proposition}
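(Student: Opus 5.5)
The verification follows the pattern of the proof of Proposition~\ref{prop:nonuniq_a}, with the flow condition now vacuous and the jump condition doing the work. I shall check in turn that the domain of Eq. \eqref{eq:nonuniq_dom_b} is a hybrid time domain, that $\chi^{\mathrm{b}}$ is a hybrid arc and $u$ a hybrid signal, that the initial pair lies in $\closure C \cup D$, and that conditions \ref{s:flow} and \ref{s:jump} of Definition~\ref{def:solution} hold. Completeness will then follow from Definition~\ref{def:classification}, since $\sup_j \dom \chi^{\mathrm{b}} = +\infty$.

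For the domain, take $(s, k) \in \{0\} \times \mathbb{Z}_{\geq 0}$. The truncation $E \cap ([0, s] \times \{0, \dots, k\})$ is $\{0\} \times \{0, \dots, k\}$. This is the compact hybrid time domain generated by the sequence $t_0 = \dots = t_{k+1} = 0$, so Definition~\ref{def:htd} is satisfied. Every $I_E^j$ equals $\{0\}$, which has empty interior. Hence $\mathcal{J}_E = \emptyset$, so $\chi^{\mathrm{b}}$ is trivially a hybrid arc and $u$ trivially a hybrid signal, and condition \ref{s:flow} is vacuous. At the initial point, $\chi^{\mathrm{b}}(0, 0) = (0,0,0,0,0) = \chi_0$, because $\tfrac12(1 - (-1)^0) = 0$.

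The substantive step is condition \ref{s:jump}, checked at every $(0, j)$ with $u(0, j) = w(0) = 0$. There are two cases.

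\emph{Case $j$ even.} Here $q_c = 0$, $x = 0 \le 0$ and $v = 0 \le 0$, so the pair lies in the first part of $D$ in Eq. \eqref{eq:H_D}.

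\emph{Case $j$ odd.} Here $q_c = 1$ with $x_c = 0$, $z_c = 0$ and $v = 0$. By Eq. \eqref{eq:vanishing}, $\Phi(0, 0, 0) = 0 \geq 0$, and $v \geq 0$, so the pair lies in the second part of $D$.

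In both cases Eq. \eqref{eq:Hr_g} gives $g(\chi^{\mathrm{b}}(0, j), 0) = (0, 0, 0, 0, 1 - q_c)$. Since $1 - \tfrac12(1 - (-1)^j) = \tfrac12(1 - (-1)^{j+1})$, this equals $\chi^{\mathrm{b}}(0, j+1)$.

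The only point needing care is this odd case, where membership in $D$ depends on condition \eqref{eq:vanishing}. That is exactly why the example is stated for an arbitrary law subject to Eq. \eqref{eq:vanishing}. The initial pair lies in $D \subseteq \closure C \cup D$ by the even case with $j = 0$, which completes the check.
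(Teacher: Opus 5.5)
Your proposal is correct and follows the paper's proof essentially step for step. The emptiness of $\mathcal{J}_E$ makes condition~\ref{s:flow} vacuous, and the split on the parity of $j$ places each $(\chi^{\mathrm{b}}(0,j),0)$ in $D$, with Eq.~\eqref{eq:vanishing} supplying $\Phi(0,0,0)=0$ in the odd case. The jump map then yields $\chi^{\mathrm{b}}(0,j+1)$, and completeness follows from the unbounded number of transitions. Your only addition is the explicit check that the initial pair lies in $\closure C \cup D$, which the paper leaves implicit.
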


\begin{proof}
The set of Eq. \eqref{eq:nonuniq_dom_b} is a hybrid time domain, generated upon
each truncation by the constant sequence of value $0$, and $\chi^{\mathrm{b}}$
is a hybrid arc, the set $\mathcal{J}_{\dom \chi^{\mathrm{b}}}$ being empty,
whence condition \ref{s:flow} is vacuous.

Let $j \in \mathbb{Z}_{\geq 0}$ and write $q \triangleq \tfrac{1}{2}(1 -
(-1)^j)$, which is $0$ for $j$ even and $1$ for $j$ odd. If $q = 0$, then $x =
0$ and $v = 0$, whence $(\chi^{\mathrm{b}}(0, j), 0) \in D$ by Eq.
\eqref{eq:H_D}. If $q = 1$, then $\Phi(0, 0, 0) = 0$ by Eq.
\eqref{eq:vanishing} and $v = 0$, whence likewise. By Eq. \eqref{eq:Hr_g},
$g(\chi^{\mathrm{b}}(0, j), 0) = (0, 0, 0, 0, 1 - q)$, which is
$\chi^{\mathrm{b}}(0, j + 1)$, whence condition \ref{s:jump}.

The pair is complete by Definition~\ref{def:classification}, the number of the
transitions attained upon its domain being unbounded.
\end{proof}

\begin{remark}
The pair $(\chi^{\mathrm{a}}, u)$ is eventually continuous and the pair
$(\chi^{\mathrm{b}}, u)$ is Zeno, in the sense of
Definition~\ref{def:classification}; both are complete, in accordance with
Theorem~\ref{thm:completeness_Hr}; the two agree at $(0, 0)$.
\end{remark}

\begin{example}[Non-uniqueness at a separation]\label{ex:nonuniq_sep}
Let $n_z \triangleq 0$, let $k \in \mathbb{R}_{>0}$, and let the contact law of
Eq. \eqref{eq:law} be given by
\begin{equation}\label{eq:nonuniq_law}
\Phi(x_c, v_c) \triangleq k \min \{ x_c, 0 \}
\end{equation}
which is continuous and satisfies Eq. \eqref{eq:vanishing}. Let $w :
\mathbb{R}_{\geq 0} \longrightarrow \mathbb{R}$ be the constant map of value
$0$, let $x_0 \in \mathbb{R}_{>0}$ and $v_0 \in \mathbb{R}_{<0}$ be the
parameters of Eq. \eqref{eq:Hr_init}, and let
\begin{equation}\label{eq:nonuniq_omega}
\omega \triangleq \sqrt{k / m}
\end{equation}
\begin{equation}\label{eq:nonuniq_t1}
t_1 \triangleq - x_0 / v_0
\end{equation}
\begin{equation}\label{eq:nonuniq_t2}
t_2 \triangleq t_1 + \pi / \omega
\end{equation}
Let $u$ denote in either case below the hybrid signal induced by $w$.

Up to $(t_2, 1)$ the two solution pairs agree. Their common part is the pair
$(\xi, u)$ with
\begin{equation}\label{eq:nonuniq_dom_c}
\dom \xi \triangleq \left( [0, t_1] \times \{ 0 \} \right) \cup
  \left( [t_1, t_2] \times \{ 1 \} \right)
\end{equation}
\begin{equation}\label{eq:nonuniq_arc_c0}
\xi(t, 0) \triangleq \left( x_0 + v_0 t, \; v_0, \; 0, \; 0 \right)
\end{equation}
\begin{equation}\label{eq:nonuniq_arc_c1}
\xi(t, 1) \triangleq \left( \sigma(t), \; v_0 \cos \omega (t - t_1), \;
  \sigma(t), \; 1 \right)
\end{equation}
in which
\begin{equation}\label{eq:nonuniq_sigma}
\sigma(t) \triangleq \omega^{-1} v_0 \sin \omega (t - t_1)
\end{equation}
the components being $x$, $v$, $x_c$, and $q_c$, the internal state being
absent. The bodies approach at a constant relative velocity, meet at $t_1$,
and are in contact throughout $[t_1, t_2]$, the indentation being harmonic and
returning to zero at $t_2$, where the relative velocity is $-v_0$.

The first solution pair is $(\chi^{\mathrm{c}}, u)$, which extends $(\xi, u)$
by a transition at $(t_2, 1)$, with
\begin{equation}\label{eq:nonuniq_dom_d}
\dom \chi^{\mathrm{c}} \triangleq \dom \xi \cup
  \left( [t_2, +\infty) \times \{ 2 \} \right)
\end{equation}
\begin{equation}\label{eq:nonuniq_arc_d}
\chi^{\mathrm{c}}(t, 2) \triangleq \left( -v_0 (t - t_2), \; -v_0, \; 0, \;
  0 \right)
\end{equation}
for all $t \in [t_2, +\infty)$: the interface is disengaged at $t_2$ and the
bodies recede.

The second solution pair is $(\chi^{\mathrm{d}}, u)$, which extends $(\xi, u)$
by a continued evolution, with
\begin{equation}\label{eq:nonuniq_dom_e}
\dom \chi^{\mathrm{d}} \triangleq \left( [0, t_1] \times \{ 0 \} \right) \cup
  \left( [t_1, +\infty) \times \{ 1 \} \right)
\end{equation}
\begin{equation}\label{eq:nonuniq_arc_e}
\chi^{\mathrm{d}}(t, 1) \triangleq \left( -v_0 (t - t_2), \; -v_0, \;
  -v_0 (t - t_2), \; 1 \right)
\end{equation}
for all $t \in [t_2, +\infty)$: the interface remains engaged, transmitting no
force, and the bodies recede.
\end{example}

\begin{proposition}[The common part]\label{prop:nonuniq_c}
Let the conditions of Example~\ref{ex:nonuniq_sep} hold. Then the pair $(\xi,
u)$ of Eqs. \eqref{eq:nonuniq_dom_c} to \eqref{eq:nonuniq_arc_c1} is a solution
pair to $\mathcal{H}_r$ with $\xi(0, 0) = (x_0, v_0, 0, 0)$.
\end{proposition}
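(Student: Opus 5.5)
The plan is to verify Definition~\ref{def:solution} directly for $(\xi, u)$, with $u$ the zero signal. I would first check the structural facts. The set of Eq.~\eqref{eq:nonuniq_dom_c} is a compact hybrid time domain generated by $0 \le t_1 \le t_2$; here $t_1 = -x_0/v_0 > 0$ because $x_0 > 0$ and $v_0 < 0$, and $t_2 > t_1$. Next, $\xi(\cdot,0)$ and $\xi(\cdot,1)$ are continuously differentiable, being affine and trigonometric respectively, so $\xi$ is a hybrid arc. Finally, $\xi(0,0) = (x_0, v_0, 0, 0)$ is read off from Eq.~\eqref{eq:nonuniq_arc_c0}. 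Since $x_0 > 0$, this state lies in $C$, and hence in $\closure C \cup D$.

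For condition~\ref{s:flow} I would take each flow interval in turn.

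\emph{On $(0, t_1)$ with $j = 0$.} Here $q_c = 0$ and $x = x_0 + v_0 t > 0$, so the pair lies in $C$ by Eq.~\eqref{eq:H_C}. The flow map of Eq.~\eqref{eq:Hr_f} gives $(v_0, 0, 0, 0)$ because $q_c = 0$ and $w = 0$, and this matches the derivative of Eq.~\eqref{eq:nonuniq_arc_c0}.

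\emph{On $(t_1, t_2)$ with $j = 1$.} Here $q_c = 1$ and $\omega(t - t_1) \in (0, \pi)$. Since $v_0 < 0$, this gives $x_c = \sigma(t) < 0$, so $\Phi = k\sigma \le 0$ and the pair lies in $C$. For the flow equation I would compute $\dot\sigma = v_0 \cos\omega(t - t_1) = v$, which matches both $f_1 = v$ and $f_3 = q_c v$. For the second component,
\[
\dot v = -\omega v_0 \sin\omega(t - t_1) = -\omega^2 \sigma = -m^{-1} k \sigma = m^{-1}\left(-\Phi(x_c, v) + 0\right),
\]
using $\omega^2 = k/m$ and $\min\{\sigma, 0\} = \sigma$. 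The component $f_4 = 0$ matches the constant $q_c$, and the internal state is absent.

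For condition~\ref{s:jump}, the only pair with $(t, j+1) \in \dom\xi$ is $(t_1, 0)$. At that point $x = x_0 + v_0 t_1 = 0$, $v = v_0 < 0$, and $q_c = 0$, so the state lies in $D$ by Eq.~\eqref{eq:H_D}. By Eq.~\eqref{eq:Hr_g}, $g$ maps $(0, v_0, 0, 0)$ to $(0, v_0, 0, 1)$. This equals $\xi(t_1, 1)$ because $\sigma(t_1) = 0$ and $\cos 0 = 1$.

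None of these steps is a genuine obstacle; the proof is bookkeeping. The one point needing care is the sign argument for membership in $C$ on the open contact interval, which rests on $v_0 < 0$ and the range of $\omega(t - t_1)$. Conditions~\ref{s:flow} and~\ref{s:jump} are only required on interiors and at the jump point, so the endpoint values $\Phi = 0$ at $t_2$ need not be examined.
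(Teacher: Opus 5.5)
Your proposal is correct and follows the paper's proof essentially step for step. It verifies condition~\ref{s:flow} on $(0,t_1)$ and on $(t_1,t_2)$, using the same sign argument $\sigma(t)<0$ (hence $\Phi\le 0$), and condition~\ref{s:jump} at the single jump point $(t_1,0)$; the only addition is your explicit check that $(\xi(0,0),u(0,0))\in\closure C\cup D$, which the paper leaves implicit.
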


\begin{proof}
The set of Eq. \eqref{eq:nonuniq_dom_c} is a compact hybrid time domain,
generated by $(0, t_1, t_2)$, and $\xi$ is a hybrid arc, the maps $\xi(\cdot,
0)$ and $\xi(\cdot, 1)$ being continuously differentiable upon $[0, t_1]$ and
$[t_1, t_2]$ respectively.

Let $t \in (0, t_1)$. Then $q_c = 0$ and $x = x_0 + v_0 t > 0$, whence $(\xi(t,
0), 0) \in C$ by Eq. \eqref{eq:H_C}. By Eq. \eqref{eq:Hr_f} and $q_c = 0$ and
$w = 0$, the flow map takes the value $(v_0, 0, 0, 0)$, which is
$\dot{\xi}(t, 0)$ by Eq. \eqref{eq:nonuniq_arc_c0}.

At $(t_1, 0)$ one has $x = 0$ and $v = v_0 < 0$, whence $(\xi(t_1, 0), 0) \in
D$ by Eq. \eqref{eq:H_D}; and by Eq. \eqref{eq:Hr_g} the jump map takes the
value $(0, v_0, 0, 1)$, which is $\xi(t_1, 1)$ by Eqs. \eqref{eq:nonuniq_arc_c1}
and \eqref{eq:nonuniq_sigma}, $\sigma(t_1)$ vanishing.

Let $t \in (t_1, t_2)$. Then $q_c = 1$ and $\sigma(t) < 0$, the argument of the
sine of Eq. \eqref{eq:nonuniq_sigma} lying in $(0, \pi)$ and $v_0$ being
negative, whence $\Phi = k \sigma(t) < 0$ by Eq. \eqref{eq:nonuniq_law} and
$(\xi(t, 1), 0) \in C$ by Eq. \eqref{eq:H_C}. By Eq. \eqref{eq:Hr_f} and $w =
0$, the flow map takes the value
\[
\left( v_0 \cos \omega (t - t_1), \; -m^{-1} k \sigma(t), \;
  v_0 \cos \omega (t - t_1), \; 0 \right)
\]
and differentiation of Eqs. \eqref{eq:nonuniq_arc_c1} and
\eqref{eq:nonuniq_sigma} gives the same, $\omega^2$ being $k / m$ by Eq.
\eqref{eq:nonuniq_omega}. Conditions \ref{s:flow} and \ref{s:jump} are thereby
satisfied.
\end{proof}

\begin{proposition}[The third solution pair]\label{prop:nonuniq_d}
Let the conditions of Example~\ref{ex:nonuniq_sep} hold. Then the pair
$(\chi^{\mathrm{c}}, u)$ of Eqs. \eqref{eq:nonuniq_dom_d} and
\eqref{eq:nonuniq_arc_d} is a complete solution pair to $\mathcal{H}_r$ with
$\chi^{\mathrm{c}}(0, 0) = (x_0, v_0, 0, 0)$.
\end{proposition}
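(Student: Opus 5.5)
The plan is to build on Proposition~\ref{prop:nonuniq_c}, which already establishes that $(\xi, u)$ is a solution pair with $\xi(0,0) = (x_0, v_0, 0, 0)$. What remains is to check three things: the extended domain, the transition at $(t_2, 1)$, and the flow on $[t_2, +\infty) \times \{2\}$.

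First, the domain. The set of Eq.~\eqref{eq:nonuniq_dom_d} is a hybrid time domain, since every truncation is generated by $(0, t_1, t_2, T)$ for some $T \geq t_2$. The arc $\chi^{\mathrm{c}}(\cdot, 2)$ is affine in $t$, hence continuously differentiable, so $\chi^{\mathrm{c}}$ is a hybrid arc. The signal $u$ is identically zero and is therefore a hybrid signal.

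Second, the transition at $(t_2, 1)$. Evaluating Eq.~\eqref{eq:nonuniq_arc_c1} at $t_2$ gives $\omega(t_2 - t_1) = \pi$. Hence $\sigma(t_2) = 0$ and $v = v_0 \cos\pi = -v_0 > 0$, so the state is $(0, -v_0, 0, 1)$. By Eq.~\eqref{eq:nonuniq_law}, $\Phi = k\min\{0,0\} = 0 \geq 0$, and $v \geq 0$, so the pair lies in $D$ by Eq.~\eqref{eq:H_D}. The jump map of Eq.~\eqref{eq:Hr_g} gives $(0, -v_0, 0, 0)$. This coincides with $\chi^{\mathrm{c}}(t_2, 2)$ from Eq.~\eqref{eq:nonuniq_arc_d}, which establishes \ref{s:jump} at the new transition. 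The transition at $(t_1, 0)$ is already covered by Proposition~\ref{prop:nonuniq_c}.

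Third, the flow. For $t \in (t_2, +\infty)$ we have $q_c = 0$ and $x = -v_0(t - t_2) > 0$, so the pair lies in $C$ by Eq.~\eqref{eq:H_C}. With $q_c = 0$ and $w = 0$, Eq.~\eqref{eq:Hr_f} gives the flow map $(-v_0, 0, 0, 0)$, which is exactly the derivative of Eq.~\eqref{eq:nonuniq_arc_d}. Together with the flow conditions on $j \in \{0, 1\}$ inherited from Proposition~\ref{prop:nonuniq_c}, this establishes \ref{s:flow}. The initial pair lies in $\closure C \cup D$ because $(\xi(0,0), 0) \in C$, as $x_0 > 0$.

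Finally, completeness follows from Definition~\ref{def:classification}, since $\sup_t \dom \chi^{\mathrm{c}} = +\infty$.

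The only step needing care is the evaluation at $t_2$. It must confirm that the pair lies in $D$ rather than only in $C$, and this depends on $\Phi$ vanishing exactly at $x_c = 0$ while $v > 0$. Everything else is a direct substitution.
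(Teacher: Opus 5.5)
Your proposal is correct and follows the paper's proof essentially step for step. You inherit conditions \ref{s:flow} and \ref{s:jump} on $\dom \xi$ from Proposition~\ref{prop:nonuniq_c}, show that the pair at $(t_2, 1)$ lies in $D$ with the jump map returning $(0, -v_0, 0, 0)$, check the flow on $(t_2, +\infty) \times \{2\}$ directly, and conclude completeness from the unbounded ordinary time. The only slip is cosmetic and does not affect the argument: truncations at $(s, 0)$ and $(s, 1)$ are generated by $(0, s)$ and $(0, t_1, s)$, not by $(0, t_1, t_2, T)$.
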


\begin{proof}
The conditions \ref{s:flow} and \ref{s:jump} are satisfied upon $\dom \xi$ by
Proposition~\ref{prop:nonuniq_c}, and the set of Eq. \eqref{eq:nonuniq_dom_d}
is a hybrid time domain.

At $(t_2, 1)$ one has $q_c = 1$, $\sigma(t_2) = 0$, whence $\Phi = 0$ by Eq.
\eqref{eq:nonuniq_law}, and $v = v_0 \cos \pi = -v_0 > 0$; thus
$(\chi^{\mathrm{c}}(t_2, 1), 0) \in D$ by Eq. \eqref{eq:H_D}. By Eq.
\eqref{eq:Hr_g} the jump map takes the value $(0, -v_0, 0, 0)$, which is
$\chi^{\mathrm{c}}(t_2, 2)$ by Eq. \eqref{eq:nonuniq_arc_d}.

Let $t \in (t_2, +\infty)$. Then $q_c = 0$ and $x = -v_0 (t - t_2) > 0$, whence
$(\chi^{\mathrm{c}}(t, 2), 0) \in C$ by Eq. \eqref{eq:H_C}; and by Eq.
\eqref{eq:Hr_f} the flow map takes the value $(-v_0, 0, 0, 0)$, which is
$\dot{\chi}^{\mathrm{c}}(t, 2)$. The pair is complete by
Definition~\ref{def:classification}, the ordinary time attained upon its domain
being unbounded.
\end{proof}

\begin{proposition}[The fourth solution pair]\label{prop:nonuniq_e}
Let the conditions of Example~\ref{ex:nonuniq_sep} hold. Then the pair
$(\chi^{\mathrm{d}}, u)$ of Eqs. \eqref{eq:nonuniq_dom_e} and
\eqref{eq:nonuniq_arc_e} is a complete solution pair to $\mathcal{H}_r$ with
$\chi^{\mathrm{d}}(0, 0) = (x_0, v_0, 0, 0)$.
\end{proposition}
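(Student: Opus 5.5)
The plan mirrors the proof of Proposition~\ref{prop:nonuniq_d}. The domain of Eq.~\eqref{eq:nonuniq_dom_e} is a hybrid time domain: each truncation is generated by $(0, t_1, T)$ with $T \geq t_1$. On $[t_1, t_2] \times \{1\}$ I take $\chi^{\mathrm{d}}$ to coincide with $\xi(\cdot, 1)$, and on $[t_2, +\infty) \times \{1\}$ with Eq.~\eqref{eq:nonuniq_arc_e}.

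First, $\chi^{\mathrm{d}}(\cdot, 1)$ must be continuously differentiable on $[t_1, +\infty)$, which reduces to matching at $t_2$. Using Eq.~\eqref{eq:nonuniq_sigma} and $\omega(t_2 - t_1) = \pi$, I will check the following.
\begin{itemize}
\item Values: $\sigma(t_2) = 0$, $v_0 \cos \pi = -v_0$, and $q_c = 1$ on both pieces.
\item Left derivatives: $\dot\sigma(t_2) = v_0 \cos \pi = -v_0$ for both $x$ and $x_c$. The left derivative of $v$ is $-\omega v_0 \sin \pi = 0$.
\item Right derivatives from Eq.~\eqref{eq:nonuniq_arc_e}: $-v_0$ for $x$ and $x_c$, and $0$ for $v$.
\end{itemize}
The one-sided derivatives agree, so the glued arc is $C^1$. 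I regard this matching as the only substantive step; it is routine.

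Next, condition \ref{s:flow} and condition \ref{s:jump} on $\dom \xi$ are inherited from Proposition~\ref{prop:nonuniq_c}. The only jump of $\chi^{\mathrm{d}}$ is the one at $(t_1, 0)$, which is handled there.

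For $t \in (t_2, +\infty)$, we have $q_c = 1$ and $x_c = -v_0(t - t_2) > 0$. Eq.~\eqref{eq:nonuniq_law} then gives $\Phi = k \min\{x_c, 0\} = 0$, so $\Phi \leq 0$ and $(\chi^{\mathrm{d}}(t, 1), 0) \in C$ by Eq.~\eqref{eq:H_C}. By Eq.~\eqref{eq:Hr_f} with $w = 0$, the flow map equals $(v, -m^{-1}\Phi, q_c v, 0) = (-v_0, 0, -v_0, 0)$, which is $\dot\chi^{\mathrm{d}}(t, 1)$. The interior point $t_2$ of $I^1$ is covered by continuity, or directly: $\Phi = 0$ there, so the pair lies in $C$, and the derivative computed above matches the flow map. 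Finally, $\chi^{\mathrm{d}}(0, 0) = (x_0, v_0, 0, 0)$, and the pair is complete because $\sup_t \dom \chi^{\mathrm{d}} = +\infty$.
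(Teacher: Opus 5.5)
Your proposal is correct and follows the paper's proof essentially step for step: the domain is a hybrid time domain, conditions \ref{s:flow} and \ref{s:jump} on $\dom \xi$ come from Proposition~\ref{prop:nonuniq_c}, the two pieces of $\chi^{\mathrm{d}}(\cdot, 1)$ glue in a $C^1$ fashion at $t_2$ with common derivative $(-v_0, 0, -v_0, 0)$, the flow condition on $(t_2, +\infty)$ follows from $\Phi = k \min\{x_c, 0\} = 0$, and completeness follows from the unbounded ordinary time. Your explicit check that $t_2$, now interior to the flow interval of index $1$, satisfies condition \ref{s:flow} is a point the paper leaves implicit; for the domain claim you should also note that the truncations with $J = 0$ are generated by $(0, T)$ rather than $(0, t_1, T)$.
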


\begin{proof}
The set of Eq. \eqref{eq:nonuniq_dom_e} is a hybrid time domain, and the
conditions \ref{s:flow} and \ref{s:jump} are satisfied upon $\dom \xi$ by
Proposition~\ref{prop:nonuniq_c}. The map $\chi^{\mathrm{d}}(\cdot, 1)$ is
continuously differentiable upon $[t_1, +\infty)$, the derivatives of Eqs.
\eqref{eq:nonuniq_arc_c1} and \eqref{eq:nonuniq_arc_e} agreeing at $t_2$, where
each is $(-v_0, 0, -v_0, 0)$.

Let $t \in (t_2, +\infty)$. Then $q_c = 1$ and $x_c = -v_0 (t - t_2) > 0$,
whence $\Phi = 0$ by Eq. \eqref{eq:nonuniq_law} and $(\chi^{\mathrm{d}}(t, 1),
0) \in C$ by Eq. \eqref{eq:H_C}. By Eq. \eqref{eq:Hr_f} and $w = 0$, the flow
map takes the value $(-v_0, 0, -v_0, 0)$, which is
$\dot{\chi}^{\mathrm{d}}(t, 1)$ by Eq. \eqref{eq:nonuniq_arc_e}. Condition
\ref{s:jump} is satisfied, no further transition occurring. The pair is
complete by Definition~\ref{def:classification}, the ordinary time attained
upon its domain being unbounded.
\end{proof}

The models of this study admit solutions that are not physically meaningful,
and such solutions may coexist with those that are. The overlap of the flow
set and the jump set is unavoidable, a transition being admitted at the
instant at which the bodies meet and an evolution up to that instant. Two
rules of practice are advanced.
\begin{enumerate}[label=(R\arabic*), ref=R\arabic*, leftmargin=*, nosep]
\item\label{r:adhesion} The contact law is to be extended beyond the boundary
of contact so that it exhibits adhesion, $\Phi$ being positive for small
positive indentations. The clamped law of Eq. \eqref{eq:nonuniq_law} admits
the solution pair of Eq. \eqref{eq:nonuniq_arc_e}, in which the interface
remains engaged for all subsequent time whilst transmitting no force; an
adhesive extension excludes it.
\item\label{r:flow} An evolution is to be continued for as long as the state
remains in the flow set, a transition being taken only where it cannot
proceed. At the configuration of Eq. \eqref{eq:nonuniq_chi0} this selects the
solution pair of Eq. \eqref{eq:nonuniq_arc_a} and excludes that of Eq.
\eqref{eq:nonuniq_arc_b}, in which no ordinary time elapses.
\end{enumerate}
Neither rule is imposed by the model, and a computation that observes them
selects one solution pair among those admitted rather than computing the
solution; the selection is a modelling decision. The two examples exhibit the
phenomenon and do not exhaust it: the flow set and the jump set meet at
configurations that neither example attains, and the branching there has not
been examined. Two questions are accordingly left open: which of those
configurations are reachable from the initial conditions of
Section~\ref{sec:bodies}, and whether the two rules, or others, suffice to
determine the solution uniquely wherever the solutions branch.

These are, however, difficulties of a limited kind. The configurations at
which the two sets meet are thin, being described by equalities, and they are
encountered only at the instants at which a contact commences or terminates.
The two rules dispose of the branching at those instants, and the solutions
computed in Section~\ref{sec:simulation} are obtained without further
provision.

\section{Simulation}\label{sec:simulation}

\subsection{Background}

Three studies are reported. The first is of the reduced system, a ball falling
upon a fixed obstacle under gravity, with the BWMCL; the second is
of the full system, two balls upon a common axis under prescribed forces, with
the BWSHCCL; the third is again of the reduced system, and compares the two
with the law of Simon, Hunt, and Crossley (SHCCL) at a collision velocity so small
that the passage to permanent contact occupies the greater part of the motion.
Together they exercise the whole of the model of Section~\ref{sec:model} and
contact laws carrying none, one, and two internal states.

Each flow interval is integrated by the routine \texttt{solve\_ivp} of SciPy
\cite{virtanen_scipy_2020} with the method \texttt{DOP853}
\cite{hairer_solving_1993}, the relative tolerance $10^{-11}$, the absolute
tolerance $10^{-14}$, and the step size bounded above. A jump is taken when the
indicator of the flow set changes sign within a step; the state at the located
event is first advanced along the flow, by the least interval that a
derivative-free search finds, into the jump set with a small margin. The test
suite of the code checks every computed solution against the definition of a
solution pair: the hybrid time domain is ordered; the state lies in the flow
set at every sample but the last before a jump, and in the jump set at that
sample; and each state after a jump is the image of the one before it under the
jump map, exactly. Independently, the forward Euler method, started from the
initial state of each flow interval, is required to converge to its final state
at an estimated order of at least $0.9$ under grid refinement. The arithmetic
is that of the binary64 format of IEEE 754 in the default rounding mode
\cite{ieee_ieee_2019}; the implementation is in Python 3.12 and rests upon
NumPy \cite{harris_array_2020}, SciPy \cite{virtanen_scipy_2020}, and
Matplotlib \cite{hunter_matplotlib_2007}. It is available from the repository of the corresponding author.\footnote{\url{https://gitlab.com/user9716869/VEPBB}}

It should be remarked that the solutions of $\mathcal{H}_f$ and $\mathcal{H}_r$ need not be unique,
and each study reports one of them: continuing to flow for as long as flowing is
possible, the implementation observes rule \ref{r:flow} of
Section~\ref{sec:nu}.

\subsection{Bouncing Ball}

\begin{figure*}[!p]
\centering

\begin{subfigure}[t]{0.485\textwidth}
\centering
\input{plots/one_BWM_steel_bitumen.pgf}
\caption{Evolution of the state.}
\end{subfigure}
\hfill
\begin{subfigure}[t]{0.485\textwidth}
\centering
\input{plots/one_BWM_steel_bitumen_obs.pgf}
\caption{Observables of the successive collisions.}
\end{subfigure}

\caption{Repeated impact of a hardened steel ball upon a fixed bituminous
substrate under gravity, computed with the BWMCL. The
shaded intervals of (a) are those upon which the bodies are in
contact.}\label{fig:one}

\end{figure*}

\begin{table}[!t]
\caption{Parameters of the first study.}\label{tab:one}
\centering
\begin{tabular}{llll}
\hline
& quantity & value & unit \\
\hline
\multicolumn{4}{l}{\emph{Materials and geometry}} \\
$\mathcal{B}_1$ & hardened steel & & \\
& $E_1$ & $200$ & GPa \\
& $\nu_1$ & $0.30$ & \\
& $\rho_1$ & $7800$ & kg\,m$^{-3}$ \\
& $R_1$ & $20$ & mm \\
$\mathcal{B}_2$ & bitumen & & \\
& $E_2$ & $1.0$ & GPa \\
& $\nu_2$ & $0.35$ & \\
& $R_2$ & $\infty$ & \\
\hline
\multicolumn{4}{l}{\emph{Inferred}} \\
& $E^{*}$ & $1.134 \times 10^{9}$ & Pa \\
& $R_e$ & $20$ & mm \\
& $m$ & $2.614 \times 10^{-1}$ & kg \\
& $p$ & $3/2$ & \\
& $k_H = k$ & $2.138 \times 10^{8}$ & N\,m$^{-3/2}$ \\
\hline
\multicolumn{4}{l}{\emph{Chosen}} \\
& $A$ & $1$ & \\
& $\alpha$ & $0.35$ & \\
& $\beta = \gamma$ & $5.000 \times 10^{7}$ & m$^{-2}$ \\
& $n$ & $2$ & \\
& $c$ & $4.400 \times 10^{3}$ & N\,s\,m$^{-1}$ \\
\hline
\multicolumn{4}{l}{\emph{Initial conditions and applied force}} \\
& $x_0$ & $20$ & mm \\
& $v_0$ & $0$ & m\,s$^{-1}$ \\
& $g$ & $9.81$ & m\,s$^{-2}$ \\
& $F_{e,r}$ & $-mg$ & N \\
\hline
\end{tabular}
\end{table}

The first study is of a ball of hardened steel of $40$ mm diameter, released
from rest at a height of $20$ mm above a fixed substrate of bitumen and falling
under gravity. The system is that of Eqs. \eqref{eq:Hr_f} to
\eqref{eq:Hr_init}, the substrate being of infinite mass, the contact law is
the BWMCL, and the interval simulated is $0.5$ s. The pairing
is chosen for the sake of the substrate: bitumen possesses no equilibrium
modulus, and is accordingly a viscoelastic fluid, of which the element of
Maxwell borne by the contact law is the simplest representation. The ball is
hard enough to rebound from it, and yet sinks into it without bound once at
rest, and the study exhibits both.

Some of the parameters follow from the properties of the materials. Writing
$E_i$, $\nu_i$, $\rho_i$, and $R_i$ for the Young modulus, the Poisson ratio,
the density, and the radius of the body $\mathcal{B}_i$, let
\begin{equation}
E^{*} \triangleq \left( \frac{1 - \nu_1^2}{E_1}
  + \frac{1 - \nu_2^2}{E_2} \right)^{-1}
\end{equation}
and
\begin{equation}
R_e \triangleq \left( R_1^{-1} + R_2^{-1} \right)^{-1}
\end{equation}
The mass of a solid sphere is
\begin{equation}
m_i = \tfrac{4}{3} \pi R_i^3 \rho_i
\end{equation}
and the exponent of the elastic term of Eq. \eqref{eq:bwm_phi} is that of the
Hertzian relation,
\begin{equation}
p = \tfrac{3}{2}
\end{equation}
Upon loading from the undeformed state, while the hysteretic terms are negligible, the hysteretic displacement is $z_{c,1} = A z_{c,2}$, so
that by Eq. \eqref{eq:bwm_phi} the elastic and hysteretic branches together
constitute a Hertzian spring of coefficient $\left( \alpha + (1 - \alpha) A^{p}
\right) k$. The coefficient $k$ is chosen so that this is the Hertzian
coefficient:
\begin{equation}\label{eq:k_hertz}
k = \frac{k_H}{\alpha + (1 - \alpha) A^{p}}
\end{equation}
where 
\begin{equation}
k_H \triangleq \tfrac{4}{3} E^{*} \sqrt{R_e}
\end{equation}
The modulus $E_2$ is that which the substrate exhibits over the duration of a
contact, and not a modulus measured at rest, which the substrate does not possess.

The parameters $A$, $\alpha$, $\beta$, $\gamma$, $n$, and $c$ are not determined
by the material data, and were chosen so that the study exhibits the features
with which this article is concerned: the first collisions leave a residual
indentation that is an appreciable fraction of the greatest indentation, that
fraction diminishes over the sequence, the passage to permanent contact occurs
within the interval simulated, and the sinking which follows it is discernible
beside the indentations attained in the collisions. The last of these is
governed by $c$, which sets both the rate $mg/c$ at which the interface flows
under the weight of the ball and the relaxation of the element of Maxwell within
a contact, the latter being comparable with the duration of a contact. The
values are recorded in Table \ref{tab:one}. No claim is made that they identify
the response of the materials named; their identification against measurement
lies outside the scope of this article, and the results below are to be read as
a demonstration of the formulation rather than as a prediction.

The solution pair is defined upon a hybrid time domain, and the quantities
plotted against the ordinary time in Fig. \ref{fig:one}(a) are therefore
composed of the restrictions of the hybrid arc to the successive intervals of
flow. Plotted are the applied force $F_{e,r}$; the relative displacement $x$ and
the relative velocity $v$ of Eq. \eqref{eq:main}; the two components $z_{c,1}$
and $z_{c,2}$ of the internal state of the contact law; and the contact force
$F_c$. The intervals of flow upon which $q_c$ is unity are shaded.

Over the sequence of collisions the contacts lengthen and the flights shorten
until the two are comparable and the sequence terminates. The ball thereafter
remains in contact and sinks steadily at the rate $mg/c$, the interface
possessing no configuration at which it supports the weight of the ball without
flowing.

The observables plotted for the successive collisions are the following. Let $E$
be the hybrid time domain of the solution pair. Since $q_c$ is constant upon each
interval of flow, and is exchanged for $1 - q_c$ at every transition, the
intervals alternate between the two regimes; the bodies being apart at the
initial instant, the collision of index $i \in \mathbb{Z}_{\geq 0}$ occupies
\begin{equation}\label{eq:obs_interval}
I_E^{2i + 1} = [t_e^{i}, t_s^{i}]
\end{equation}
(the collisions being the contacts that are terminated by a separation) and the flight which follows it occupies $I_E^{2i + 2}$. Every quantity below is
evaluated at the hybrid time $(t, 2i + 1)$, that is, within the collision and not
after the transition by which it is terminated. The duration of the contact is
\begin{equation}\label{eq:t_d}
t_d^{i} \triangleq t_s^{i} - t_e^{i}
\end{equation}
and that of the flight is
\begin{equation}\label{eq:t_f}
t_f^{i} \triangleq t_e^{i + 1} - t_s^{i}
\end{equation}
The approach velocity is
\begin{equation}\label{eq:v_i}
v_i \triangleq v(t_e^{i}, 2i + 1)
\end{equation}
which is negative, and the coefficient of restitution is that of Newton,
\begin{equation}\label{eq:e_i}
e_i \triangleq -v(t_s^{i}, 2i + 1) / v_i
\end{equation}
The greatest indentation and the greatest contact force are
\begin{equation}\label{eq:xc_max}
x_c^{\max, i} \triangleq \max \left\{ \abs{x_c(t, 2i + 1)} :
  t \in I_E^{2i + 1} \right\}
\end{equation}
\begin{equation}\label{eq:F_max}
F_c^{\max, i} \triangleq \max \left\{ F_c(t, 2i + 1) :
  t \in I_E^{2i + 1} \right\}
\end{equation}
and the residual indentation is
\begin{equation}\label{eq:xc_r}
x_c^{r, i} \triangleq \abs{x_c(t_s^{i}, 2i + 1)}
\end{equation}
this being that which the interface retains at the separation and which the jump
map discards. The energy surrendered by the relative motion to the interface is
\begin{equation}\label{eq:W}
W_i \triangleq -\int_{t_e^{i}}^{t_s^{i}} F_c \, v \, \mathrm{d} t
\end{equation}
Lastly, let
\begin{equation}\label{eq:J}
J_{F_c}^{i} \triangleq \int_{t_e^{i}}^{t_s^{i}} F_c \, \mathrm{d} t, \quad
J_{F_e}^{i} \triangleq \int_{t_e^{i}}^{t_s^{i}} F_{e,r} \, \mathrm{d} t
\end{equation}
be the impulses of the contact force and of the applied force over the contact.
Writing $\Delta v_i \triangleq v(t_s^{i}, 2i + 1) - v_i$ for the velocity
imparted by the collision, Eq. \eqref{eq:main} gives
\begin{equation}\label{eq:momentum}
m \, \Delta v_i = J_{F_c}^{i} + J_{F_e}^{i}
\end{equation}
and the extent to which the applied force intrudes upon the collision is
measured by
\begin{equation}\label{eq:Lambda}
\Lambda_i \triangleq J_{F_e}^{i} / J_{F_c}^{i}
\end{equation}
This is negative while the applied force opposes the rebound, it is small in
magnitude when the collision may be treated as impulsive, and, by Eq.
\eqref{eq:momentum}, it tends to $-1$ at an accumulation of collisions, the
velocity being then unaltered by them.

The observables of the successive collisions are shown in Fig. \ref{fig:one}(b).
The approach velocity, the greatest indentation, the residual indentation, the
greatest contact force, and the energy surrendered to the interface diminish
from one collision to the next. The duration of the contact grows over the
sequence while that of the flight falls, until the flights are shorter than the
contacts and the sequence terminates. The coefficient of restitution rises over
the first collisions, the rate-dependent part of the dissipation diminishing
with the approach velocity, and is then nearly constant, the loss being there
borne by the rate-independent hysteretic branch; it falls over the last
collisions, steeply at the last. The ratio $\Lambda$ is small in magnitude over
the first collisions, which admit there of treatment as impulsive, and
approaches unity in magnitude at the last. The final decline of the coefficient
of restitution is of this origin: the impulse of the weight is no longer
negligible beside that of the contact.

\subsection{Repeated Binary Collisions}

\begin{figure*}[!p]
\centering

\begin{subfigure}[t]{0.485\textwidth}
\centering
\input{plots/two_BWSHC_PTFE_Acetal.pgf}
\caption{Evolution of the state.}
\end{subfigure}
\hfill
\begin{subfigure}[t]{0.485\textwidth}
\centering
\input{plots/two_BWSHC_PTFE_Acetal_obs.pgf}
\caption{Observables of the successive collisions.}
\end{subfigure}

\caption{Repeated collision of a ball of poly(tetrafluoroethylene) with a ball
of polyoxymethylene, of equal diameter and constrained to move upon a common
axis in the absence of gravity, computed with the
BWSHCCL. The applied forces bring the center
of mass of the pair to rest while drawing the balls together. The shaded
intervals of (a) are those upon which the bodies are in
contact.}\label{fig:two}

\end{figure*}

\begin{table}[!t]
\caption{Parameters of the second study.}\label{tab:two}
\centering
\begin{tabular}{llll}
\hline
& quantity & value & unit \\
\hline
\multicolumn{4}{l}{\emph{Materials and geometry}} \\
$\mathcal{B}_1$ & poly(tetrafluoroethylene) & & \\
& $E_1$ & $0.50$ & GPa \\
& $\nu_1$ & $0.46$ & \\
& $\rho_1$ & $2200$ & kg\,m$^{-3}$ \\
& $R_1$ & $20$ & mm \\
$\mathcal{B}_2$ & polyoxymethylene & & \\
& $E_2$ & $2.90$ & GPa \\
& $\nu_2$ & $0.35$ & \\
& $\rho_2$ & $1410$ & kg\,m$^{-3}$ \\
& $R_2$ & $20$ & mm \\
\hline
\multicolumn{4}{l}{\emph{Inferred}} \\
& $E^{*}$ & $5.321 \times 10^{8}$ & Pa \\
& $R_e$ & $10$ & mm \\
& $m_1$ & $7.372 \times 10^{-2}$ & kg \\
& $m_2$ & $4.725 \times 10^{-2}$ & kg \\
& $m$ & $2.879 \times 10^{-2}$ & kg \\
& $\eta$ & $0.6094$ & \\
& $p$ & $3/2$ & \\
& $k_H = k$ & $7.095 \times 10^{7}$ & N\,m$^{-3/2}$ \\
\hline
\multicolumn{4}{l}{\emph{Chosen}} \\
& $A$ & $1$ & \\
& $\alpha$ & $0.35$ & \\
& $\beta = \gamma$ & $2.409 \times 10^{8}$ & m$^{-2}$ \\
& $n$ & $2$ & \\
& $c$ & $2.500 \times 10^{7}$ & N\,s\,m$^{-5/2}$ \\
\hline
\multicolumn{4}{l}{\emph{Initial conditions and applied forces}} \\
& $x_0$ & $1.5$ & mm \\
& $v_0$ & $0$ & m\,s$^{-1}$ \\
& $x_{m,0}$ & $0$ & mm \\
& $v_{m,0}$ & $120$ & mm\,s$^{-1}$ \\
& $F_{e,0}$ & $0.5759$ & N \\
& $T_e$ & $0.20$ & s \\
& $T_s$ & $0.10$ & s \\
\hline
\end{tabular}
\end{table}

The second study is of two balls of $40$ mm diameter, one of
poly(tetrafluoroethylene) and one of polyoxymethylene, constrained to move upon
a common axis in the absence of gravity and driven by prescribed axial forces.
The system is that of Eqs. \eqref{eq:H_f} to \eqref{eq:H_init}, the contact
law is the BWSHCCL, and the interval simulated is $0.5$ s.

Let $S : \mathbb{R} \longrightarrow \mathbb{R}$ be given by
\begin{equation}\label{eq:ramp}
S(w) \triangleq
\begin{cases}
0 & w < 0 \\
\tfrac{1}{2} \left( 1 - \cos \pi w \right) & 0 \leq w \leq 1 \\
1 & w > 1
\end{cases}
\end{equation}
which is continuously differentiable, its derivative being
\begin{equation}\label{eq:ramp_d}
S'(w) =
\begin{cases}
0 & w < 0 \\
\tfrac{1}{2} \pi \sin \pi w & 0 \leq w \leq 1 \\
0 & w > 1
\end{cases}
\end{equation}
and, with $F_{e,0}, v_{m,0}, T_e, T_s \in \mathbb{R}_{>0}$, let $a :
\mathbb{R}_{\geq 0} \longrightarrow \mathbb{R}$ be given by
\begin{equation}\label{eq:a_of_t}
a(t) \triangleq \frac{v_{m,0}}{T_s} \, S'(t / T_s)
\end{equation}
and $F_{e,1}, F_{e,2} : \mathbb{R}_{\geq 0} \longrightarrow \mathbb{R}$ given by
\begin{equation}\label{eq:u_1}
F_{e,1}(t) \triangleq -F_{e,0} \left( 1 + S(t / T_e) \right) - m_1 a(t)
\end{equation}
\begin{equation}\label{eq:u_2}
F_{e,2}(t) \triangleq F_{e,0} \left( 1 + S(t / T_e) \right) - m_2 a(t)
\end{equation}
The first term draws the balls together, growing in magnitude from $F_{e,0}$ to
$2 F_{e,0}$ and constant for $t \geq T_e$; the second decelerates both bodies
alike, whence 
\begin{equation}
v_m(t) = v_{m,0} \left( 1 - S(t / T_s) \right)
\end{equation}
which vanishes for $t \geq T_s$. By Eq. \eqref{eq:u} the reduced force is 
\begin{equation}
F_{e,r}(t) = -F_{e,0} \left( 1 + S(t/ T_e) \right)
\end{equation}
the deceleration contributing nothing to it.

The parameters are inferred and chosen as in the first study, the coefficient
$k$ being given by Eq. \eqref{eq:k_hertz}. The values are
recorded in Table \ref{tab:two}; as in the first study, they were chosen to
display the features with which this article is concerned and were not
identified against measurement.

The evolution of the state is shown in Fig. \ref{fig:two}(a), the quantities
plotted against the ordinary time as in the first study. The contacts lengthen
and the flights shorten until the two are comparable and the sequence
terminates. The center of mass is at rest well before the sequence terminates;
the applied forces are thereafter equal and opposite, and $F_{e,1}$ and
$F_{e,r}$ are indistinguishable in the figure. Upon the termination of the
sequence the balls remain in contact, settling towards an indentation at which
the elastic and hysteretic branches sustain the reduced force.

The observables of the successive collisions are shown in Fig. \ref{fig:two}(b).
The greatest indentation, the residual indentation, the greatest contact force,
and the energy surrendered to the interface diminish from one collision to the
next, the residual indentation the faster of them: its share of the greatest
indentation, appreciable at the first collision, is negligible at the last. The
coefficient of restitution rises as the approach velocity falls, until the
impulse of the applied force is no longer negligible beside that of the contact;
over the collisions that follow it falls.

\subsection{The Approach to Permanent Contact}

\begin{figure*}[!t]
\centering
\input{plots/slow_steel_bitumen.pgf}
\caption{Relative displacement of a hardened steel ball released from a height
of $40$ \textmu m above a fixed bituminous substrate, computed with three
contact laws. The filled circles are the commencements of the contacts and the
open circles the separations; the broken line is the indentation at which a
Hertzian spring of the coefficient proper to the substrate of that panel
supports the weight of the ball.}\label{fig:slow}
\end{figure*}

\begin{table}[!t]
\caption{Parameters of the third study.}\label{tab:slow}
\centering
\begin{tabular}{llll}
\hline
& quantity & value & unit \\
\hline
\multicolumn{4}{l}{\emph{Materials and geometry}} \\
$\mathcal{B}_1$ & hardened steel & & \\
& $E_1$ & $200$ & GPa \\
& $\nu_1$ & $0.30$ & \\
& $\rho_1$ & $7800$ & kg\,m$^{-3}$ \\
& $R_1$ & $20$ & mm \\
$\mathcal{B}_2$ & glassy bitumen & & \\
& $E_2$ & $2.0$ & GPa \\
& $\nu_2$ & $0.35$ & \\
& $R_2$ & $\infty$ & \\
\hline
\multicolumn{4}{l}{\emph{Inferred}} \\
& $E^{*}$ & $2.256 \times 10^{9}$ & Pa \\
& $R_e$ & $20$ & mm \\
& $m$ & $2.614 \times 10^{-1}$ & kg \\
& $p$ & $3/2$ & \\
& $k_H = k$ & $4.254 \times 10^{8}$ & N\,m$^{-3/2}$ \\
\hline
\multicolumn{4}{l}{\emph{Chosen}} \\
& $A$ & $1$ & \\
& $n$ & $2$ & \\
SHCCL & $\alpha$ & $1$ & \\
& $\beta = \gamma$ & $0$ & m$^{-2}$ \\
& $c$ & $4.300 \times 10^{9}$ & N\,s\,m$^{-5/2}$ \\
BWSHCCL & $\alpha$ & $0.35$ & \\
& $\beta = \gamma$ & $5.000 \times 10^{9}$ & m$^{-2}$ \\
& $c$ & $2.150 \times 10^{9}$ & N\,s\,m$^{-5/2}$ \\
\hline
\multicolumn{4}{l}{\emph{Initial conditions and applied force}} \\
& $x_0$ & $40$ & \textmu m \\
& $v_0$ & $0$ & m\,s$^{-1}$ \\
& $g$ & $9.81$ & m\,s$^{-2}$ \\
& $F_{e,r}$ & $-mg$ & N \\
\hline
\end{tabular}
\end{table}

The third study is of the ball of the first, released from a height of $40$
\textmu m, whence the first collision occurs at $28.0$ mm\,s$^{-1}$. The system
is that of Eqs. \eqref{eq:Hr_f} to \eqref{eq:Hr_init} and the interval simulated
is $0.1$ s. Its purpose is to exhibit the passage to permanent contact. 
Three contact laws are compared: that of Simon, Hunt,
and Crossley (SHCCL), BWSHCCL, and BWMCL. The first two are borne by a substrate of bitumen below its
transition to the glassy state, of which the parameters are recorded in Table
\ref{tab:slow}; the third is borne by the substrate of the first study, with the
parameters of Table \ref{tab:one}, the element of Maxwell being appropriate to
the material only above that transition.

The SHCCL is equivalent to BWSHCCL with $\alpha = 1$, and is so
recorded in Table \ref{tab:slow}. That value lies upon the boundary of the band
admitted there, and is excluded from it; nothing is thereby lost. The
algebraic law was accordingly employed in the computation, its hysteretic branch
being absent rather than inactive. The indentation at which a Hertzian spring
supports the weight of the ball,
\begin{equation}\label{eq:x_static}
x_c^{s} \triangleq \left( \frac{m g}{k_H} \right)^{1/p}
\end{equation}
is the configuration in which the ball would come to rest were the interface
elastic, and serves as the term of comparison. It is $3.312$ \textmu m for the
glassy substrate, which bears SHCCL and BWSHCCL, and $5.240$ \textmu m for the
substrate of the first study, which bears BWMCL.

The three laws are compared in Fig. \ref{fig:slow}. In each, the contacts lengthen
and the flights shorten, and the collisions are finite in number, whence the
solutions are not Zeno in the sense of Definition~\ref{def:classification}.
What follows the last separation distinguishes the three.
Under SHCCL the ball approaches $-x_c^{s}$ by a decaying oscillation: the contact
force is extinguished at a vanishing indentation, nothing is retained at a
separation, and the configuration of rest is that of the elastic problem,
bearing no trace of the collisions that preceded it. Under BWSHCCL it approaches
an indentation appreciably smaller than $-x_c^{s}$, the hysteretic branch bearing
a part of the weight; the configuration of rest is accordingly not determined by
the applied force alone. Under BWMCL it comes to no rest at all, the interface
possessing no configuration in which it sustains a load without flowing; the
ball passes $-x_c^{s}$ and descends thereafter at the rate $mg/c$, without
bound.

\section{Conclusions and Future Work} \label{sec:conclusions}

The repeated direct collinear impact of two rigid bodies, each subject to an
arbitrary applied force, has been formulated as a hybrid dynamical system, for a
class of contact laws admitting a permanent indentation. The contact interface
carries a state of its own and is coupled to the bodies through the contact
force and the relative velocity alone; the equations of motion of the bodies
accordingly contain neither switching nor resets, and the motion of the center
of mass of the system is untouched by the contact altogether. The conditions
under which a contact commences and is terminated, and the disposition of the
internal state of the interface across a separation, are constituents of
the specification rather than features of an implementation. 

Little is asked of the contact law. Beyond the existence and uniqueness of
solutions, one condition suffices,
namely that an undeformed interface transmits no force; nothing is asked
of the evolution of the internal state apart from that the initial internal state does not depend on the velocity of the impact. The class so delimited is not confined
to laws that are rate-independent, nor to any one arrangement of rheological
elements, and a law enters the hybrid system only
through the two maps that constitute it, the remaining data being the same for
every law admitted.

The numerical studies exhibit two features that a treatment of impact as an
instantaneous event cannot express. The duration of the contacts grows while
that of the flights diminishes, until the bodies are apart for less time than
they are in contact; and the sequence of collisions terminates not by
accumulation but at the instant at which the greatest contact force falls to a
small multiple of the applied force, whereupon the interface can no longer expel
the bodies against it. The share of the greatest indentation that the interface
retains at a separation, appreciable at the outset, diminishes over the
sequence.

The scope of the foregoing should be stated plainly. The bodies are convex and
axisymmetric, their motion is confined to the axis, and they meet at a single
point, so that no tangential force arises; the indentation is small in
comparison with the dimensions of the bodies; and every contact is a fresh
loading of material that has not been indented before, which is exact for a
solid that recovers between contacts and an assumption otherwise. The parameters
employed in the simulations are illustrative, and were not
identified against measurement.

The model of the interface is phenomenological, and its fidelity is
least assured where the incident velocities are smallest. 
Those are also the collisions least accessible
to measurement. However, little is lost by this. It is the contact law that governs the response,
and the framework alters nothing within it: wherever a collision is brief in
comparison with the interval that separates it from the next, and the contact
force great in comparison with the applied force, the account given here and the
customary one agree. They part only in the neighbourhood of the termination of a
sequence, which is at once the regime in which the customary account ceases to
apply and the regime in which measurement is least able to adjudicate between
them.

The natural continuation is the embedding of the contact interface in the
setting for which it is ultimately intended, namely that of a system of rigid
bodies in three dimensions, with contacts at several points and with the
tangential forces admitted. The structure of the interface commends itself to
this: it is coupled to the bodies through the contact force and the relative
velocity alone, so that it may be attached at a point of contact without
disturbing the equations of motion, and an interface carries its own state, so
that several may be carried at once, each with its own transitions, the map that
effects them acting upon one alone. The equations of motion of the bodies would
retain the form they have here, the whole of the hybrid structure residing, as
it does here, in the contact.

\appendices

\section{Notation, Conventions, Foundations}\label{sec:NCF}

Essentially all of the definitions and results that are employed in this article are standard in the fields of set theory, general topology, analysis, ordinary differential equations, and nonlinear systems/control. They can be found in a number of textbooks and monographs on these subjects (e.g., see Ref. \cite{takeuti_introduction_1982}, Refs. \cite{kelley_general_2017, morris_topology_2020}, Refs. \cite{bloch_real_2010, shurman_calculus_2016, ziemer_modern_2017}, Refs. \cite{chicone_ordinary_1999, schaeffer_ordinary_2016}, Refs. \cite{lasalle_extensions_1960, yoshizawa_stability_1966, yoshizawa_stability_1975, sontag_mathematical_1998, sastry_nonlinear_1999, khalil_nonlinear_2002, haddad_nonlinear_2011}, respectively). 

\begin{definition}
$\in$ denotes the set membership relation, $\subseteq$ denotes the subset relation, $\subset$ denotes the proper subset relation, $\cup$ denotes the binary set union operation, $\cap$ denotes the binary set intersection operation, $\setminus$ denotes the binary set difference operation, $\mathcal{P}$ denotes the power set operation, $\emptyset$ denotes the empty set, $(a_1, \ldots, a_n)$ denotes an $n$-tuple, $\{ a_1, \ldots, a_n \}$ denotes an unordered collection of elements.\footnote{It should be noted that some of the syntactic constructions may carry different semantics depending on the context. For example, $(a, b)$ may be used as a pair or as an interval. It is hoped that the context of the discussion will always make the meaning of a given syntactic construction apparent.}
\end{definition}

\begin{definition}
By convention, a topological space cannot be empty. Suppose $X \neq \emptyset$ and $\tau \subseteq \mathcal{P} X$ is a topology on $X$. $\closure A$ denotes the closure of $A \subseteq X$ and $\interior A$ denotes the interior of $A \subseteq X$; if $Y \subseteq X$ and $Y \neq \emptyset$, then $\tau | Y$ will denote the subspace topology of $\tau$ on $Y$; the sets $A \subseteq X$ and $B \subseteq X$ are separated if and only if $\closure A \cap B = A \cap \closure B = \emptyset$; a set $C \subseteq X$ is clopen if and only if it is open and closed; $A \subseteq X$ is connected if and only if it is not a union of two nonempty separated sets; $(X, \tau)$ is a connected topological space if and only if $X$ is a connected set.  
\end{definition}

\begin{definition}
$\mathbb{Z}$ is the set of all integers; $\mathbb{R}$ is the set of all real numbers; an interval of real numbers $I \subseteq \mathbb{R}$ is non-degenerate if it has a non-empty interior; $\mathbb{K}_{>a} \triangleq (a, +\infty) \cap \mathbb{K}$, $\mathbb{K}_{<a} \triangleq (-\infty, a) \cap \mathbb{K}$, $\mathbb{K}_{\geq a} \triangleq [a, +\infty) \cap \mathbb{K}$, and $\mathbb{K}_{\leq a} \triangleq (-\infty, a] \cap \mathbb{K}$ for any $a \in \mathbb{R}$ with $\mathbb{K} \subseteq \mathbb{R}$; $\mathbb{R}^n$ with $n \in \mathbb{Z}_{\geq 1}$ is the set of $n$-tuples of real numbers (augmented with the structure of the Euclidean space); if $X = (x_1, \ldots, x_n) \in \mathbb{R}^n$ with $n \in \mathbb{Z}_{\geq 1}$, then $X_i \triangleq x_i$ for all $i \in \{ 1, \ldots, n \}$; $f : X \longrightarrow Y$ denotes a function with the domain $X$ and the codomain $Y$; $\dom f$ denotes the domain of $f$; given $f : X \longrightarrow Y$ and $A \subseteq X$, $f[A]$ denotes the image of $A$ under $f$; if $f : X \longrightarrow \mathbb{R}^n$ with $n \in \mathbb{Z}_{\geq 1}$, then $f_i : X \longrightarrow \mathbb{R}$ is given by $f_i(x) \triangleq (f(x))_i$ for all $x \in X$ and $i \in \{ 1, \ldots, n \}$; unless stated otherwise, the topology of a subset of $\mathbb{R}^n$ with $n \in \mathbb{Z}_{\geq 1}$ is always the subspace topology of the standard topology on $\mathbb{R}^n$; given $A \subseteq \mathbb{R}$, $\inf A \in \mathbb{R} \cup \{ -\infty, +\infty \}$ denotes the infimum of $A$ and $\sup A \in \mathbb{R} \cup \{ -\infty, +\infty \}$ denotes the supremum of $A$; given a sequence $\{ x_i \in \mathbb{R}^n \}_{i \in \mathbb{Z}_{\geq 1}}$ with $n \in \mathbb{Z}_{\geq 1}$, $\lim_{i \rightarrow +\infty} x_i$ denotes the limit of $x$, provided that it exists; given $a, b \in \mathbb{R}$ with $a \leq b$ and an integrable function $f : [a, b] \longrightarrow \mathbb{R}$, $\int_a^b f(t) \, \mathrm{d}t$ denotes the integral of $f$ upon $[a, b]$; $\langle \cdot, \cdot \rangle : \mathbb{R}^n \times \mathbb{R}^n \longrightarrow \mathbb{R}$ with $n \in \mathbb{Z}_{\geq 1}$ is the canonical inner product on $\mathbb{R}^n$; $\abs{\cdot} : \mathbb{R}^n \longrightarrow \mathbb{R}_{\geq 0}$ with $n \in \mathbb{Z}_{\geq 1}$ is the Euclidean norm on $\mathbb{R}^n$, which coincides with the absolute value for $n = 1$; given a differentiable function $f : X \longrightarrow Y$ such that $X \subseteq \mathbb{R}$ and $Y \subseteq \mathbb{R}^n$ with $n \in \mathbb{Z}_{\geq 1}$, $\mathrm{d}f/\mathrm{d}x$ and $\partial f$ may be used to denote the derivative of the function; the overdot notation $\dot{x} \triangleq (\mathrm{d}x/\mathrm{d}t)$ may be used to represent the derivative of a differentiable function $x : X \longrightarrow \mathbb{R}^n$ with $n \in \mathbb{Z}_{\geq 1}$ and $X \subseteq \mathbb{R}$ with respect to the time variable in the context of mechanics; given a differentiable function $f : X \longrightarrow Y$ such that $X \subseteq \mathbb{R}^n$ and $Y \subseteq \mathbb{R}$ with $n \in \mathbb{Z}_{\geq 1}$, $\partial_i f$ denotes the $i$-th partial derivative of the function for $i \in \{ 1, \ldots, n \}$, and $\nabla f : X \longrightarrow \mathbb{R}^n$ denotes its gradient, given by $(\nabla f(x))_i \triangleq \partial_i f(x)$ for all $x \in X$ and $i \in \{ 1, \ldots, n \}$; given $m \in \mathbb{Z}_{\geq 1}$, $n_1, \ldots, n_m \in \mathbb{Z}_{\geq 1}$, and a differentiable function $f : X \longrightarrow \mathbb{R}$ with $X \subseteq \mathbb{R}^{n_1} \times \cdots \times \mathbb{R}^{n_m}$, $\partial_k f : X \longrightarrow \mathbb{R}^{n_k}$ denotes the gradient of $f$ with respect to its $k$-th argument for $k \in \{ 1, \ldots, m \}$, which for $n_k = 1$ coincides with the $k$-th partial derivative; $\mathcal{K}_\infty$ denotes the set of all continuous strictly increasing maps $\alpha : \mathbb{R}_{\geq 0} \longrightarrow \mathbb{R}_{\geq 0}$ such that $\alpha(0) = 0$ and $\alpha(r) \rightarrow +\infty$ as $r \rightarrow +\infty$.
\end{definition}

\section*{Use of Generative Artificial Intelligence (AI)}

During the preparation of this work, the authors utilized generative artificial
intelligence to assist in the theoretical, computational, and textual
development of the research. The initial version of the manuscript and the code
was developed primarily with Anthropic Claude Opus 5 and, to a lesser extent, Google Gemini 3.1 Pro; the present
revision, with Anthropic Claude Opus 5.5 and Claude Fable 5.1, the latter
operated through Claude Code. The utilization functioned as a continuous,
human-in-the-loop collaborative process rather than a passive delegation of
tasks; the final content emerged through iterative cycles of prompting,
critical evaluation, and refinement by the authors. In the theoretical domain,
the AI was employed to search and summarize literature, aid in the translation
and contextual interpretation of historical, non-English texts, formulate
intermediate mathematical definitions and lemmas, and assist in proof
development. In the computational domain, Claude Opus 5 helped draft software
architecture specifications and generate code based on them; in the revision,
Claude Opus 5.5 helped specify unit and integration tests and review their
outcomes, and Claude Fable 5.1 implemented them together with the resulting
amendments to the code. Furthermore, the AI assisted in manuscript preparation.
This included drafting specific paragraphs from author-provided specifications,
generating and refining complex LaTeX environments, performing general
copyediting to elevate the academic prose, and, in the revision, reviewing the
manuscript for errors and inconsistencies. The utilization extended further, to
conceptual development: several principal ideas arose in exchanges with the AI,
the authors being unable to attribute each to a prompt or a response. Because
all conceptual, mathematical, computational, and textual elements were actively
co-created and continuously verified by the authors throughout the
aforedescribed collaborative workflow, the authors assume full responsibility
for the content of the publication.

\section*{Acknowledgment}

The authors would like to acknowledge their families, colleagues, and friends. Special thanks go to the members of staff of Auburn University Libraries for their assistance in finding rare and out-of-print research articles and research monographs. The authors would also like to acknowledge the professional online communities, instructional websites, and various online service providers, especially \url{https://www.adobe.com/acrobat/online/pdf-to-word.html}, \url{https://archive.org/}, \url{https://capitalizemytitle.com}, \url{https://www.overleaf.com}, \url{https://pgfplots.net}, \url{https://proofwiki.org/}, \url{https://www.reddit.com}, \url{https://scholar.google.com}, \url{https://stackexchange.com}, \url{https://stringtranslate.com}, \url{https://www.wikipedia.org}. We also note that the results of some of the calculations that are presented in this article were performed with the assistance of the software Wolfram Mathematica \cite{wolfram_research_inc_mathematica_2023}. Other software that was used to produce this article included Adobe Acrobat Reader, Adobe Digital Editions, Anthropic Claude, DiffMerge, Git, GitLab, Google Chrome, Google Gemini, Grammarly, Jupyter Notebook, LibreOffice, macOS Tahoe, Mamba, Microsoft Outlook, Preview, Safari, TeX Live/MacTeX, Texmaker, and Zotero.

\bibliography{template.bib}

@article{bouc_modemathematique_1971,
	title = {Mod{\`e}le {Math{\'e}matique} d{\textquoteright}{Hyst{\'e}r{\'e}sis}},
	volume = {24},
	number = {1},
	journal = {Acustica},
	author = {Bouc, R.},
	year = {1971},
	pages = {16--25},
}

@article{milehins_boucwen_2025,
	title = {The {Bouc}{\textendash}{Wen} {Model} for {Binary} {Direct} {Collinear} {Collisions} of {Convex} {Viscoplastic} {Bodies}},
	volume = {20},
	doi = {10.1115/1.4068158},
	number = {6},
	journal = {ASME Journal of Computational and Nonlinear Dynamics},
	author = {Milehins, Mihails and Marghitu, Dan B.},
	year = {2025},
	pages = {061005},
}

@book{hairer_solving_1993,
	address = {Berlin, The Federal Republic of Germany},
	edition = {2},
	series = {Springer {Series} in {Computational} {Mathematics}},
	title = {Solving {Ordinary} {Differential} {Equations} {I}: {Nonstiff} {Problems}},
	volume = {8},
	isbn = {978-3-540-56670-0},
	publisher = {Springer},
	author = {Hairer, Ernst and N{\o}rsett, Syvert P. and Wanner, Gerhard},
	editor = {Bank, R. and Graham, R. L. and Stoer, J. and Varga, R. and Yserentant, H.},
	year = {1993},
}

@book{pfeiffer_multibody_2004,
	address = {Weinheim, The Federal Republic of Germany},
	series = {Wiley {Series} in {Nonlinear} {Science}},
	title = {Multibody {Dynamics} {With} {Unilateral} {Contacts}},
	isbn = {978-0-471-15565-2},
	publisher = {WILEY-VCH Verlag GmbH \& Co. KGaA},
	author = {Pfeiffer, Friedrich and Glocker, Christoph},
	editor = {Nayfeh, Ali H. and Holden, Arun V.},
	year = {2004},
}

@book{ziemer_modern_2017,
	address = {Cham, The Swiss Confederation},
	edition = {2},
	series = {Graduate {Texts} in {Mathematics}},
	title = {Modern {Real} {Analysis}},
	volume = {278},
	isbn = {978-3-319-64628-2},
	publisher = {Springer International Publishing},
	author = {Ziemer, William P. and Torres, Monica},
	editor = {Axler, Sheldon and Ribet, Kenneth},
	year = {2017},
}

@book{shurman_calculus_2016,
	address = {Cham, The Swiss Confederation},
	series = {Undergraduate {Texts} in {Mathematics}},
	title = {Calculus and {Analysis} in {Euclidean} {Space}},
	isbn = {978-3-319-49314-5},
	publisher = {Springer International Publishing AG},
	author = {Shurman, Jerry},
	editor = {Axler, S. and Ribet, K.},
	year = {2016},
}

@book{tabor_hardness_1951,
	address = {Oxford, The United Kingdom of Great Britain and Northern Ireland},
	series = {Monographs on the {Physics} and {Chemistry} of {Materials}},
	title = {The {Hardness} of {Metals}},
	publisher = {Oxford University Press},
	author = {Tabor, David},
	year = {1951},
}

@book{stronge_impact_2018,
	address = {Cambridge, The United Kingdom of Great Britain and Northern Ireland},
	edition = {2},
	title = {Impact {Mechanics}},
	isbn = {978-0-521-84188-7},
	publisher = {Cambridge University Press},
	author = {Stronge, William James},
	year = {2018},
}

@book{roithmayr_dynamics_2016,
	address = {New York, NY},
	title = {Dynamics: {Theory} and {Application} of {Kane}{\textquoteright}s {Method}},
	isbn = {978-1-107-00569-3},
	publisher = {Cambridge University Press},
	author = {Roithmayr, Carlos M. and Hodges, Dewey H.},
	year = {2016},
}

@book{goldsmith_impact_1960,
	address = {London, The United Kingdom of Great Britain and Northern Ireland},
	title = {Impact: the {Theory} and {Physical} {Behaviour} of {Colliding} {Solids}},
	publisher = {Edward Arnold},
	author = {Goldsmith, Werner},
	year = {1960},
}

@inproceedings{ning_elastic-plastic_1993,
	address = {Rotterdam, The Netherlands},
	title = {Elastic-{Plastic} {Impact} of {Fine} {Particles} {With} a {Surface}},
	booktitle = {Proceedings of the {Second} {International} {Conference} on {Micromechanics} of {Granular} {Media}: {Powders} \& {Grains} 93},
	publisher = {A.A. Balkema},
	author = {Ning, Z and Thornton, C},
	editor = {Thornton, C.},
	year = {1993},
	pages = {33--38},
}

@book{yoshizawa_stability_1975,
	address = {New York, NY},
	series = {Applied {Mathematical} {Sciences}},
	title = {Stability {Theory} and the {Existence} of {Periodic} {Solutions} and {Almost} {Periodic} {Solutions}},
	volume = {14},
	isbn = {978-1-4612-6376-0},
	publisher = {Springer-Verlag New York},
	author = {Yoshizawa, T.},
	year = {1975},
}

@article{barnhart_stresses_1957,
	title = {Stresses in {Beams} {During} {Transverse} {Impact}},
	volume = {24},
	doi = {10.1115/1.4011560},
	number = {3},
	journal = {ASME Journal of Applied Mechanics},
	author = {Barnhart, K. E. and Goldsmith, Werner},
	year = {1957},
	pages = {440--446},
}

@article{hunt_coefficient_1975,
	title = {Coefficient of {Restitution} {Interpreted} as {Damping} in {Vibroimpact}},
	volume = {42},
	doi = {10.1115/1.3423596},
	number = {2},
	journal = {ASME Journal of Applied Mechanics},
	author = {Hunt, K. H. and Crossley, F. R. E.},
	year = {1975},
	pages = {440--445},
}

@article{chatterjee_two_1998,
	title = {Two {Interpretations} of {Rigidity} in {Rigid}-{Body} {Collisions}},
	volume = {65},
	doi = {10.1115/1.2791929},
	number = {4},
	journal = {ASME Journal of Applied Mechanics},
	author = {Chatterjee, A. and Ruina, A.},
	year = {1998},
	pages = {894--900},
}

@article{ismail_impact_2008,
	title = {Impact of {Viscoplastic} {Bodies}: {Dissipation} and {Restitution}},
	volume = {75},
	doi = {10.1115/1.2965371},
	number = {6},
	journal = {ASME Journal of Applied Mechanics},
	author = {Ismail, K. A. and Stronge, W. J.},
	year = {2008},
	pages = {061011},
}

@article{xiong_contact_2014,
	title = {A {Contact} {Force} {Model} {With} {Nonlinear} {Compliance} and {Residual} {Indentation}},
	volume = {81},
	doi = {10.1115/1.4024403},
	number = {2},
	journal = {ASME Journal of Applied Mechanics},
	author = {Xiong, Xiaogang and Kikuuwe, Ryo and Yamamoto, Motoji},
	year = {2014},
	pages = {021003},
}

@article{harris_array_2020,
	title = {Array {Programming} {With} {NumPy}},
	volume = {585},
	doi = {10.1038/s41586-020-2649-2},
	number = {7825},
	journal = {Nature},
	author = {Harris, Charles R. and Millman, K. Jarrod and van der Walt, St{\'e}fan J. and Gommers, Ralf and Virtanen, Pauli and Cournapeau, David and Wieser, Eric and Taylor, Julian and Berg, Sebastian and Smith, Nathaniel J. and Kern, Robert and Picus, Matti and Hoyer, Stephan and van Kerkwijk, Marten H. and Brett, Matthew and Haldane, Allan and Fern{\'a}ndez del R{\'i}o, Jaime and Wiebe, Mark and Peterson, Pearu and G{\'e}rard-Marchant, Pierre and Sheppard, Kevin and Reddy, Tyler and Weckesser, Warren and Abbasi, Hameer and Gohlke, Christoph and Oliphant, Travis E.},
	year = {2020},
	pages = {357--362},
}

@article{nikravesh_determination_2023,
	title = {Determination of {Effective} {Mass} for {Continuous} {Contact} {Models} in {Multibody} {Dynamics}},
	volume = {58},
	doi = {10.1007/s11044-022-09859-4},
	number = {3-4},
	journal = {Multibody System Dynamics},
	author = {Nikravesh, Parviz E. and Poursina, Mohammad},
	year = {2023},
	pages = {253--273},
}

@article{zhang_continuous_2024,
	title = {A {Continuous} {Contact}-{Force} {Model} for the {Impact} {Analysis} of {Viscoelastic} {Materials} {With} {Elastic} {Aftereffect}},
	volume = {61},
	doi = {10.1007/s11044-023-09954-0},
	number = {3},
	journal = {Multibody System Dynamics},
	author = {Zhang, Yifei and Ding, Yong and Xu, Guoshan},
	year = {2024},
	pages = {435--451},
}

@inproceedings{terzopoulos_elastically_1987,
	address = {New York, NY},
	series = {{SIGGRAPH} '87},
	title = {Elastically {Deformable} {Models}},
	doi = {10.1145/37401.37427},
	booktitle = {Proceedings of the 14th {Annual} {Conference} on {Computer} {Graphics} and {Interactive} {Techniques}},
	publisher = {Association for Computing Machinery},
	author = {Terzopoulos, Demetri and Platt, John and Barr, Alan and Fleischer, Kurt},
	year = {1987},
	pages = {205--214},
}

@book{morris_topology_2020,
	title = {Topology {Without} {Tears}},
	publisher = {Sidney A. Morris},
	author = {Morris, Sidney A.},
	year = {2020},
}

@article{virtanen_scipy_2020,
	title = {Scipy 1.0: {Fundamental} {Algorithms} for {Scientific} {Computing} in {Python}},
	volume = {17},
	doi = {10.1038/s41592-019-0686-2},
	number = {3},
	journal = {Nature Methods},
	author = {Virtanen, Pauli and Gommers, Ralf and Oliphant, Travis E. and Haberland, Matt and Reddy, Tyler and Cournapeau, David and Burovski, Evgeni and Peterson, Pearu and Weckesser, Warren and Bright, Jonathan and van der Walt, St{\'e}fan J. and Brett, Matthew and Wilson, Joshua and Millman, K. Jarrod and Mayorov, Nikolay and Nelson, Andrew R. J. and Jones, Eric and Kern, Robert and Larson, Eric and Carey, C J and Polat, Ilhan and Feng, Yu and Moore, Eric W. and VanderPlas, Jake and Laxalde, Denis and Perktold, Josef and Cimrman, Robert and Henriksen, Ian and Quintero, E. A. and Harris, Charles R. and Archibald, Anne M. and Ribeiro, Ant{\^o}nio H. and Pedregosa, Fabian and van Mulbregt, Paul and {SciPy 1.0 Contributors}},
	year = {2020},
	pages = {261--272},
}

@article{zhang_continuous_2022,
	title = {A {Continuous} {Contact} {Force} {Model} for the {Impact} {Analysis} of {Hard} and {Soft} {Materials}},
	volume = {177},
	doi = {10.1016/j.mechmachtheory.2022.105065},
	journal = {Mechanism and Machine Theory},
	author = {Zhang, Jie and Fang, Mingyang and Zhao, Lei and Zhao, Quanliang and Liang, Xu and He, Guangping},
	year = {2022},
	pages = {105065},
}

@article{wen_method_1976,
	title = {Method for {Random} {Vibration} of {Hysteretic} {Systems}},
	volume = {102},
	doi = {10.1061/JMCEA3.0002106},
	number = {2},
	journal = {Journal of the Engineering Mechanics Division},
	author = {Wen, Yi-Kwei},
	year = {1976},
	pages = {249--263},
}

@misc{wolfram_research_inc_mathematica_2023,
	address = {Champaign, IL},
	title = {Mathematica, {Version} 13.3},
	url = {https://www.wolfram.com/mathematica},
	author = {{Wolfram Research Inc}},
	year = {2023},
}

@book{khalil_nonlinear_2002,
	address = {Upper Saddle River, NJ},
	title = {Nonlinear {Systems}},
	isbn = {978-0-13-228024-2},
	publisher = {Prentice Hall},
	author = {Khalil, Hassan K.},
	year = {2002},
}

@article{khulief_continuous_1987,
	title = {A {Continuous} {Force} {Model} for the {Impact} {Analysis} of {Flexible} {Multibody} {Systems}},
	volume = {22},
	doi = {10.1016/0094-114X(87)90004-8},
	number = {3},
	journal = {Mechanism and Machine Theory},
	author = {Khulief, Y. A and Shabana, A. A},
	year = {1987},
	pages = {213--224},
}

@article{crook_study_1952,
	title = {A {Study} of {Some} {Impacts} {Between} {Metal} {Bodies} by a {Piezo}-{Electric} {Method}},
	volume = {212},
	doi = {10.1098/rspa.1952.0088},
	number = {1110},
	journal = {Proceedings of the Royal Society of London. Series A. Mathematical and Physical Sciences},
	publisher = {Royal Society},
	author = {Crook, A. W.},
	year = {1952},
	pages = {377--390},
}

@article{lasalle_extensions_1960,
	title = {Some {Extensions} of {Liapunov}'s {Second} {Method}},
	volume = {7},
	doi = {10.1109/TCT.1960.1086720},
	number = {4},
	journal = {IRE Transactions on Circuit Theory},
	author = {LaSalle, J. P.},
	year = {1960},
	pages = {520--527},
}

@inproceedings{moore_collision_1988,
	address = {New York, NY},
	title = {Collision {Detection} and {Response} for {Computer} {Animation}},
	doi = {10.1145/54852.378528},
	booktitle = {{SIGGRAPH} '88: {Proceedings} of the 15th {Annual} {Conference} on {Computer} {Graphics} and {Interactive} {Techniques}},
	publisher = {Association for Computing Machinery},
	author = {Moore, Matthew and Wilhelms, Jane},
	year = {1988},
	pages = {289--298},
}

@article{sadd_contact_1993,
	title = {Contact {Law} {Effects} on {Wave} {Propagation} in {Particulate} {Materials} {Using} {Distinct} {Element} {Modeling}},
	volume = {28},
	doi = {10.1016/0020-7462(93)90061-O},
	number = {2},
	journal = {International Journal of Non-Linear Mechanics},
	author = {Sadd, Martin H. and Tai, QiMing and Shukla, Arun},
	year = {1993},
	pages = {251--265},
}

@article{walton_viscosity_1986,
	title = {Viscosity, {Granular}-{Temperature}, and {Stress} {Calculations} for {Shearing} {Assemblies} of {Inelastic}, {Frictional} {Disks}},
	volume = {30},
	doi = {10.1122/1.549893},
	number = {5},
	journal = {Journal of Rheology},
	author = {Walton, Otis R. and Braun, Robert L.},
	year = {1986},
	pages = {949--980},
}

@article{vu-quoc_elastoplastic_1999,
	title = {An {Elastoplastic} {Contact} {Force}{\textendash}{Displacement} {Model} in the {Normal} {Direction}: {Displacement}{\textendash}{Driven} {Version}},
	volume = {455},
	doi = {10.1098/rspa.1999.0488},
	number = {1991},
	journal = {Proceedings of the Royal Society of London. Series A: Mathematical, Physical and Engineering Sciences},
	author = {Vu-Quoc, Loc and Zhang, Xiang},
	year = {1999},
	pages = {4013--4044},
}

@article{luding_cohesive_2008,
	title = {Cohesive, {Frictional} {Powders}: {Contact} {Models} for {Tension}},
	volume = {10},
	doi = {10.1007/s10035-008-0099-x},
	number = {4},
	journal = {Granular Matter},
	author = {Luding, Stefan},
	year = {2008},
	pages = {235--246},
}

@article{schwager_coefficient_2008,
	title = {Coefficient of {Restitution} for {Viscoelastic} {Spheres}: {The} {Effect} of {Delayed} {Recovery}},
	volume = {78},
	doi = {10.1103/PhysRevE.78.051304},
	number = {5},
	journal = {Physical Review E},
	author = {Schwager, Thomas and P{\"o}schel, Thorsten},
	year = {2008},
	pages = {051304},
}

@article{machado_compliant_2012,
	title = {Compliant {Contact} {Force} {Models} in {Multibody} {Dynamics}: {Evolution} of the {Hertz} {Contact} {Theory}},
	volume = {53},
	doi = {10.1016/j.mechmachtheory.2012.02.010},
	journal = {Mechanism and Machine Theory},
	author = {Machado, Margarida and Moreira, Pedro and Flores, Paulo and Lankarani, Hamid M.},
	year = {2012},
	pages = {99--121},
}

@article{wang_advanced_2017,
	title = {Advanced {Impact} {Force} {Model} for {Low}-{Speed} {Pounding} {Between} {Viscoelastic} {Materials} and {Steel}},
	volume = {143},
	doi = {10.1061/(ASCE)EM.1943-7889.0001372},
	number = {12},
	journal = {Journal of Engineering Mechanics},
	author = {Wang, Wenxi and Hua, Xugang and Wang, Xiuyong and Chen, Zhengqing and Song, Gangbing},
	year = {2017},
	pages = {04017139},
}

@inproceedings{platt_constraint_1988,
	address = {New York, NY},
	title = {Constraint {Methods} for {Flexible} {Models}},
	doi = {10.1145/54852.378524},
	booktitle = {{SIGGRAPH} '88: {Proceedings} of the 15th {Annual} {Conference} on {Computer} {Graphics} and {Interactive} {Techniques}},
	publisher = {Association for Computing Machinery},
	author = {Platt, John C. and Barr, Alan H.},
	editor = {Stone, Maureen C.},
	year = {1988},
	pages = {279--288},
}

@book{schaeffer_ordinary_2016,
	address = {New York, NY},
	series = {Texts in {Applied} {Mathematics}},
	title = {Ordinary {Differential} {Equations}: {Basics} and {Beyond}},
	volume = {65},
	isbn = {978-1-4939-6389-8},
	publisher = {Springer Science+Business Media},
	author = {Schaeffer, David G. and Cain, John W.},
	editor = {Bell, J. and Keller, J. and Kohn, R. and Newton, P. and Peskin, C. and Pego, R. and Ryzhik, L. and Singer, A. and Stevens, A. and Stuart, A. and Witelski, T. and Wright, S.},
	year = {2016},
}

@book{sastry_nonlinear_1999,
	address = {New York, NY},
	series = {Interdisciplinary {Applied} {Mathematics}},
	title = {Nonlinear {Systems}: {Analysis}, {Stability}, and {Control}},
	volume = {10},
	isbn = {978-1-4419-3132-0},
	publisher = {Springer Science+Business Media},
	author = {Sastry, Shankar},
	editor = {Marsden, J. E. and Sirovich, L. and Wiggins, S.},
	year = {1999},
}

@book{takeuti_introduction_1982,
	address = {New York, NY},
	edition = {2},
	series = {Graduate {Texts} in {Mathematics}},
	title = {Introduction to {Axiomatic} {Set} {Theory}},
	volume = {1},
	isbn = {978-1-4613-8168-6},
	publisher = {Springer-Verlag New York},
	author = {Takeuti, Gaisi and Zaring, Wilson M.},
	editor = {Halmos, P. R. and Gehring, F. W. and Moore, C. C.},
	year = {1982},
}

@book{yoshizawa_stability_1966,
	address = {Tokyo, Japan},
	series = {Publications of the {Mathematical} {Society} of {Japan}},
	title = {Stability {Theory} by {Liapunov}'s {Second} {Method}},
	number = {9},
	publisher = {The Mathematical Society of Japan},
	author = {Yoshizawa, T.},
	year = {1966},
}

@misc{ieee_ieee_2019,
	address = {New York, NY},
	title = {{IEEE} {Standard} for {Floating}-{Point} {Arithmetic}. {IEEE} {Std} {754TM}-2019 ({Revision} of {IEEE} {Std} 754-2008)},
	doi = {10.1109/IEEESTD.2019.8766229},
	publisher = {IEEE},
	author = {{IEEE}},
	year = {2019},
}

@book{newton_mathematical_1729,
	address = {London, The Kingdom of Great Britain},
	title = {The {Mathematical} {Principles} of {Natural} {Philosophy}},
	publisher = {Benjamin Motte},
	author = {Newton, Isaac},
	translator = {Motte, Andrew},
	year = {1729},
	note = {[Translated by A. Motte, originally published as Philosophi{\ae} Naturalis Principia Mathematica (Joseph Streater, London, The Kingdom of England, 1687)]},
}

@book{kelley_general_2017,
	address = {Mineola, NY},
	title = {General {Topology}},
	publisher = {Dover Publications Inc},
	author = {Kelley, John L.},
	year = {2017},
	note = {[Originally published as General Topology, D. Van Nostrand Company, New York, NY, 1955]},
}

@book{panagiotopoulos_inequality_1985,
	address = {Boston, MA},
	title = {Inequality {Problems} in {Mechanics} and {Applications}: {Convex} and {Nonconvex} {Energy} {Functions}},
	isbn = {978-3-7643-3094-5},
	publisher = {Birkh{\"a}user Boston},
	author = {Panagiotopoulos, P. D.},
	year = {1985},
}

@book{sontag_mathematical_1998,
	address = {New York, NY},
	edition = {2},
	series = {Texts in {Applied} {Mathematics}},
	title = {Mathematical {Control} {Theory}: {Deterministic} {Finite} {Dimensional} {Systems}},
	volume = {6},
	isbn = {978-0-387-98489-6},
	publisher = {Springer Science+Business Media},
	author = {Sontag, Eduardo D.},
	editor = {Marsden, J. E. and Sirovich, L. and Golubitsky, M. and J{\"a}ger, W.},
	year = {1998},
}

@book{bloch_real_2010,
	address = {New York, NY},
	title = {The {Real} {Numbers} and {Real} {Analysis}},
	isbn = {978-0-387-72176-7},
	publisher = {Springer Science+Business Media},
	author = {Bloch, Ethan D.},
	year = {2010},
}

@book{haddad_nonlinear_2011,
	address = {Princeton, NJ},
	title = {Nonlinear {Dynamical} {Systems} and {Control}: {A} {Lyapunov}-{Based} {Approach}},
	isbn = {978-1-4008-4104-2},
	publisher = {Princeton University Press},
	author = {Haddad, Wassim M. and Chellaboina, VijaySekhar},
	year = {2011},
}

@book{stewart_dynamics_2011,
	address = {Philadelphia, PA},
	title = {Dynamics {With} {Inequalities}: {Impacts} and {Hard} {Constraints}},
	isbn = {978-1-61197-070-8},
	publisher = {Society for Industrial and Applied Mathematics},
	author = {Stewart, David E.},
	year = {2011},
}

@book{goebel_hybrid_2012,
	address = {Princeton, NJ},
	title = {Hybrid {Dynamical} {Systems}: {Modeling}, {Stability}, and {Robustness}},
	isbn = {978-1-4008-4263-6},
	publisher = {Princeton University Press},
	author = {Goebel, Rafal and Sanfelice, Ricardo G. and Teel, Andrew R.},
	year = {2012},
}

@book{sanfelice_hybrid_2021,
	address = {Princeton, NJ},
	title = {Hybrid {Feedback} {Control}},
	isbn = {978-0-691-18022-9},
	publisher = {Princeton University Press},
	author = {Sanfelice, Ricardo G.},
	year = {2021},
}

@inproceedings{bouc_forced_1968,
	address = {Prague, Czechoslovakia, September 5-9, 1967},
	title = {Forced {Vibration} of {Mechanical} {Systems} {With} {Hysteresis}},
	booktitle = {Proceedings of the {Fourth} {Conference} on {Nonlinear} {Oscillations}},
	publisher = {Academia Publishing House of the Czechoslovak Academy of Sciences},
	author = {Bouc, R.},
	editor = {Gonda, J{\'a}n and Jel{\'i}nek, Franti{\v s}ek},
	year = {1968},
	pages = {315},
}

@article{andrews_theory_1930,
	title = {Theory of {Collision} of {Spheres} of {Soft} {Metals}},
	volume = {9},
	issn = {1941-5982},
	doi = {10.1080/14786443008565033},
	number = {58},
	journal = {The London, Edinburgh, and Dublin Philosophical Magazine and Journal of Science},
	author = {Andrews, J. P.},
	year = {1930},
	pages = {593--610},
}

@phdthesis{barnhart_transverse_1955,
	address = {Berkeley, CA},
	type = {Ph.{D}. thesis},
	title = {Transverse {Impact} on {Elastically} {Supported} {Beams}},
	school = {University of California, Berkeley},
	author = {Barnhart, Kenneth Edwin},
	year = {1955},
}

@article{cundall_discrete_1979,
	title = {A {Discrete} {Numerical} {Model} for {Granular} {Assemblies}},
	volume = {29},
	doi = {10.1680/geot.1979.29.1.47},
	number = {1},
	journal = {G{\'e}otechnique},
	author = {Cundall, P. A. and Strack, O. D. L.},
	year = {1979},
	pages = {47--65},
}

@article{tabor_simple_1948,
	title = {A {Simple} {Theory} of {Static} and {Dynamic} {Hardness}},
	volume = {192},
	doi = {10.1098/rspa.1948.0008},
	number = {1029},
	journal = {Proceedings of the Royal Society of London. Series A. Mathematical and Physical Sciences},
	author = {Tabor, David},
	year = {1948},
	pages = {247--274},
}

@phdthesis{chatterjee_rigid_1997,
	address = {Ithaca, NY},
	type = {Ph. {D}. thesis},
	title = {Rigid {Body} {Collisions}: {Some} {General} {Considerations}, {New} {Collision} {Laws}, and {Some} {Experimental} {Data}},
	school = {Cornell University},
	author = {Chatterjee, Anindya},
	year = {1997},
}

@book{brogliato_nonsmooth_2016,
	address = {Cham, The Swiss Confederation},
	edition = {3},
	series = {Communications and {Control} {Engineering}},
	title = {Nonsmooth {Mechanics}: {Models}, {Dynamics} and {Control}},
	isbn = {978-3-319-28664-8},
	publisher = {Springer International Publishing AG Switzerland},
	author = {Brogliato, Bernard},
	year = {2016},
}

@book{chicone_ordinary_1999,
	address = {New York, NY},
	series = {Texts in {Applied} {Mathematics}},
	title = {Ordinary {Differential} {Equations} {With} {Applications}},
	volume = {34},
	isbn = {978-0-387-98535-2},
	publisher = {Springer-Verlag New York},
	author = {Chicone, Carmen},
	editor = {Marsden, J. E. and Sirovich, L. and Golubitsky, M. and J{\"a}ger, W.},
	year = {1999},
}

@phdthesis{movahedi-lankarani_canonical_1988,
	address = {Tucson, AZ},
	type = {Ph. {D}. thesis},
	title = {Canonical {Equations} of {Motion} and {Estimation} of {Parameters} in the {Analysis} of {Impact} {Problems}},
	school = {The University of Arizona},
	author = {Movahedi-Lankarani, Hamid},
	year = {1988},
}

@article{poursina_new_2025,
	title = {A {New} {Model} {With} {Uniform} {Damping} {Force} for {Frictionless} {Impacts} {With} {Non}-{Permanent} {Deformation} at the {Time} of {Separation}},
	volume = {63},
	doi = {10.1007/s11044-024-10003-7},
	number = {1-2},
	journal = {Multibody System Dynamics},
	author = {Poursina, Mohammad and Nikravesh, Parviz E.},
	year = {2025},
	pages = {235--253},
}

@article{bhattacharjee_interplay_2017,
	title = {Interplay {Between} {Dissipation} and {Modal} {Truncation} in {Ball}-{Beam} {Impact}},
	volume = {12},
	doi = {10.1115/1.4036830},
	number = {6},
	journal = {ASME Journal of Computational and Nonlinear Dynamics},
	author = {Bhattacharjee, Arindam and Chatterjee, Anindya},
	year = {2017},
	pages = {061018},
}

@phdthesis{cundall_measurement_1971,
	address = {London, The United Kingdom of Great Britain and Northern Ireland},
	title = {The {Measurement} and {Analysis} of {Accelerations} in {Rock} {Slopes}},
	school = {Imperial College London},
	author = {Cundall, Peter Alan},
	year = {1971},
}

@article{milehins_incremental_2026,
	title = {Incremental {Collision} {Laws} {Based} on the {Bouc}{\textendash}{Wen} {Model}: {Improved} {Collision} {Models} and {Further} {Results}},
	volume = {21},
	doi = {10.1115/1.4071112},
	number = {6},
	journal = {ASME Journal of Computational and Nonlinear Dynamics},
	author = {Milehins, Mihails and Marghitu, Dan B.},
	year = {2026},
	pages = {061003},
}

@book{euler_theoria_1765,
	address = {Rostock and Greifswald},
	title = {Theoria {Motus} {Corporum} {Solidorum} {Seu} {Rigidorum}},
	url = {https://www.17centurymaths.com/contents/mechanica3.html},
	publisher = {A. F. R{\"o}se},
	author = {Euler, Leonhard},
	year = {1765},
	note = {[The original was downloaded from the Euler Archive - All Works, 289; the citation is based on the translation by Ian Bruce produced in 2017]},
}

@book{flores_contact_2016,
	address = {Cham, The Swiss Confederation},
	series = {Solid {Mechanics} and {Its} {Applications}},
	title = {Contact {Force} {Models} for {Multibody} {Dynamics}},
	volume = {226},
	isbn = {978-3-319-30897-5},
	doi = {10.1007/978-3-319-30897-5},
	publisher = {Springer International Publishing AG Switzerland},
	author = {Flores, Paulo and Lankarani, Hamid M.},
	year = {2016},
}

@article{walton_stress_1986,
	title = {Stress {Calculations} for {Assemblies} of {Inelastic} {Spheres} in {Uniform} {Shear}},
	volume = {63},
	doi = {10.1007/BF01182541},
	number = {1-4},
	journal = {Acta Mechanica},
	author = {Walton, O. R. and Braun, R. L.},
	year = {1986},
	pages = {73--86},
}

@article{witsenhausen_class_1966,
	title = {A {Class} of {Hybrid}-{State} {Continuous}-{Time} {Dynamic} {Systems}},
	volume = {11},
	doi = {10.1109/TAC.1966.1098336},
	number = {2},
	journal = {IEEE Transactions on Automatic Control},
	author = {Witsenhausen, H.},
	year = {1966},
	pages = {161--167},
}

@article{tavernini_differential_1987,
	title = {Differential {Automata} and {Their} {Discrete} {Simulators}},
	volume = {11},
	doi = {10.1016/0362-546X(87)90034-4},
	number = {6},
	journal = {Nonlinear Analysis: Theory, Methods \& Applications},
	author = {Tavernini, Lucio},
	year = {1987},
	pages = {665--683},
}

@article{goebel_hybrid_2004,
	title = {Hybrid {Systems}: {Generalized} {Solutions} and {Robust} {Stability}},
	volume = {37},
	doi = {10.1016/S1474-6670(17)31194-1},
	number = {13},
	journal = {IFAC Proceedings Volumes},
	author = {Goebel, Rafal and Hespanha, Joao and Teel, Andrew R. and Cai, Chaohong and Sanfelice, Ricardo},
	year = {2004},
	pages = {1--12},
}

@inproceedings{cai_results_2005,
	title = {Results on {Input}-{To}-{State} {Stability} for {Hybrid} {Systems}},
	doi = {10.1109/CDC.2005.1583021},
	booktitle = {Proceedings of the 44th {IEEE} {Conference} on {Decision} and {Control}},
	author = {Cai, Chaohong and Teel, A.R.},
	year = {2005},
	pages = {5403--5408},
}

@inproceedings{sanfelice_results_2010,
	title = {Results on {Input}-{To}-{Output} and {Input}-{Output}-{To}-{State} {Stability} for {Hybrid} {Systems} and {Their} {Interconnections}},
	doi = {10.1109/CDC.2010.5718164},
	booktitle = {49th {IEEE} {Conference} on {Decision} and {Control} ({CDC})},
	author = {Sanfelice, Ricardo G.},
	year = {2010},
	pages = {2396--2401},
}

@article{hunter_matplotlib_2007,
	title = {Matplotlib: {A} {2D} {Graphics} {Environment}},
	volume = {9},
	doi = {10.1109/MCSE.2007.55},
	number = {3},
	journal = {Computing in Science \& Engineering},
	author = {Hunter, John D.},
	year = {2007},
	pages = {90--95},
}

@article{tomas_fundamentals_2004,
	title = {Fundamentals of {Cohesive} {Powder} {Consolidation} and {Flow}},
	volume = {6},
	doi = {10.1007/s10035-004-0167-9},
	number = {2},
	journal = {Granular Matter},
	author = {Tomas, J{\"u}rgen},
	year = {2004},
	pages = {75--86},
}

@article{smith_pile-driving_1960,
	title = {Pile-{Driving} {Analysis} by the {Wave} {Equation}},
	volume = {86},
	doi = {10.1061/JSFEAQ.0000281},
	number = {4},
	journal = {Journal of the Soil Mechanics and Foundations Division},
	publisher = {American Society of Civil Engineers},
	author = {Smith, E. A. L.},
	year = {1960},
	pages = {35--61},
}

@article{thakur_micromechanical_2014,
	title = {Micromechanical {Analysis} of {Cohesive} {Granular} {Materials} {Using} the {Discrete} {Element} {Method} {With} an {Adhesive} {Elasto}-{Plastic} {Contact} {Model}},
	volume = {16},
	issn = {1434-7636},
	doi = {10.1007/s10035-014-0506-4},
	number = {3},
	journal = {Granular Matter},
	author = {Thakur, Subhash C. and Morrissey, John P. and Sun, Jin and Chen, J. F. and Ooi, Jin Y.},
	year = {2014},
	pages = {383--400},
}

@article{rathbone_accurate_2015,
	title = {An {Accurate} {Force}{\textendash}{Displacement} {Law} for the {Modelling} of {Elastic}{\textendash}{Plastic} {Contacts} in {Discrete} {Element} {Simulations}},
	volume = {282},
	doi = {10.1016/j.powtec.2014.12.055},
	journal = {Powder Technology},
	author = {Rathbone, Daniel and Marigo, Michele and Dini, Daniele and Van Wachem, Berend},
	year = {2015},
	pages = {2--9},
}

@article{tomas_particle_2000,
	title = {Particle {Adhesion} {Fundamentals} and {Bulk} {Powder} {Consolidation}},
	volume = {18},
	doi = {10.14356/kona.2000022},
	number = {0},
	journal = {KONA Powder and Particle Journal},
	author = {Tomas, J{\"u}rgen},
	year = {2000},
	pages = {157--169},
}

@phdthesis{morrissey_discrete_2013,
	address = {Edinburgh, The United Kingdom of Great Britain and Northern Ireland},
	title = {Discrete {Element} {Modelling} of {Iron} {Ore} {Pellets} to {Include} the {Effects} of {Moisture} and {Fines}},
	school = {University of Edinburgh},
	author = {Morrissey, John Paul},
	year = {2013},
}

@article{naldi_passivity-based_2013,
	title = {Passivity-{Based} {Control} for {Hybrid} {Systems} {With} {Applications} to {Mechanical} {Systems} {Exhibiting} {Impacts}},
	volume = {49},
	doi = {10.1016/j.automatica.2013.01.018},
	number = {5},
	journal = {Automatica},
	author = {Naldi, Roberto and Sanfelice, Ricardo G.},
	year = {2013},
	pages = {1104--1116},
}

@inproceedings{holloway_effects_1978,
	address = {Richardson, TX},
	title = {The {Effects} {Of} {Residual} {Driving} {Stresses} {On} {Pile} {Performance} {Under} {Axial} {Loads}},
	isbn = {978-1-55563-555-8},
	doi = {10.4043/3306-MS},
	booktitle = {Proceedings of the 10th {Annual} {Offshore} {Technology} {Conference}},
	publisher = {OnePetro},
	author = {Holloway, D. M. and Clough, G. W. and Vesic, A. S.},
	year = {1978},
	pages = {2225--2236},
}

@article{smith_impact_1955,
	title = {Impact and {Longitudinal} {Wave} {Transmission}},
	volume = {77},
	doi = {10.1115/1.4014566},
	number = {6},
	journal = {Transactions of the American Society of Mechanical Engineers},
	author = {Smith, E. A. L.},
	year = {1955},
	pages = {963--971},
}

@incollection{smith_pile-driving_1950,
	address = {New York, NY},
	title = {Pile-{Driving} {Impact}},
	booktitle = {Proceedings: {Industrial} {Computation} {Seminar}},
	publisher = {International Business Machines Corporation},
	author = {Smith, E. A.},
	editor = {Hurd, Cuthbert C.},
	year = {1950},
	pages = {44--51},
}

@phdthesis{ning_elasto-plastic_1995,
	address = {Birmingham, The United Kingdom of Great Britain and Northern Ireland},
	title = {Elasto-{Plastic} {Impact} of {Fine} {Particles} and {Fragmentation} of {Small} {Agglomerates}},
	school = {Aston University},
	author = {Ning, Zemin},
	year = {1995},
}

@incollection{walton_numerical_1993,
	address = {Oxford, The United Kingdom of Great Britain and Northern Ireland},
	series = {Butterworth-{Heinemann} {Series} in {Chemical} {Engineering}},
	title = {Numerical {Simulation} of {Inelastic} {Frictional} {Particle}-{Particle}-{Interactions}},
	isbn = {978-0-7506-9275-5},
	booktitle = {Particulate {Two}-{Phase} {Flow}},
	publisher = {Butterworth-Heinemann},
	author = {Walton, Otis R.},
	editor = {Roco, M. C.},
	year = {1993},
}

\end{document}